\documentclass[onecolumn,10pt]{IEEEtran}

\usepackage{times}
\usepackage{amssymb} %
\usepackage{amsmath}
\usepackage{amsthm} 
\usepackage{setspace}
\usepackage{booktabs} 

\usepackage{graphicx}
\usepackage{amsmath}
\usepackage{amsthm}
\usepackage{amsfonts}
\usepackage{color}
\theoremstyle{}
\newtheorem{theorem}{Theorem}
\newtheorem{lemma}{Lemma}

\newtheorem{remark}{Remark}

\usepackage{bm}
\usepackage{booktabs}
\usepackage{multirow}
\usepackage{algorithm}
\usepackage{algpseudocode}
\usepackage{subfigure}
\usepackage{amssymb}
\usepackage{cite}
\newcommand{\tabcaption}{\def\@captype{table}\caption}
\newcommand{\tabincell}[2]{\begin{tabular}{@{}#1@{}}#2\end{tabular}} 
\title{Explicit Constructions for Rack-Aware Minimum Storage Partially Cooperative Regenerating Codes}
\date{}
\author{\IEEEauthorblockN{Hengming Zhao, Dianhua Wu,  Minquan Cheng}
	\thanks{H. Zhao is with  School of Computer Science
		and Engineering, Guangxi Normal University
		Guilin, 541004, China, and also with School of Mathematics and Statistics, Nanning Normal University, Nanning 530100, China (e-mail: hengmingzh@163.com).}
	\thanks{M. Cheng is with the Key Lab of Education Blockchain and Intelligent Technology, Ministry of Education, and also with the Guangxi Key Lab of Multi-source Information Mining $\&$ Security, Guangxi Normal University, 541004 Guilin, China  (e-mail: chengqinshi@hotmail.com).}
	\thanks{D. Wu is with the School of Mathematics and Statistics, Guangxi Normal University, Guilin 541006, China (e-mail:  dhwu@gxnu.edu.cn).}
}
\begin{document}
		\maketitle
\begin{abstract}
The rack-aware storage model improves repair efficiency by exploiting locality within racks to minimize cross-rack traffic in a distributed storage system. While	the partially cooperative repair model presents a solution for multiple node failures that reduces the need to exchange data with all other host racks (defined as racks containing failed nodes), thus enhancing system flexibility. In this paper,  we focus on rack-aware  minimum storage partially cooperative regenerating (MSPCR) codes for repairing multiple node failures. We first derive the lower bound on the repair bandwidth for rack-aware MSPCR codes using extremal combinatorics, and then explicitly construct the first class of (asymptotically) optimal repair schemes for rack-aware MSPCR codes with a sub-packetization level of $(\bar{s}+\bar{h}-\delta)\bar{s}^{\bar{n}}$, which is smaller than that of the known rack-aware minimum-storage cooperative regenerating (MSCR)  codes when $\delta \geq 2$. By utilizing the grouping technique, we explicitly construct the second class of (asymptotically) optimal repair schemes for rack-aware MSPCR codes with a sub-packetization level  of $2^{\bar{n}}$. In particular, when $\delta=1$, our second codes reduce to rack-aware MSCR codes, while achieving an $(\bar{h}+1)$-fold reduction in sub-packetization level compared to the known rack-aware MSCR codes.
\end{abstract}

\begin{IEEEkeywords}
Distributed storage, MDS array codes, multiple erasure tolerance, partially cooperative repair, rack-aware MSR codes.
\end{IEEEkeywords} 
\section{Introduction}
Distributed  storage systems (DSSs) are extensively employed by various companies, such as Facebook and Google, leveraging multiple independent and unreliable devices (referred to as nodes) for data storage. In an $(n,k,d)$ distributed storage system, the original file is distributed across $n$ nodes, guaranteeing  that the original file can be reconstructed from any $k$ out of $n$ nodes. This characteristic is known as the $(n,k)$ maximum distance separable (MDS) property, where the parameter $k$ is referred to as the  reconstructive degree. When a node fails, it can be exactly repaired using $d$  helper nodes from the remaining $n-1$ survivor nodes. The storage capacity refers to the maximum data each node can store, whereas the repair bandwidth represents the total data transmitted by helper nodes to repair a failed node. These metrics reflect storage redundancy and repair efficiency, respectively. The first optimal trade-off between storage capacity and repair bandwidth was derived by Dimakis et al., in \cite{DGWWR} by cut-set bound where the minimum-storage regenerating (MSR) code and  minimum-bandwidth regenerating (MBR) code were respectively introduced at the two extreme points of this trade-off curve. For details of the MBR codes, please see \cite{Rashmi2011, Ernvall2014, Mahdaviani2019,Zhang2020,Wang2023} and the references therein.

The MSR codes have been widely studied due to the MDS property with the optimal repair bandwidth, see \cite{GFV,LLT,LTT,LWHY,Rashmi2011,RSE,TYLH,TWB,VBV,YB1,YB2,HLSH,C-Barg,HLH,ZZ,WC} and references therein. Of course, the MSR code can be regarded as an MDS array code with the optimal repair bandwidth. However, in numerous practical scenarios, especially in large-scale storage systems, multiple failures occur more frequently than a single failure. Furthermore, many systems (e.g., \cite{Cadambe2013}) adopt a lazy repair strategy. This strategy aims to restrict the repair cost of erasure codes. Instead of promptly repairing each individual failure, a lazy repair strategy defers action until $h$ failures take place, where $h\in[1,n-k]$. Subsequently, the repair  is carried out by downloading at least $h/(d-k+h)$ of the data amount from each helper node to recover the failed nodes. For the case of $h$ failures, all the studies can be divided into two repair models, i.e., the centralized repair which was first introduced in \cite{Cadambe2013} and the cooperative repair first introduced in \cite{HXWZL}. In the centralized model, the data center executes repairs for all failed nodes. Under the cooperative repair model, when repairing $h$ node failures, $h$ new nodes participate in a two-phase recovery process: first downloading data from $d$ helper nodes, then performing mutual data exchange to finalize node regeneration. This cooperative repair model offers distinct advantages in efficiency, reliability, scalability and latency, and has been extensively studied in \cite{SH,Li2014,LCT,YB,Ye,ZZW,Zhang2025} and references therein. 

The cooperative repair model faces two significant challenges. Firstly, transient node inaccessibility caused by network congestion may terminate repair procedures prematurely \cite{LCT}. Secondly, the inherent geographical distribution of new nodes across multiple clusters creates operational bottlenecks, as inter-cluster data transmission demands substantially greater bandwidth allocation than intra-cluster exchanges \cite{LSL}. To overcome these limitations, the authors in \cite{LSL} introduced the partially cooperative repair model, i.e., each new node is allowed to exchange data with some part of the other new nodes. In this case, some transiently inaccessible new nodes can be be ignored in the cooperative repair phase and the quantity of connections among different clusters could be decreased. They also derived a lower bound on the repair bandwidth for the partially cooperative repair model. The authors in \cite{LSL,LO,ZLC} proposed two constructions of MDS array codes for partially cooperative repair model. Unfortunately, neither of these two constructions achieves the lower bound on the repair bandwidth. Recently, the first explicit constructions of MSR codes for the partially cooperative repair model have been presented in \cite{LWCT}.

In a homogeneous distributed storage system, all nodes and the communication among them are regarded equally. Nevertheless, modern data centers frequently feature hierarchical topologies, which are formed by organizing nodes in racks, i.e., the rack-aware storage model. To provide enough bandwidth for the ever-larger racks, it is necessary to increasingly make use of specialized non-commodity switches and routers \cite{VFF}. In rack-aware storage systems, intra-rack communication is assumed to incur zero bandwidth cost during repair processes, then the repair bandwidth refers to the total amount of data transmitted in inter-rack communication for repairing failures.
In this case,  MDS array codes that achieve the lower bound on the repair bandwidth are referred to as rack-aware MSR codes.

When all failed nodes are in the same rack (this case can be roughly viewed as the centralized model), Zhang and Zhou \cite{Zhou2022} considered a relaxed repair model and enhanced erasure tolerance at the expense of storage efficiency, and they proposed minimum storage codes that sacrifice the MDS property. Later,
the authors in \cite{WZLT} derived a lower bound on the repair bandwidth and presented explicit constructions of rack-aware MSR codes. 
For the (fully) cooperative repair model, the authors of \cite{GDL} derived a lower bound on the repair bandwidth and proposed an explicit construction of rack-aware minimum-storage cooperative regenerating (MSCR) codes for the case where all host racks (racks with failed nodes) have an equal number of failures, which is less than some integer $u-v$.

\subsection{Research motivation and Contribution}
In this paper, we focus on the rack-aware MSR codes in the partially cooperative model where each host rack has the same number of failed nodes. When the number of failed nodes is inconsistent, we can treat this situation as the worst-case scenario where the number of failed nodes in each host rack is maximized.
In the  cooperative repair model, all remaining host racks participate in repairing failed nodes during the cooperative repair process, whereas the partially cooperative repair model utilizes only a subset of remaining host racks. This implies that for each host rack, we should carefully arrange which subset of remaining host racks should transmit specific data to it during the cooperative repair phase. Therefore,  the repair scheme proposed in \cite{GDL} can't be extended to the case of the partially cooperative repair model. As far as we know, there are few studies on the rack-aware MSR codes in the partially cooperative model. We find that by extending the construction method in \cite[Construction 2]{YB1} based on Hadamard designs and applying the grouping idea in the cooperative repair process, we can propose novel arrangements for subsets of remaining host racks to obtain two classes of rack-aware minimum storage partially cooperative regenerating (MSPCR) codes. Specifically the main results can be summarized as follows.  
\begin{itemize}
	\item We derive the lower bound on repair bandwidth for rack-aware MSPCR codes by using extremal combinatorics.
	\item According to this bound,  we obtain our first class of rack-aware MSPCR codes, i.e., the codes in Theorem \ref{th2}, by constructing a class of MDS array codes and stacking up multiple instances of the MDS array code. These codes achieve the minimum (optimal) repair bandwidth  when $b\in[1,u-v]$ and the asymptotically optimal repair bandwidth when  $b\in[u-v+1,u]$.
	\item When $\bar{d}=\bar{k}+1$, we obtain the second class of rack-aware MSPCR codes, i.e., the codes in Theorem \ref{th3}. By dividing each column of the code $\mathcal{C}$ into appropriate groups in the repair process, the subpacketization level is smaller by a factor of $\bar{h}-\delta+2$ than that in \cite{GDL} when $\delta=1$. These codes achieve the optimal repair bandwidth for $b\in[1,u-v]$ and the asymptotically optimal repair bandwidth for $b\in[u-v+1,u]$. 
\end{itemize}

A comparison between the main existing rack-aware MSCR codes and the new codes introduced in this paper is presented in Table \ref{tab1}.
\begin{table*}[http!]
	\renewcommand{\arraystretch}{1.2}
	\setlength\tabcolsep{3pt} 
	\centering
	\caption{The Existing Codes And New Codes In This Paper Where  $\bar{s}=\bar{d}-\bar{k}+1$, $k=\bar{k}u+v$, $b=h/\bar{h}$, And $k$, $\bar{n}$, $h$,  $\bar{h}$,  $\bar{d}$, $u$, $\delta$ Represent Code Dimension, The Number Of Racks, The Number Of  Failed Nodes, The Number Of Host Racks, The Number Of Helper Racks, The Size Of A Rack, And Cooperative Parameter, Respectively.
		\label{tab1}
	} 
	\begin{tabular}{|c|c|c|c|c|c|c|c|c|}	
		\hline
		&  sub-packetization  $l$ & $\delta$ &optimal repair bandwidth &  $\bar{d}$ & $b$ \\ 
		\hline
		\cite{GDL} &	 $(\bar{d}-\bar{k}+\bar{h})\bar{s}^{\bar{n}}$ & $1$ &  optimal& $\bar{k}+1\leq\bar{d}\leq \bar{n}-\bar{h}$  & $\leq u-v$\\ 		
		\hline
		Theorem \ref{th2} &	$(\bar{d}-\bar{k}+\bar{h}-\delta+1)\bar{s}^{\bar{n}}$  & $1\leq\delta\leq\bar{h}-2$&  optimal  & $\bar{k}+1\leq\bar{d}\leq \bar{n}-\bar{h}$  & $\leq u-v$\\
		\hline
		Theorem \ref{th2} &	$(\bar{d}-\bar{k}+\bar{h}-\delta+1)\bar{s}^{\bar{n}}$  & $1\leq\delta\leq\bar{h}-2$&  asymptotically  & $\bar{k}+2\leq\bar{d}\leq \bar{n}-\bar{h}$  & $u-v+1 \leq b\leq u$\\
		\hline
		Theorem \ref{th3} &	$2^{\bar{n}}$  & $1\leq\delta\leq\bar{h}-2$&  optimal  & $\bar{d}=\bar{k}+1$  & $\leq u-v$\\
		\hline
		Theorem \ref{th3} &	$2^{\bar{n}}$  & $1\leq\delta\leq\bar{h}-2$&  asymptotically  & $\bar{d}=\bar{k}+2$  & $u-v+1 \leq b\leq u$\\
		\hline
	\end{tabular}
\end{table*}

\subsection{Organization and notations} 

The rest of this paper is arranged as follows. In Section II, we describe the system model for multiple failures.  The main results are presented in Section III. The construction for the first class of rack-aware MSPCR codes, and the proof of Theorem \ref{th2} are shown in Section IV. Section V contains the construction of the
second class of rack-aware MSPCR codes and the proof of Theorem \ref{th3}. Finally, the concluding remarks are provided in Section VI.

In this paper, we use boldface capital letters, boldface lowercase letters, and calligraphic letters to denote matrices, vectors, and sets, respectively.
In addition, the following notations are used unless otherwise stated.
\begin{itemize}
	\item  $\mathbb{F}$ is a finite field.
	\item For an integer $n>0$, $[n]$ denotes the set $\{0,1,\ldots,n-1\}$.
	\item For any integers $a$ and $b$ with $a<b$, $[a,b)$ denotes the  set $\{a,a+1,\ldots,b-1\}$ , and
	$[a,b]=\left\{ a,a+1,\ldots,b\right\}$. 
	\item $|\cdot|$ denotes the cardinality of a set, $\oplus$ denotes the addition operation modulo $\bar{s}$, and $\top$ denotes the transpose operator.
	\item  Given an integer $a\in[s^n]$ where $n\in \mathbb{N}^+$ and $s\geq2$ is an integer, with $a=\sum_{i=0}^{n-1}a_is^i$ for integers $a_i\in[s]$, we refer to 
	${\bf a}=(a_{n-1},\ldots,a_1,a_0)$ as the $s$-ary representation of $a$, and let ${\bf a}(i,j)$ be the vector where the $i$-th entry of ${\bf a}$ is $j$ and the other entries are the same as that of ${\bf a}$, i.e., ${\bf a}(i,j)=(a_{n-1},\ldots,a_{i+1}, j,a_{i-1},\ldots,a_0)$.  We do not specifically distinguish $a$ and ${\mathbf{a}}$ in $[s^n]$.
	
	
\end{itemize}

\section{The System Model} 
\label{sect-system}
In this section, we will introduce rack-aware MSR codes under the partially cooperative repair model where the objective is to derive a new lower bound on the repair bandwidth, and explicitly construct two classes of rack-aware MSPCR codes.

In an $(n,k)$ distributed storage system, there is an original data ${\bf w}$ which is independently and identically
uniformly distributed in $\mathbb{F}^{kl}$. We divide the original data into $k$ blocks, denoted by ${\bf w}=({\bf w}_0^\top, {\bf w}_1^\top, \ldots, {\bf w}_{k-1}^\top)^\top$, where ${\bf w}_i=(w_{i,0},w_{i,1},\ldots,w_{i,l-1})^{\top}$, $i\in[k]$ is a column vector of size $l$. Then, for each $j\in[n]$, we use a generating submatrix $\mathbf{G}_{j}=(\mathbf{G}_{j,i})_{i\in[k]}$ where $\mathbf{G}_{j, i}$ is an $l\times l$ matrix over $\mathbb{F}$, encode ${\bf w}$ to obtain the coded data ${\bf c}_j$, i.e., $\mathbf{G}_{j}{\bf w}={\bf c}_j$, and store it in node $j$. Let the code  $\mathcal{C}=({\bf c}_0, {\bf c}_1, \ldots, {\bf c}_{n-1})$ where ${\bf c}_j=(c_{j,0},c_{j,1},\ldots,c_{j,l-1})^{\top}$, $j\in[n]$. We call $\mathcal{C}$ an $(n,k,l)$ maximum distance separable (MDS) array code if an original file can be recovered from any $k$ out of $n$ nodes. Here, $k$ can be seen as the code dimension, and the length of nodes $l$ is called  the sub-packetization level of the code in this paper. Given two positive integers $\bar{n}$ and $u$ such that $n=\bar{n}u$, we divide the nodes $\{0,1,\ldots,n-1\}$ into consecutive $\bar{n}$ groups (referred to as racks) each of which has $u$ nodes. Accordingly, the coordinates of code $\mathcal{C}$ are partitioned into segments of length $u$. We make no distinction between nodes and code coordinates, and refer to both of them as nodes. Each rack is equipped with a relayer node that has access to the contents of the other nodes in the same rack. The scenario where $u=1$ corresponds to the homogeneous model. In this paper, we focus on studying the case of $1<u\leq k$. Otherwise, a single node failure could be trivially repaired by the  surviving nodes within the same rack. Moreover, we also assume that $u\leq n-k$  to guarantee that the code has the ability to be repaired if full-rack fails\cite{WZLT}. Let $k=\bar{k}u+v$ for integers $\bar{k}\in[1,\bar{n})$ and $v\in [u]$ . Then we have $r=n-k=\bar{n}u-(\bar{k}u+v)=(\bar{n}-\bar{k})u-v=\bar{r}u-v$ where $\bar{r}=\bar{n}-\bar{k}$. Let $\bar{h}$ be the number of host racks, then $\bar{h}\in[1,\bar{r}]$.  The authors in \cite{GDL} studied the case that there are $\bar{h}$ host racks each of which has $b$ failed nodes and $b\in[1,u-v]$, where the number of failed nodes $h=b\bar{h}$. In practice, some host racks usually contain multiple failed nodes. If the number of failed nodes is inconsistent, we can treat this situation as the worst-case scenario where the number of failed nodes in each host rack is maximized. So,  we consider the case that each host rack contains $b={h}/{\bar{h}}$ failed nodes with $b\in[1, u]$. Let ${\cal F}=\{i_0,i_1,\ldots,i_{\bar{h}-1}\}$ be the index set of the $\bar{h}$ host racks. The authors in \cite{LO,LSL} introduced the partially cooperative repair model which can be regarded as an extension of the cooperative repair model. In this paper, we also use the partially cooperative repair model mentioned in \cite{LO,LSL}, that is, the rack-aware partially cooperative repair model consists of the following two phases: 
\begin{itemize}
	\item {\bf Download phase:} Let ${\cal R}=\{j_0,j_1,\ldots,j_{\bar{d}-1}\}\subseteq[\bar{n}]\setminus{\cal F}$ be the index set of the $\bar{d}\in[\bar{k},\bar{n}-\bar{h}]$ helper racks. Each rack $i_p\in {\cal F}$ $(0\leq p\leq \bar{h}-1)$ downloads $\beta_1(i_p,j)$
	coded symbols from each helper rack $j$ where $j\in {\cal R}$.
	\item {\bf Cooperative phase:} Each host  rack $i_p\in{\cal F}$  downloads $\beta_2(i_p,i')$ coded symbols from each host rack $i'\in{\cal S}_p\subseteq\mathcal{F}\setminus\{i_p\}$. We assume that $|\mathcal{S}_{0}|=|\mathcal{S}_{1}|=\dots=|\mathcal{S}_{{\bar{h}-1}}|=\bar{h}-\delta$ where $\delta\in[1,\bar{h}-1]$ is called cooperative parameter, i.e., each host rack is connected to the same number $\bar{h}-\delta$ of helper host racks. 	So, each host rack uses the symbols generated from the previous two phases together with the symbols from the $u-b$ surviving nodes in this host rack to recover the $b$ failed nodes.
\end{itemize}

Clearly, the number of helper nodes 
$d=\bar{d}u+u-b$. When $\delta=1$ or $\bar{h}=2$, the rack-aware partially cooperative repair model reduces into the rack-aware (fully) cooperative repair model that first introduced in \cite{GDL}. Moreover, if $u = 1$, this model is the cooperative repair model explored in \cite{HXWZL} and \cite{SH}.

Figure 1 uses a $(12,6)$ MDS array code to illustrate the two repair models: the rack-aware fully cooperative model and the rack-aware partially cooperative model. The former requires communication with all other $\bar{h}-1$ host racks, whereas the latter only needs to communicate with $\bar{h}-\delta$ host racks, where $\delta\in[1,\bar{h}-1]$.
\begin{figure*}[!t]
	\centering
	\includegraphics[width=2.2in]{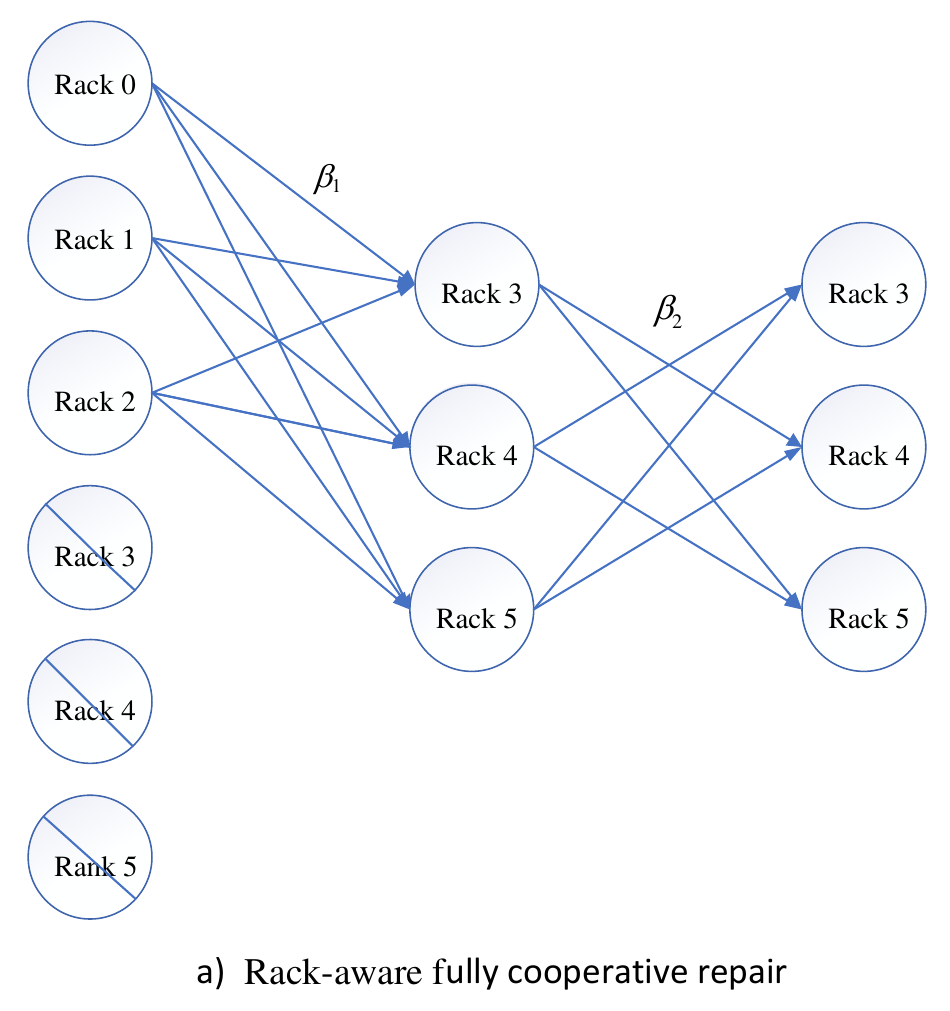}%
	\label{fig4}
	\hfil
	\includegraphics[width=2.2in]{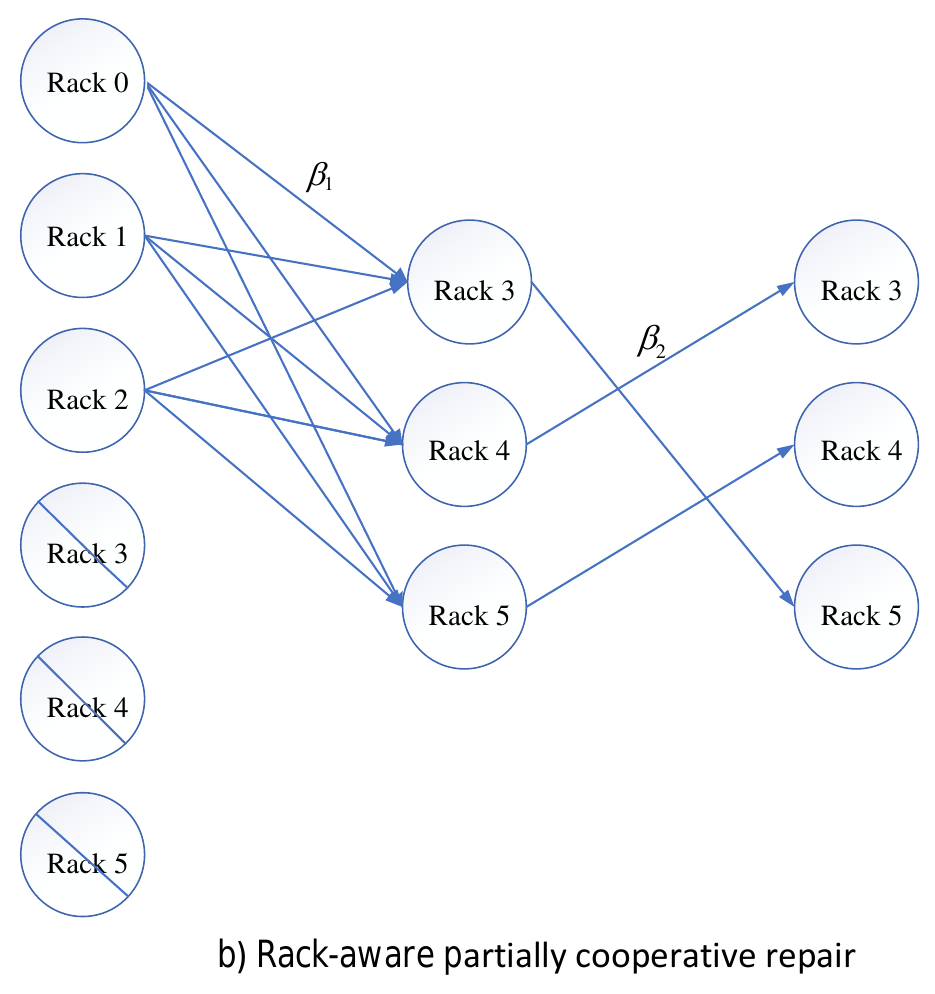}%
	\label{fig4}
	\caption{Rack-aware fully and partially cooperative repair models in a $(12,2)$ MDS array code. The $12$ nodes $(\mathbf{c}_0,\mathbf{c}_1,\ldots,\mathbf{c}_{11})$ are partitioned into $6$ consecutive racks ${0,1,...,5}$, with each rack consisting of exactly $2$ nodes.	 Given that the set of failed nodes is ${\mathbf{c}_6,\mathbf{c}_8,\mathbf{c}_{10}}$, the corresponding set of host racks is ${3,4,5}$. a) Rack-aware fully cooperative repair. Host rack $3$ downloads $\beta_1$ symbols from each helper rack $j$, $j\in[3]$ in the download phase, and then downloads $\beta_2$ symbols from each host rack $i$, $i\in\{4,5\}$. Following these phases, host rack $3$ leverages the symbols generated from the previous two phases, combined with those from its own surviving node $\mathbf{c}_7$, to recover the failed node $\mathbf{c}_6$.
		Host racks $4$ and $5$, the remaining two, are also similar. b) Rack-aware partially cooperative repair. The download phase is the same as a). During the cooperative phase, host rack $3$ downloads $\beta_2$ symbols only from host rack $4$, excluding host rack $5$. Host racks $4$ and $5$ are also similar.
	}
	\label{fig3}
\end{figure*}


The repair bandwidth of the rack-aware partially cooperative repair model is defined as
\begin{equation}\label{e1}
	\gamma(n,k,l,d,h,\delta)=\sum_{p\in[\bar{h}]}\sum_{j\in{\cal R}}\beta_1(i_p,j)+\sum_{p\in[\bar{h}]}\sum_{i'\in\mathcal{S}_p}\beta_2(i_p,i').
\end{equation}Denote the minimum repair bandwidth by $\gamma^{*}(n,k,l,d,h,\delta)$. An $(n,k,l,d,h,\delta)$ MDS array code is called rack-aware MSPCR codes if we can design its rack-aware partially cooperative repair model achieving the minimum repair bandwidth. In this paper,  we will construct $(n,k,l)$ rack-aware MSPCR codes.  

For convenience, the parameters of the array codes used are summarized in Table \ref{tab2}.

\begin{table}[http!]
\renewcommand{\arraystretch}{1.2}
\setlength\tabcolsep{3pt} 
	\centering
	\caption{ Code Parameter
		\label{tab2}
	} 
	\begin{tabular}{|c|c|c|c|c|c|c|c|c|}	
		\hline
		Notation & Meaning & Notation &	 Meaning \\ 
		\hline
		$n$ & code length & $k$ &	 code dimension \\ 		
		\hline
		$d$ &	the number of helper nodes  & $h$ &	the number of helper nodes\\
		\hline
		$r$  &	$n-k$ & $l$ &  the sub-packetization level of the code\\
		\hline
		$\bar{n}$ & the number of racks in the code & $u$ & the size of a rack\\	
		\hline
		$v$ & an integer within $[u]$ & $\bar{k}$ & $(k-v)/u$\\		
		\hline	
		$\bar{r}$ & $\bar{n}-\bar{k}$ &	$\bar{d}$ & the number of helper racks\\		
		\hline	
		$\bar{s}$ & $\bar{d}-\bar{k}+1$ &	$\bar{h}$ &  the number of host racks\\		
		\hline	
		$b$ & the number of failed nodes in a host rack& $\delta$ & cooperative parameter \\		
		\hline	
		$\gamma$ & repair bandwidth & &	\\
		\hline	
	\end{tabular}
\end{table}

\section{Main results}
\label{Sect-main-results}

In this section, we first derive a new lower bound on the repair bandwidth. Based on this bound and employing  the stacking method and the partitioning method respectively, we propose two classes of MDS array codes, each of which contains a rack-aware MSPCR code for $b\in[1,u-v]$ and an asymptotically rack-aware MSPCR code for $b\in[u-v+1,u]$.

Now let us see the new lower bound first. Different from the cut-set bounds proposed in \cite{LO} and \cite{GDL}, we obtain the following lower bound by extremal combinatorics. The proof is presented in Appendix \ref{app-th1}.
\begin{theorem}\label{th1}
	Given  an $(n,k,l,d,h,\delta)$ MDS array code, the following inequality holds for the repair bandwidth  defined in \eqref{e1}
	\begin{equation}\label{e2}
		\gamma(n,k,l,d,h,\delta)\geq\frac{h(\bar{d}+\bar{h}-\delta)l}{\bar{d}-\bar{k}+\bar{h}-\delta+1}.
	\end{equation}
	Moreover, if $\beta_1(i,j)=\beta_1$  and $\beta_2(i,i')=\beta_2$ for any $j\in\mathcal{R}$,
	$i,i'\in\mathcal{F}$ with $i\not=i'$, then the equality holds if and only if $\beta_1=\beta_2=
	\frac{bl}{\bar{d}-\bar{k}+\bar{h}-\delta+1}$.
\end{theorem} 

According to the bound in Theorem \ref{th1}, by stacking multiple instances of an MDS array code, we obtain the first class of rack-aware MSPCR codes as follows, whose proof is shown in Section \ref{section-th2}. 

\begin{theorem}\label{th2}
	For any positive integers  $\bar{n}$, $\bar{k}$, $\bar{d}$, $\bar{h}$ and $\delta$ where  $\bar{d}\in[\bar{k},\bar{n}-\bar{h}]$ and  $\delta\in[1,\bar{h})$, there exist
	\begin{itemize}
		\item  an $(n,k,l=(\bar{s}+\bar{h}-\delta)\bar{s}^{\bar{n}},d,h,\delta)$ rack-aware MSPCR code over $\mathbb{F}$ where $\bar{s}=\bar{d}-\bar{k}+1$, $b\in[1,u-v]$, and $|\mathbb{F}|\geq n\bar{s}+1$  with $u|(|\mathbb{F}|-1)$;
		\item an asymptotically $(n,k,l=(\bar{s}+\bar{h}-\delta)\bar{s}^{\bar{n}},d,h,\delta)$ rack-aware MSPCR code over $\mathbb{F}$ where $\bar{s}=\bar{d}-\bar{k}+1$,  $b\in[u-v+1,u]$, and $|\mathbb{F}|\geq n\bar{s}+1$  with $u|(|\mathbb{F}|-1)$.
	\end{itemize}
\end{theorem}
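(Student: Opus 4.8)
\emph{Overall approach.} The plan is to carry out the programme announced in the introduction: construct a single ``base'' $(\bar n u,\bar k u+v,\bar s^{\bar n})$ MDS array code $\mathcal C$ in the style of \cite[Construction 2]{YB1}, and then take the code of Theorem~\ref{th2} to be the stack of $\bar s+\bar h-\delta$ mutually independent copies of $\mathcal C$, so that its sub-packetization is $l=(\bar s+\bar h-\delta)\bar s^{\bar n}$ with $\bar s=\bar d-\bar k+1$. For $\mathcal C$ I would use the Hadamard-design recipe: the $n=\bar n u$ generating submatrices are assembled from diagonal matrices acting on the $\bar s$-ary digits of the coordinate index in $[\bar s^{\bar n}]$, one digit per rack, together with a pool of distinct nonzero field scalars attached to the individual nodes within each rack; the hypothesis $|\mathbb F|\ge n\bar s+1$ is exactly what makes the required $n\bar s$ scalars available, and a Vandermonde/Cauchy determinant computation as in \cite{YB1} then yields the $(n,k,\bar s^{\bar n})$ MDS property. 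Stacking preserves MDS verbatim, so the stacked code is $(n,k,l)$ MDS.

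\emph{The repair scheme.} The substance of the proof is to produce a rack-aware partially cooperative repair procedure whose parameters meet the bound of Theorem~\ref{th1}, namely $\beta_1(i,j)=\beta_2(i,i')=b\,\bar s^{\bar n}$ for every admissible pair. Fix the host racks $\mathcal F=\{i_0,\dots,i_{\bar h-1}\}$ and assign to $i_p$ the cyclic cooperating set $\mathcal F_{i_p}=\{i_{p+1},\dots,i_{p+\bar h-\delta}\}$ with indices read modulo $\bar h$. I would split the $\bar s+\bar h-\delta$ stacked instances into $\bar s$ ``download instances'' and $\bar h-\delta$ ``cooperative instances.'' In the download phase each helper rack $j$ contracts the contents of its $u$ nodes into $b\,\bar s^{\bar n}$ symbols for host rack $i$, by restricting to the coordinate-slices whose relevant $\bar s$-ary digit is prescribed; these are arranged so that, after combining with the $u-b$ surviving nodes of rack $i$, the download instances are recovered exactly as in the rack-aware MSR repair of \cite{WZLT,GDL}, while at the same time the ``partial syndromes'' of the cooperative instances are produced. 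In the cooperative phase rack $i_p$ sends to each $i_q\in\mathcal F_{i_p}$ precisely the residual symbols of the cooperative instance indexed by the position of $i_q$ in $\mathcal F_{i_p}$; since the cooperation pattern is a cyclic shift, every host rack receives exactly one residual block per cooperative instance, and these together with its download data close a square linear system over $\mathbb F$. Recovery then reduces to showing this system is nonsingular, i.e.\ that an explicit block matrix built from the diagonal structure of $\mathcal C$ and the cyclic offsets is invertible.

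\emph{Bandwidth and the two regimes.} Summing \eqref{e1} gives $\sum_{i\in\mathcal F}\sum_{j\in\mathcal R}\beta_1+\sum_{i\in\mathcal F}\sum_{i'\in\mathcal F_i}\beta_2=\bar h\bar d\,b\bar s^{\bar n}+\bar h(\bar h-\delta)\,b\bar s^{\bar n}=\frac{h(\bar d+\bar h-\delta)\,l}{\bar s+\bar h-\delta}$, matching Theorem~\ref{th1} exactly; this is valid when $b\in[1,u-v]$, where the $u-b\ge v$ internal survivors leave room to take every slice at the minimal size. When $b\in[u-v+1,u]$ the host rack has fewer than $v$ survivors, a few download slices must be enlarged by a term of lower order in $\bar s^{\bar n}$, and the total bandwidth becomes $(1+o(1))$ times the bound, giving asymptotic optimality.

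\emph{Main obstacle.} The delicate point is the joint design inside the repair scheme: the download slices and the cyclic cooperative routing must be chosen simultaneously so that (i) no helper rack is over-queried, (ii) the $\bar h-\delta$ cooperative links---rather than the full $\bar h-1$ of the cooperative model---still deliver every missing coordinate exactly once, and (iii) the assembled coefficient matrix is invertible. Proving (iii), which means untangling the interaction between the Hadamard/diagonal structure across the $\bar s+\bar h-\delta$ instances and the cyclic cooperation offsets and then doing the index bookkeeping carefully enough to exhibit a triangular (or otherwise manifestly nonsingular) form, is where essentially all of the work lies; by comparison the MDS verification and the bandwidth count are routine adaptations of \cite{YB1} and \cite{GDL}.
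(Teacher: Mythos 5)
Your high-level architecture (a YB1-style base code of sub-packetization $\bar s^{\bar n}$, a stack of $\bar s+\bar h-\delta$ instances, two-phase repair with $b\bar s^{\bar n}$ symbols per link, and the bandwidth count against Theorem~\ref{th1}) coincides with the paper's, but the part you explicitly defer --- what exactly is transmitted and why every host rack can finish --- is the entire proof, and the architecture you sketch for it would not close. You fix a partition of the instances into $\bar s$ ``download instances'' and $\bar h-\delta$ ``cooperative instances,'' the same for every host rack. The paper instead has host rack $i_p$ work with the \emph{rotating} diagonal sums $H_{i_p,j}(a,m)=\sum_{x\in[\bar s]}\sum_{g}\theta^{gm}c^{(p+x)}_{ju+g,\,a(i_p,a_{i_p}\oplus x)}$, in which the instance superscript advances together with the shift of the $i_p$-th digit of $a$; solving the $\bar r\times\bar r$ Vandermonde systems in the download phase (Lemma~\ref{lm1}) then yields, as interference byproducts, the analogous mixtures $H_{i_p,i}(a,m)$ of the \emph{other} host racks' columns over the instance window $[p,p+\bar s)$, and it is these byproducts that rack $i_q$ later forwards to rack $i_p$. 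Because the window depends on $q$, each forwarded block contains exactly one instance that is new to $i_p$, and the induction of Lemma~\ref{lm2}, run over the carefully chosen (non-cyclic) sets $S_p$ of \eqref{e11}, peels off all $\bar h-\delta$ missing instances. With your fixed split this mechanism disappears: the byproducts rack $i_q$ acquires about rack $i_p$ in the download phase would involve only the common download window, which $i_p$ already recovers by itself, so the cooperative links could not deliver the cooperative instances; and repairing the $\bar s$ download instances ``exactly as in rack-aware MSR repair'' already costs $\bar s\cdot b\bar s^{\bar n-1}=b\bar s^{\bar n}$ symbols per helper, exhausting the per-helper budget $bl/(\bar s+\bar h-\delta)$ and leaving no room to ship independent syndromes of the cooperative instances. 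The missing idea is precisely the host-rack-dependent rotation of the instance window (and the accompanying telescoping argument); once that is in place there is no joint ``square linear system'' to invert beyond the standard Vandermonde matrices, and the final intra-rack step is a $b\times b$ Vandermonde in $\theta^{g}$ using the $u-b$ local survivors.

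Your treatment of the regime $b\in[u-v+1,u]$ also misidentifies the mechanism and the quantitative conclusion. The obstruction is not that ``few survivors leave no room'': since $r=\bar r u-v$, for $m\in[u-v,b)$ the usable parity rows $t=wu+m$ only give $w\in[\bar r-1]$, so the download-phase systems are one equation short; the paper compensates by taking $\bar d+1$ helper racks for those values of $m$ (Lemma~\ref{lm1}(2)). The resulting extra bandwidth is $\bar h(b-u+v)l/(\bar s+\bar h-\delta)$, which is of the \emph{same} order in $l$ (equivalently in $\bar s^{\bar n}$) as the main term, smaller only by a factor of roughly $1/\bar d$; the claim is therefore that the ratio to the bound is below $1+1/(\bar d+\bar h-\delta)$, i.e.\ optimality is asymptotic as the number of helper racks $\bar d$ grows, not a $(1+o(1))$ statement with a lower-order correction in $\bar s^{\bar n}$ as your sketch asserts.
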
 
When $\bar{d}=\bar{k}+1$, we can use the similar method proposed in \cite{LCT} to further reduce the sub-packetization level of the rack-aware MSPCR codes in Theorem \ref{th2}. Specifically, by dividing the symbols at each node of the code into appropriate groups in the repair process, we obtain the second class of rack-aware MSPCR codes, whose proof can be found in Section \ref{section-th3}.
\begin{theorem} \label{th3}
	For any positive integers  $\bar{n}$, $\bar{k}$, $\bar{d}$, $\bar{h}$ and $\delta$  where $\bar{d}=\bar{k}+1$, $(\bar{h}-\delta+1)|\bar{h}$ and  $\delta\in[1,\bar{h})$ , there exist
	\begin{itemize}
		\item  an $(n,k,l=2^{\bar{n}},d,h,\delta)$ rack-aware MSPCR code over $\mathbb{F}$ where $b\in[1,u-v]$ and $|\mathbb{F}|\geq 2n+1$  with $u|(|\mathbb{F}|-1)$;
		\item  an asymptotically $(n,k,l=2^{\bar{n}},d,h,\delta)$ rack-aware MSPCR code over $\mathbb{F}$ where  $b\in[u-v+1,u]$ and $|\mathbb{F}|\geq 2n+1$  with $u|(|\mathbb{F}|-1)$.
	\end{itemize}
	
\end{theorem}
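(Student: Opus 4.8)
The plan is to obtain Theorem~\ref{th3} as a specialization of the construction philosophy behind Theorem~\ref{th2}, but with the key additional ingredient that when $\bar{d}=\bar{k}+1$ we have $\bar{s}=\bar{d}-\bar{k}+1=2$, so the base MDS array code in Theorem~\ref{th2} already has sub-packetization $(\bar{s}+\bar{h}-\delta)\bar{s}^{\bar{n}}=(\bar{h}-\delta+2)\cdot 2^{\bar{n}}$. The goal is to shave off the factor $\bar{h}-\delta+2$ and reach $l=2^{\bar{n}}$. First I would take the binary ($\bar{s}=2$) version of the Hadamard-design-based code from the proof of Theorem~\ref{th2}: each node is indexed by a pair $(a,t)$ with $a\in[2^{\bar{n}}]$ (an $\bar{n}$-bit address, one bit per rack) and $t\in[\bar{h}-\delta+2]$ an instance label, and within a host rack the $b$ failed nodes are repaired by downloading, from each helper rack, the restriction of its contents to the address-coordinates whose rack-bit agrees with the host rack's index. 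The essential observation, in the spirit of \cite[]{LCT}, is that the $\bar{h}-\delta+2$ instances need not be stored as genuinely independent copies: in the repair of a given set of host racks one can \emph{partition} the single instance $a\in[2^{\bar{n}}]$ into $\bar{h}-\delta+2$ groups according to the values of a few coordinate bits (the bits indexed by the host racks, plus the helper-phase bits), and let each group play the role of one instance. This is exactly the ``dividing each column of the code into appropriate groups in the repair process'' mentioned before the theorem statement, and it is what lets $2^{\bar{n}}$ coordinates simulate $(\bar{h}-\delta+2)\,2^{\bar{n}}$ coordinates of the Theorem~\ref{th2} code.

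Concretely, the steps I would carry out are: (1) write down the explicit generator submatrices $\mathbf{G}_{j,i}$ for an $(n,k,l=2^{\bar{n}})$ MDS array code, using diagonal matrices whose diagonal entries are $\lambda^{a_{\rho(j)}}$-type Hadamard evaluations indexed by rack $\rho(j)$ of node $j$, with distinct field elements $\lambda_{j}$ (here we need $|\mathbb{F}|\ge 2n+1$ so that $n$ nodes each get two distinct ``colors'' $\{\lambda_{2j},\lambda_{2j+1}\}$, or equivalently $2n$ distinct scalars plus the zero/one normalization); (2) verify the MDS property, i.e.\ that any $k$ of the $n$ node-blocks determine $\mathbf{w}$ — this reduces, as in \cite[Construction 2]{YB1}, to nonsingularity of certain block-Vandermonde/Cauchy-like matrices built from the $\lambda$'s, which holds generically once $|\mathbb{F}|$ is large enough; (3) describe the repair scheme for $\bar{h}$ host racks each with $b$ failures, specifying the download functions $\beta_1(i,j)$ and the cooperative functions $\beta_2(i,i')$, where the host rack $i$ (with rack-bit position $p_i\in[\bar{n}]$) downloads from helper rack $j$ the symbols whose address has bit $p_i$ equal to $i$'s designated value, and in the cooperative phase exchanges with its $\bar{h}-\delta$ partner host racks along the corresponding coordinate bits; (4) count the resulting bandwidth and check it equals $\tfrac{h(\bar{d}+\bar{h}-\delta)l}{\bar{d}-\bar{k}+\bar{h}-\delta+1}$ for $b\in[1,u-v]$, matching the bound of Theorem~\ref{th1} (for $b\in[u-v+1,u]$ one gets the same expression up to a lower-order additive term, hence ``asymptotically optimal''); and (5) check the intra-rack surviving $u-b$ nodes supply the remaining degrees of freedom so that the $b\times l$ failed symbols in each host rack are uniquely solvable — this is where the parameter constraint $b\le u-v$ (resp.\ $b\le u$) enters, exactly as in \cite{GDL,WZLT}.

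The main obstacle, I expect, is step (3)–(5): making the grouping/partition argument rigorous. One must exhibit, for \emph{every} admissible choice of host-rack set $\mathcal{F}$ and helper-rack set $\mathcal{R}$, a consistent assignment of the $2^{\bar{n}}$ address-coordinates into $\bar{h}-\delta+2$ groups such that (a) within each group the downloaded and exchanged symbols, together with the intra-rack survivors, form a full-rank linear system for the failed symbols indexed by that group, and (b) the groups jointly cover all $b\cdot 2^{\bar{n}}$ failed symbols of the host rack with no overlap. The delicate point is that the ``instance label'' in the Theorem~\ref{th2} code was a free coordinate, whereas here it is carved out of the same $\bar{n}$ address bits that also encode rack membership; so one has to check that the bits used for grouping are disjoint from — or interact correctly with — the host-rack bit $p_i$ and the $\bar{s}$-ary (here binary) helper bits, and that this can be done simultaneously for all $\bar{h}$ host racks. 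I would handle this by first treating the fully cooperative case $\delta=1$ (where the comparison with the $(\bar{h}+1)$-fold reduction over known rack-aware MSCR codes from \cite{GDL} is cleanest), then extending to general $\delta\in[1,\bar{h})$ by letting each host rack ignore the $\delta-1$ ``absent'' partners and absorbing the corresponding coordinates into its own download group. The field-size bookkeeping ($|\mathbb{F}|\ge 2n+1$) and the MDS verification in steps (1)–(2) I expect to be routine, essentially inherited from \cite{YB1} with $\bar{s}=2$.
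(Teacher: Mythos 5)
There is a genuine gap. Your plan correctly isolates where the difficulty lies (making $2^{\bar{n}}$ address coordinates simulate the $\bar{h}-\delta+2$ instances of the Theorem~\ref{th2} code), but it never supplies the mechanism that resolves it, and the workaround you sketch would not work. The paper's resolution has two specific ingredients you are missing. First, the host racks $\mathcal{F}$ are partitioned into disjoint cooperation groups $S_p$ of size exactly $\bar{h}-\delta+1$ (so one needs $(\bar{h}-\delta+1)\mid\bar{h}$, and implicitly $\bar{h}-\delta+2=2^{\bar{m}}$), and each host rack exchanges data \emph{only} within its own group, not with an arbitrary choice of $\bar{h}-\delta$ partners. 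Second, the coordinate grouping is not ``by the values of a few coordinate bits'': it is by the cosets $\mathcal{V}_0,\mathcal{V}_1,\ldots,\mathcal{V}_{\bar{h}-\delta+1}$ of a $(2^{\bar{m}}-1,\,2^{\bar{m}}-1-\bar{m})$ binary Hamming code applied to the address bits sitting at the rack positions of the group, where $\mathcal{V}_{i+1}$ is obtained from $\mathcal{V}_0$ by flipping the $i$-th bit. The point of using a perfect code is the counting identity behind Lemma~\ref{lm4-1}: the host rack downloads only on the $\mathcal{V}_0$-slice (size $l/(\bar{h}-\delta+2)$ per helper rack, exactly what the bound of Theorem~\ref{th1} allows), its own bit-flip gives one more coset, and each of its $\bar{h}-\delta$ group partners contributes exactly one further coset, so the $\bar{h}-\delta+2$ cosets tile all of $[2^{\bar{h}-\delta+1}]$ and the rack recovers $\sum_{g}\theta^{gm}c_{i_pu+g,a}$ for every $a\in[l]$. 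A plain sub-cube partition by coordinate values does not have this covering property, because the only translates available in the cooperative phase are single-bit flips along the partners' rack coordinates.

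Your fallback for general $\delta$ --- ``letting each host rack ignore the $\delta-1$ absent partners and absorbing the corresponding coordinates into its own download group'' --- breaks optimality: any coordinates absorbed into the download phase force each helper rack to ship strictly more than $bl/(\bar{h}-\delta+2)$ symbols per host rack, so the total exceeds $\frac{h(\bar{d}+\bar{h}-\delta)l}{\bar{d}-\bar{k}+\bar{h}-\delta+1}$ and the code is no longer MSPCR; the group-partition-plus-Hamming-coset scheme is precisely how the paper keeps the download at the bound while still covering all coordinates. The remaining items in your outline (the parity-check description of the base code with $\lambda_{i,j}=\xi^{2i+j}$ and $|\mathbb{F}|\geq 2n+1$, the MDS property via Vandermonde arguments as in \cite{YB1}, the intra-rack recovery of the $b$ failed nodes by inverting a $b\times b$ Vandermonde matrix in $\theta$, and the separate treatment of $b\in[u-v+1,u]$ using $\bar{d}+1$ helper racks for the last $b-u+v$ values of $m$, giving only asymptotic optimality) do match the paper's proof and are indeed routine; the missing perfect-code partition is the heart of Theorem~\ref{th3} and is not recoverable from the rest of your argument as written.
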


\begin{remark}
	\begin{itemize}
		\item The sub-packetization level in Theorem \ref{th3} is reduced by a factor of $\bar{h}-\delta+2$ relative to Theorem \ref{th2} when $\bar{d}=\bar{k}+1$.
		
		\item Theorem \ref{th3} is applicable only when $\bar{d}=\bar{k}+1$. So it is meaningful to explore alternative methods for other values of $\bar{d}$ while maintaining a small sub-packetization level.
	\end{itemize}	
\end{remark}

\section{The proof of Theorem \ref{th2}}
\label{section-th2}
Let $\bar{s}=\bar{d}-\bar{k}+1$, $\mathbb{F}$ be a finite field of size
$|\mathbb{F}|\geq n\bar{s}+1$ with $u|(|\mathbb{F}|-1)$, $\xi$ be a primitive element of $\mathbb{F}$,  $\theta$ be an
element of $\mathbb{F}$ with multiplicative order $u$,
$\lambda_{i,j}=\xi^{i\bar{s}+j}$ for $i\in[\bar{n}]$,
$j\in[\bar{s}]$ be $\bar{n}\bar{s}$ distinct elements in $\mathbb{F}$.
By assumption of the rack-aware storage model, we know that the number of helper nodes is $d=\bar{d}u+u-b$. To guarantee $d\geq k$, we need $\bar{k}\leq\bar{d}\leq \bar{n}-\bar{h}$ when $1\leq b\leq u-v$, and $\bar{k}+1\leq\bar{d}\leq \bar{n}-\bar{h}$ when $u-v<b\leq u$. Through modifying the parity check matrix of the code mentioned in \cite{YB1} and then stacking multiple instances of an MDS array code, we construct a class of rack-aware MSPCR  codes. Consider a code $\mathcal{C}=({\bf c}_0,{\bf c}_1,\ldots,{\bf
	c}_{n-1})$ defined by the following parity check equations over $\mathbb{F}$:
\begin{equation}\label{e6}
	\sum\limits_{i=0}^{\bar{n}-1}\sum\limits_{g=0}^{u-1}\mathbf{A}_{i,g}^t{\bf
		c}_{iu+g}=0,
\end{equation}
\noindent where $t\in[r]$, ${\bf c}_{iu+g}=(c_{iu+g,0},c_{iu+g,1},\ldots,c_{iu+g,\bar{s}^{\bar{n}}-1})^\top$ is a column vector of length $\bar{s}^{\bar{n}}$ over $\mathbb{F}$, and
\begin{align}\label{e7}
	&\mathbf{A}_{i,g}=\theta^g\mathbf{A}_i, \\ \nonumber &\mathbf{A}_i=\mathbf{I}_{\bar{s}^{\bar{n}-i-1}}\otimes
	{\rm blkdiag}(\lambda_{i,0}\mathbf{I}_{\bar{s}^i},\ldots,\lambda_{i,\bar{s}-1}\mathbf{I}_{\bar{s}^i}), 
\end{align}

\noindent where $\otimes$ is the Kroncker product. It should be noted that each $\mathbf{A}_{i,g}$ is a diagonal matrix, and the diagonal entry located in the $a$-th row is $\theta^g\lambda_{i,a_i}$. Based on this observation, we are able to rewrite the parity check equations in (\ref{e6}) in the following way.

\begin{equation}\label{e8}
	\sum\limits_{i=0}^{\bar{n}-1}\sum\limits_{g=0}^{u-1}\theta^{gt}\lambda_{i,a_i}^tc_{iu+g,a}=0,
\end{equation}
\noindent where $t\in[r]$,
$a\in[\bar{s}^{\bar{n}}]$, and $a_i$ is the $i$-th term of the 
$\bar{s}$-ary expansion of $a=(a_{\bar{n}-1},\ldots,a_1,a_0)$.

Note that $\lambda_{i,a_i}=\xi^{i\bar{s}+a_i}$ and $\theta^u=1$, it is easy to see that 
$\theta^{g}\lambda_{i,a_i},g\in[u],i\in[\bar{n}],a_i\in[\bar{s}]$ are distinct.
Similar to the discussion in Theorem IV.1 in \cite{YB1}, we know
that the code $\mathcal{C}$ given in (\ref{e6}) has the MDS property.
Finally, our rack-aware MSPCR code $\widetilde{\mathcal{C}}=(\tilde{\mathbf{c}}_0,\tilde{\mathbf{c}}_1,\ldots,\tilde{\mathbf{c}}_{n-1})$ is defined by the following parity check equations over $\mathbb{F}$:
\begin{equation}\label{e9}
	\sum\limits_{i=0}^{\bar{n}-1}\sum\limits_{g=0}^{u-1}(\widetilde{\mathbf{A}}_{i,g})^t\tilde{\mathbf{c}}_{iu+g}=0,
\end{equation}
\noindent where
$\widetilde{\mathbf{A}}_{i,g}=\mathbf{I}_{\bar{s}+\bar{h}-\delta}\otimes \mathbf{A}_{i,g}$ and
$\tilde{\mathbf{c}}_{iu+g}=(({\bf c}_{iu+g}^{(0)})^\top,\ldots,({\bf c}_{iu+g}^{(\bar{s}+\bar{h}-\delta-1)})^\top)^\top$
of length $l=(\bar{s}+\bar{h}-\delta)\bar{s}^{\bar{n}}$. In other
words, we generate $\bar{s}+\bar{h}-\delta$ instances of the MDS array code $\mathcal{C}$ defined in (\ref{e6}). The code $\widetilde{\mathcal{C}}$
defined in (\ref{e9}) has the MDS property since the code $\mathcal{C}$ is an MDS array
code and $\widetilde{\mathbf{A}}_{i,g}=\mathbf{I}_{\bar{s}+\bar{h}-\delta}\otimes
\mathbf{A}_{i,g}$. Similarly, we can rewrite the parity check equations in
(\ref{e9}) as
\begin{align}\label{e10}
	\sum\limits_{i=0}^{\bar{n}-1}\sum\limits_{g=0}^{u-1}\theta^{gt}\lambda_{i,a_i}^tc_{iu+g,a}^{(y)}=0
\end{align}
\noindent for all $t\in[r]$,
$a\in[\bar{s}^{\bar{n}}]$, and
$y\in[\bar{s}+\bar{h}-\delta]$.

In this section,  for any two racks $i_p\in{\cal F}$ and $j\in[\bar{n}]\setminus\{i_p\}$,  we define
\begin{align*}
	H_{i_p,j}(a,m)=\sum\limits_{g=0}^{u-1}\theta^{gm}c_{ju+g,a}^{(p)}+\sum\limits_{g=0}^{u-1}\theta^{gm}c_{ju+g,a(i_p,a_{i_p}\oplus1)}^{(p+1)}+
	\ldots+\sum\limits_{g=0}^{u-1}\theta^{gm}c_{ju+g,a(i_p,a_{i_p}\oplus(\bar{s}-1))}^{(p+\bar{s}-1)},
\end{align*}
where $a\in[\bar{s}^{\bar{n}}]$, $m\in[b]$ and the superscript operation is addition modulo
$\bar{s}+\bar{h}-\delta$.

In general, the partially cooperative
repair model for code $\widetilde{\mathcal{C}}$ generated by the
parity check equations (\ref{e10}) with $h$ failed nodes can be given in the
following two phases.

$\bullet$ Download phase: When $b\in[1,u-v]$, each host rack $i_p\in {\cal F}$ downloads $H_{i_p,j}(a,m), a\in[\bar{s}^{\bar{n}}], m\in[b]$ from all the helper racks $j\in{\cal R}$. When $b\in[u-v+1,u]$,  each host rack $i_p\in {\cal F}$ downloads $H_{i_p,j}(a,m),a\in[\bar{s}^{\bar{n}}], m\in[u-v]$ from all the helper racks $j\in{\cal R}$, and additionally downloads $H_{i_p,j}(a,m),a\in[\bar{s}^{\bar{n}}], m\in[u-v,b)$ from all the helper racks $j\in{\cal R}\cup\{j'\}$, where $j'\in[\bar{n}]\setminus(\mathcal{R}\cup\mathcal{F})$.

$\bullet$ Cooperative phase: Each host rack $i_q\in \mathcal{S}_p$
transfers $H_{i_q,i_p}(a,m)$, $a\in[\bar{s}^{\bar{n}}]$, $m\in[b]$ to host rack $i_p$,
where

\begin{align}\label{e11}
\mathcal{S}_p=\left\{\begin{array}{cc}
		\{i_q:q\in[p+1,p+\bar{h}-\delta]\},  {\rm if} \  0\leq p<\delta,\\		
		\{i_q:q\in[\delta-1,p-1]\cup[p+1,\bar{h}-1]\},  {\rm if} \ \delta\leq	p<\bar{h}-\delta,\\
		\{i_q:q\in[p-\bar{h}+\delta,p-1]\},  {\rm if} \ \text{ max}\{\delta,\bar{h}-\delta\}\leq p\leq \bar{h}-1.
	\end{array}
	\right.
\end{align}

For $p\in[\bar{h}]$, note that $|\mathcal{S}_p|=\bar{h}-\delta$ and $\mathcal{S}_p\subseteq\mathcal{F}\setminus\{i_p\}$. 
The selection of the set $\mathcal{S}_p$ must satisfy the following condition: the subscripts of the elements in $\mathcal{S}_p\cup{i_p}$ form $\bar{h}-\delta+1$ consecutive integers under modulo $\bar{h}$. For example, let $\bar{s}=3$, $\bar{h}=6$, $\delta=3$, $\bar{d}=5$, then $\bar{s}+\bar{h}-\delta=6$.
Let us take $p=0$ and $\mathcal{S}_0 =\{i_5, i_1, i_2\}$. Then, $\mathcal{S}_0 \cup {i_0} = \{i_5, i_0, i_1, i_2\}$, whose element subscripts are $5, 0, 1, 2$. These subscripts form a set of $4$ consecutive integers modulo $6$. The set $\mathcal{S}_p$ $(p\in[\bar{h}])$  defined in \eqref{e11} is merely one of the possible choices.

The following results show that some linear combinations of
the symbols  can be recovered by the host racks during the download phase, and such combinations will be employed in the partially cooperative phase.

\begin{lemma} \label{lm1}
	Following the notations introduced above, we have the following results:
	
	1) for any given integer $b\in[1,u-v]$,  by downloading
	$H_{i_p,j}(a,m)$, $m\in[b]$ from all the helper racks $j\in{\cal R}$, host rack $i_p$ can recover
	
	$\bullet$	$\sum\limits_{g=0}^{u-1}\theta^{gm}c_{i_pu+g,a}^{(p)}$,
	$\sum\limits_{g=0}^{u-1}\theta^{gm}c_{i_pu+g,a}^{(p+1)}$,\ldots,
	$\sum\limits_{g=0}^{u-1}\theta^{gm}c_{i_pu+g,a}^{(p+\bar{s}-1)}$;
	
	$\bullet$	$H_{i_p,i}(a,m)$, 
	
	\noindent	where $\forall i\in{\cal F}\setminus\{i_p\}$, $a\in[\bar{s}^{\bar{n}}]$.
	
	\indent\indent	2) for any given integer $b\in[u-v+1,u]$,  by downloading
	$H_{i_p,j}(a,m)$, $m\in[u-v,b)$ from all the helper racks $j\in{\cal R}\cup\{j'\}$ with
	$j'\in[\bar{n}]\setminus(\mathcal{R}\cup\mathcal{F})$, host rack $i_p$ can recover
	
	$\bullet$ $\sum\limits_{g=0}^{u-1}\theta^{gm}c_{i_pu+g,a}^{(p)}$,
	$\sum\limits_{g=0}^{u-1}\theta^{gm}c_{i_pu+g,a}^{(p+1)}$,\ldots,
	$\sum\limits_{g=0}^{u-1}\theta^{gm}c_{i_pu+g,a}^{(p+\bar{s}-1)}$;
	
	$\bullet$ $H_{i_p,i}(a,m)$, 
	
	\noindent where $\forall i\in{\cal F}\setminus\{i_p\}$, $a\in[\bar{s}^{\bar{n}}]$.
\end{lemma}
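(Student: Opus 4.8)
The plan is to handle parts (1) and (2) by a single argument: fix a host rack $i_p\in{\cal F}$, an admissible exponent $m$, and a coordinate $a\in[\bar{s}^{\bar{n}}]$, and assemble one square Vandermonde system whose right-hand side consists precisely of the symbols $i_p$ has downloaded. First I would list, for each $z\in[\bar{s}]$ and each $t=wu+m$, the parity-check equation \eqref{e10} written for the instance $y\equiv p+z$ (instance indices read modulo $\bar{s}+\bar{h}-\delta$) at the coordinate $a(i_p,a_{i_p}\oplus z)$. Flipping the $i_p$-th $\bar{s}$-ary digit of $a$ leaves $\lambda_{i,a_i}$ unchanged for every $i\neq i_p$, and $\theta^u=1$ gives $\theta^{g(wu+m)}=\theta^{gm}$, so this equation reads
\begin{align*}
\lambda_{i_p,a_{i_p}\oplus z}^{\,wu+m}\sum_{g=0}^{u-1}\theta^{gm}c_{i_pu+g,\,a(i_p,a_{i_p}\oplus z)}^{(p+z)}
+\sum_{i\neq i_p}\lambda_{i,a_i}^{\,wu+m}\sum_{g=0}^{u-1}\theta^{gm}c_{iu+g,\,a(i_p,a_{i_p}\oplus z)}^{(p+z)}=0 .
\end{align*}

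Summing over $z\in[\bar{s}]$ and writing $\mu_{i,j}:=\lambda_{i,j}^{u}$, the double sum $\sum_{z\in[\bar{s}]}\sum_{g}\theta^{gm}c_{iu+g,a(i_p,a_{i_p}\oplus z)}^{(p+z)}$ attached to each fixed $i\neq i_p$ is, by the very definition of $H_{i_p,i}(a,m)$, equal to $H_{i_p,i}(a,m)$. Splitting $\{i\neq i_p\}$ into ${\cal F}\setminus\{i_p\}$, ${\cal R}$ and ${\cal W}:=[\bar{n}]\setminus({\cal F}\cup{\cal R})$ and moving the ${\cal R}$-block across, I obtain, for every admissible $w$,
\begin{align*}
\sum_{z\in[\bar{s}]}\mu_{i_p,a_{i_p}\oplus z}^{\,w}\Big(\lambda_{i_p,a_{i_p}\oplus z}^{\,m}\sum_{g=0}^{u-1}\theta^{gm}c_{i_pu+g,\,a(i_p,a_{i_p}\oplus z)}^{(p+z)}\Big)
+\sum_{i\in({\cal F}\setminus\{i_p\})\cup{\cal W}}\mu_{i,a_i}^{\,w}\lambda_{i,a_i}^{\,m}H_{i_p,i}(a,m)
=-\sum_{j\in{\cal R}}\mu_{j,a_j}^{\,w}\lambda_{j,a_j}^{\,m}H_{i_p,j}(a,m),
\end{align*}
whose right-hand side is known since $i_p$ downloaded every $H_{i_p,j}(a,m)$, $j\in{\cal R}$. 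After dividing each block by the known nonzero scalars $\lambda^{m}$, this is a linear system in the $\bar{s}+(\bar{h}-1)+|{\cal W}|$ unknowns $\{\sum_{g}\theta^{gm}c_{i_pu+g,a(i_p,a_{i_p}\oplus z)}^{(p+z)}\}_{z\in[\bar{s}]}$ and $\{H_{i_p,i}(a,m)\}_{i\in({\cal F}\setminus\{i_p\})\cup{\cal W}}$, with coefficient matrix the Vandermonde matrix on the node set $\{\mu_{i_p,a_{i_p}\oplus z}\}_{z\in[\bar{s}]}\cup\{\mu_{i,a_i}\}_{i\in({\cal F}\setminus\{i_p\})\cup{\cal W}}$, rows indexed by the admissible $w$.

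Next I would check squareness and invertibility. For (1), $m\in[b]\subseteq[0,u-v)$ forces $t=wu+m<r=\bar{r}u-v$ exactly for $w\in[\bar{r}]$, so there are $\bar{r}$ rows, while $|{\cal W}|=\bar{n}-\bar{h}-\bar{d}$ and $\bar{s}=\bar{d}-\bar{k}+1$ make the number of unknowns $\bar{s}+\bar{h}-1+\bar{n}-\bar{h}-\bar{d}=\bar{n}-\bar{k}=\bar{r}$ as well. For (2), $m\in[u-v,b)\subseteq[u-v,u)$ restricts $w$ to $[\bar{r}-1]$, and replacing ${\cal R}$ by ${\cal R}'$ with $|{\cal R}'|=\bar{d}+1$ gives $|{\cal W}|=\bar{n}-\bar{h}-\bar{d}-1$, so both counts equal $\bar{r}-1$. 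Invertibility then needs the nodes $\mu_{i,j}=\xi^{u(i\bar{s}+j)}$ pairwise distinct: $\theta$ has order $u$, so $u\mid|\mathbb{F}|-1$, and $|\mathbb{F}|\geq n\bar{s}+1$ yields $(|\mathbb{F}|-1)/u\geq\bar{n}\bar{s}$, hence the integers $i\bar{s}+j$ ($i\in[\bar{n}]$, $j\in[\bar{s}]$) lie in $[0,\bar{n}\bar{s})$ and are pairwise incongruent modulo $(|\mathbb{F}|-1)/u$; since $i_p\notin({\cal F}\setminus\{i_p\})\cup{\cal W}$ and $\{a_{i_p}\oplus z:z\in[\bar{s}]\}=[\bar{s}]$, all listed nodes are distinct. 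Inverting, $i_p$ recovers $\sum_{g}\theta^{gm}c_{i_pu+g,a(i_p,a_{i_p}\oplus z)}^{(p+z)}$ for all $z\in[\bar{s}]$ and $H_{i_p,i}(a,m)$ for all $i\in{\cal F}\setminus\{i_p\}$. Since $a$ was arbitrary, for any coordinate $a'$ and any $z\in[\bar{s}]$ the run at the coordinate $a$ determined by $a_{i_p}\oplus z=a'_{i_p}$ and $a_i=a'_i$ ($i\neq i_p$) delivers $\sum_{g}\theta^{gm}c_{i_pu+g,a'}^{(p+z)}$, which is the first bullet, and the $H_{i_p,i}(a,m)$ are already in the asserted form; this proves both (1) and (2).

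The real difficulty is not any single manipulation but the coordinated bookkeeping that keeps the system exactly square in both regimes: one must retain the unconnected-rack terms $\{H_{i_p,i}(a,m)\}_{i\in{\cal W}}$ as honest unknowns on the left, see that $\bar{s}+\bar{h}-1+|{\cal W}|$ collapses to $\bar{r}$ through $\bar{s}=\bar{d}-\bar{k}+1$, and notice that the admissible range of $w$ loses exactly one value once $m\geq u-v$ — which is precisely why case (2) must enlarge ${\cal R}$ to ${\cal R}'$. Once the definition of $H_{i_p,i}(a,m)$ is substituted, the coalescence of the non-$i_p$ terms into the downloaded quantities is automatic, so beyond this accounting the argument is a standard Vandermonde inversion.
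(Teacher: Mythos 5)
Your proposal is correct and follows essentially the same route as the paper's proof: you sum the parity-check equations over the $\bar{s}$ instances $y=p+z$ at the digit-shifted coordinates $a(i_p,a_{i_p}\oplus z)$, keep the unconnected-rack combinations $H_{i_p,i}(a,m)$ as unknowns alongside the $\bar{s}$ local terms, and invert the resulting square ($\bar{r}\times\bar{r}$, resp.\ $(\bar{r}-1)\times(\bar{r}-1)$) Vandermonde system in the distinct elements $\lambda_{i,j}^{u}$, with the downloaded $H_{i_p,j}(a,m)$, $j\in{\cal R}$ (resp.\ ${\cal R}'$), on the right-hand side. The only differences are cosmetic (dividing out $\lambda^{m}$ blockwise rather than factoring a diagonal matrix, and treating case (2) explicitly where the paper says ``similarly''), so the argument matches the paper's.
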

\begin{proof} Let $t=\bar{t}u+m\in[r]$ where $m\in[u]$, for each integer $m\in[0,u-v)$ we have $\bar{t}\in[\bar{r}]$, otherwise $\bar{t}\in[\bar{r}-1]$. We only prove  conclusion 1),  conclusion 2) can be proved similarly. Let ${\cal U}=[\bar{n}]\setminus({\cal F}\cup{\cal R})=\{z_0,z_1,\ldots,z_{\bar{n}-\bar{h}-\bar{d}-1}\}$ be the index set of the unconnected racks. Since $\theta^u=1$, then from (\ref{e10}) we can get
	\begin{center}
		$\lambda_{i_p,a_{i_p}\oplus
			x}^{\bar{t}u+m}\sum\limits_{g=0}^{u-1}\theta^{gm}c_{i_pu+g,a(i_p,a_{i_p}\oplus
			x)}^{(y)}+\sum\limits_{i\in{\cal F}\setminus\{i_p\}}\lambda_{i,a_{i}}^{\bar{t}u+m}\sum\limits_{g=0}^{u-1}\theta^{gm}c_{iu+g,a(i_p,a_{i_p}\oplus x)}^{(y)}$
	\end{center}
	\begin{center}
		$+\sum\limits_{j\in{\cal R}}\lambda_{j,a_{j}}^{\bar{t}u+m}\sum\limits_{g=0}^{u-1}\theta^{gm}c_{ju+g,a(i_p,a_{i_p}\oplus
			x)}^{(y)}+\sum\limits_{z\in{\cal U}}\lambda_{z,a_{z}}^{\bar{t}u+m}\sum\limits_{g=0}^{u-1}\theta^{gm}c_{zu+g,a(i_p,a_{i_p}\oplus
			x)}^{(y)}=0$
	\end{center}
	\noindent for any $a\in[\bar{s}^{\bar{n}}]$, $x\in[\bar{s}]$,
	$y\in[\bar{s}+\bar{h}-\delta]$, $\bar{t}\in[\bar{r}]$ and $m\in[b]$, which implies
	\begin{align*}
		\sum\limits_{y=p}^{p+\bar{s}-1}(\lambda_{i_p,a_{i_p}\oplus
			(y-p)}^{\bar{t}u+m}\sum\limits_{g=0}^{u-1}\theta^{gm}c_{i_pu+g,a(i_p,a_{i_p}\oplus
			(y-p))}^{(y)}+\sum\limits_{i\in{\cal F}\setminus\{i_p\}}\lambda_{i,a_{i}}^{\bar{t}u+m}\sum\limits_{g=0}^{u-1}\theta^{gm}c_{iu+g,a(i_p,a_{i_p}\oplus
			(y-p))}^{(y)}\\
		+\sum\limits_{j\in{\cal R}}\lambda_{j,a_{j}}^{\bar{t}u+m}\sum\limits_{g=0}^{u-1}\theta^{gm}c_{ju+g,a(i_p,a_{i_p}\oplus
			(y-p))}^{(y)}+\sum\limits_{z\in{\cal U}}\lambda_{z,a_{z}}^{\bar{t}u+m}\sum\limits_{g=0}^{u-1}\theta^{gm}c_{zu+g,a(i_p,a_{i_p}\oplus
			(y-p))}^{(y)})=0.
	\end{align*}
	\noindent This equation can be rewritten as
	\begin{center}
		$\sum\limits_{y=p}^{p+\bar{s}-1}\lambda_{i_p,a_{i_p}\oplus
			(y-p)}^{\bar{t}u+m}\sum\limits_{g=0}^{u-1}\theta^{gm}c_{i_pu+g,a(i_p,a_{i_p}\oplus
			(y-p))}^{(y)}+\sum\limits_{i\in{\cal F}\setminus\{i_p\}}\lambda_{i,a_{i}}^{\bar{t}u+m}\sum\limits_{y=p}^{p+\bar{s}-1}\sum\limits_{g=0}^{u-1}\theta^{gm}c_{iu+g,a(i_p,a_{i_p}\oplus
			(y-p))}^{(y)}$
	\end{center}
	\begin{equation}\label{e12}
		+\sum\limits_{z\in{\cal U}}\lambda_{z,a_{z}}^{\bar{t}u+m}\sum\limits_{y=p}^{p+\bar{s}-1}\sum\limits_{g=0}^{u-1}\theta^{gm}c_{zu+g,a(i_p,a_{i_p}\oplus
			(y-p))}^{(y)}=-\sum\limits_{j\in{\cal R}}\lambda_{j,a_{j}}^{\bar{t}u+m}\sum\limits_{y=p}^{p+\bar{s}-1}\sum\limits_{g=0}^{u-1}\theta^{gm}c_{ju+g,a(i_p,a_{i_p}\oplus
			(y-p))}^{(y)},
	\end{equation}
	\noindent where $a\in[\bar{s}^{\bar{n}}]$, $\bar{t}\in[\bar{r}]$ and $m\in[b]$. Let
	
	\begin{center}
		${\bf v}_{i_p,a,m}^\top=[\sum\limits_{g=0}^{u-1}\theta^{gm}c_{i_pu+g,a}^{(p)},
		\sum\limits_{g=0}^{u-1}\theta^{gm}c_{i_pu+g,a(i_p,a_{i_p}\oplus1)}^{(p+1)},\ldots,
		\sum\limits_{g=0}^{u-1}\theta^{gm}c_{i_pu+g,a(i_p,a_{i_p}\oplus(\bar{s}-1))}^{(p+\bar{s}-1)},H_{i_p,i_0}(a,m),$
	\end{center}
	\begin{center}
		\noindent$\ldots, H_{i_p,i_{p-1}}(a,m),
		H_{i_p,i_{p+1}}(a,m),\ldots,H_{i_p,i_{\bar{h}-1}}(a,m),H_{i_p,z_0}(a,m),\ldots,H_{i_p,z_{\bar{n}-\bar{h}-\bar{d}-1}}(a,m)]\in\mathbb{F}^{\bar{r}}$
	\end{center}
	\noindent and
	\begin{center}
		${\bf u}_{i_p,a,m}^\top=[-H_{i_p,j_0}(a,m),-H_{i_p,j_1}(a,m),\ldots,-H_{i_p,j_{\bar{d}-1}}(a,m)]\in \mathbb{F}^{\bar{d}}$
	\end{center}
	\noindent for ${\cal R}=\{j_0,j_1,\ldots,j_{\bar{d}-1}\}$.
	For any $a\in[\bar{s}^{\bar{n}}]$ and  $m\in[b]$,  \eqref{e12} can be rewritten as
	\begin{center}
		${\bf M}_{i_p,a,m}{\bf v}_{i_p,a,m}={\bf M}'_{i_p,a,m}{\bf u}_{i_p,a,m}$,
	\end{center}
	\noindent where
	
	{\footnotesize 
		\[{\bf M}_{i_p,a,m}=
		\left(
		\begin{array}{ccccccc}
			\lambda_{i_p,a_{i_p}}^{m}   \ldots 
			\lambda_{i_p,a_{i_p}\oplus(\bar{s}-1)}^{m} &
			\lambda_{i_0,a_{i_0}}^{m}  \ldots \lambda_{i_{p-1},a_{i_{p-1}}}^{m}
			& \lambda_{i_{p+1},a_{i_{p+1}}}^{m}  \ldots
			\lambda_{i_{\bar{h}-1},a_{i_{\bar{h}-1}}}^{m} &
			\lambda_{z_0,a_{z_0}}^{m} \ldots  \lambda_{z_{\bar{n}-\bar{h}-\bar{d}-1},a_{z_{\bar{n}-\bar{h}-\bar{d}-1}}}^{m}\\
			\lambda_{i_p,a_{i_p}}^{u+m}  \ldots 
			\lambda_{i_p,a_{i_p}\oplus(\bar{s}-1)}^{u+m} &
			\lambda_{i_0,a_{i_0}}^{u+m}  \ldots 
			\lambda_{i_{p-1},a_{i_{p-1}}}^{u+m} &
			\lambda_{i_{p+1},a_{i_{p+1}}}^{u+m}  \ldots 
			\lambda_{i_{\bar{h}-1},a_{i_{\bar{h}-1}}}^{u+m} &
			\lambda_{z_0,a_{z_0}}^{u+m}  \ldots  \lambda_{z_{\bar{n}-\bar{h}-\bar{d}-1},a_{z_{\bar{n}-\bar{h}-\bar{d}-1}}}^{u+m}\\
			\vdots \indent \indent \indent \vdots \indent  & \vdots \indent \indent \indent \vdots \indent & \vdots \indent\indent \indent \vdots \indent & \vdots \indent \indent \indent \indent \vdots \indent\indent\indent\indent \indent \\
			\lambda_{i_p,a_{i_p}}^{(\bar{r}-1)u+m}  \ldots 
			\lambda_{i_p,a_{i_p}\oplus(\bar{s}-1)}^{(\bar{r}-1)u+m} &
			\lambda_{i_0,a_{i_0}}^{(\bar{r}-1)u+m}  \ldots 
			\lambda_{i_{p-1},a_{i_{p-1}}}^{(\bar{r}-1)u+m} &
			\lambda_{i_{p+1},a_{i_{p+1}}}^{(\bar{r}-1)u+m}  \ldots 
			\lambda_{i_{\bar{h}-1},a_{i_{\bar{h}-1}}}^{(\bar{r}-1)u+m} &
			\lambda_{z_0,a_{z_0}}^{(\bar{r}-1)u+m} \ldots \lambda_{z_{\bar{n}-\bar{h}-\bar{d}-1},a_{z_{\bar{n}-\bar{h}-\bar{d}-1}}}^{(\bar{r}-1)u+m}\\
		\end{array}
		\right),
		\]
	}

	\noindent and
	
	{\footnotesize
		\[ {\bf M}'_{i_p,a,m}=
		\left(
		\begin{array}{ccccccc}
			\lambda_{j_0,a_{j_0}}^{m} & \lambda_{j_1,a_{j_1}}^{m} & \cdots &
			\lambda_{j_{\bar{d}-1},a_{j_{\bar{d}-1}}}^{m}\\
			\lambda_{j_0,a_{j_0}}^{u+m} & \lambda_{j_1,a_{j_1}}^{u+m} & \cdots &
			\lambda_{j_{\bar{d}-1},a_{j_{\bar{d}-1}}}^{u+m}\\
			\vdots & \vdots & \ddots &
			\vdots\\
			\lambda_{j_0,a_{j_0}}^{(\bar{r}-1)u+m} &
			\lambda_{j_1,a_{j_1}}^{(\bar{r}-1)u+m} & \cdots &
			\lambda_{j_{\bar{d}-1},a_{j_{\bar{d}-1}}}^{(\bar{r}-1)u+m}\\
		\end{array}
		\right).
		\]
	}
	
Note that for any $a\in[\bar{s}^{\bar{n}}]$ and $m\in[b]$, host rack $i_p$ downloads $H_{i_p,j}(a,m)$ from each helper rack $j\in{\cal R}$, thus ${\bf u}_{i_p,a,m}$ is
obtained. The matrix ${\bf M}_{i_p,a,m}$ can be rewritten as
	${\bf M}_1$diag$(\lambda_{i_p,a_{i_p}}^{m},\ldots,\lambda_{i_p,a_{i_p}\oplus(\bar{s}-1)}^{m},$ $\ldots,
	\lambda_{z_0,a_{z_0}}^{m},\ldots,\lambda_{z_{\bar{n}-\bar{h}-\bar{d}-1},a_{z_{\bar{n}-\bar{h}-\bar{d}-1}}}^{m})$,
	where
	
	{\footnotesize 
		\[{\bf M}_1=
		\left(
		\begin{array}{ccccccc}
			1 \indent  \ldots \indent 1\indent\indent \indent\indent  & 	1 \indent  \ldots \indent 1\indent\indent \indent\indent & 	1 \indent \indent \ldots \indent \indent 1\indent\indent \indent\indent &\indent 1 \indent  \ldots \indent \indent 1 \indent\indent\indent\indent\indent\indent\indent \indent \\
			\lambda_{i_p,a_{i_p}}^{u}  \ldots 
			\lambda_{i_p,a_{i_p}\oplus(\bar{s}-1)}^{u} &
			\lambda_{i_0,a_{i_0}}^{u}  \ldots 
			\lambda_{i_{p-1},a_{i_{p-1}}}^{u} &
			\lambda_{i_{p+1},a_{i_{p+1}}}^{u}  \ldots 
			\lambda_{i_{\bar{h}-1},a_{i_{\bar{h}-1}}}^{u} &
			\lambda_{z_0,a_{z_0}}^{u}  \ldots  \lambda_{z_{\bar{n}-\bar{h}-\bar{d}-1},a_{z_{\bar{n}-\bar{h}-\bar{d}-1}}}^{u}\\
			\vdots \indent \indent \indent \vdots \indent\indent  & \vdots \indent \indent \indent \vdots \indent & \vdots \indent\indent \indent \vdots \indent & \vdots \indent \indent \indent \indent \vdots \indent\indent\indent\indent \indent \\
			\lambda_{i_p,a_{i_p}}^{(\bar{r}-1)u}  \ldots 
			\lambda_{i_p,a_{i_p}\oplus(\bar{s}-1)}^{(\bar{r}-1)u} &
			\lambda_{i_0,a_{i_0}}^{(\bar{r}-1)u}  \ldots 
			\lambda_{i_{p-1},a_{i_{p-1}}}^{(\bar{r}-1)u} &
			\lambda_{i_{p+1},a_{i_{p+1}}}^{(\bar{r}-1)u}  \ldots 
			\lambda_{i_{\bar{h}-1},a_{i_{\bar{h}-1}}}^{(\bar{r}-1)u} &
			\lambda_{z_0,a_{z_0}}^{(\bar{r}-1)u} \ldots \lambda_{z_{\bar{n}-\bar{h}-\bar{d}-1},a_{z_{\bar{n}-\bar{h}-\bar{d}-1}}}^{(\bar{r}-1)u}\\
		\end{array}
		\right).
		\]
	}
	
	Since $\lambda_{i,j}=\xi^{i\bar{s}+j}$, then we have $\lambda_{i,j}^u\neq
	\lambda_{i',j'}^u$ for all $i,i'\in[\bar{n}]$, $j,j'\in[\bar{s}]$
	with $(i,j)\neq(i',j')$,  the Vandermonde matrix ${\bf M}_1$ are
	invertible for each $a\in[0,\bar{s}^{\bar{n}}]$. It follows that the
	matrix ${\bf M}_{i_p,a,m}$ is also invertible, therefore
	
	\begin{center}
		${\bf v}_{i_p,a,m}={\bf M}_{i_p,a,m}^{-1}{\bf M}'_{i_p,a,m}{\bf u}_{i_p,a,m}$,
		$a\in[0,\bar{s}^{\bar{n}}], m\in[b]$.
	\end{center}
	\noindent In other words,  host rack $i_p$ can recover
	\begin{align*}
		&\{\sum\limits_{g=0}^{u-1}\theta^{gm}c_{i_pu+g,a(i_p,a_{i_p}\oplus(y-p))}^{(y)}:a\in[\bar{s}^{\bar{n}}],y\in[p,p+\bar{s}),m\in[b]\}\\
		&=\{\sum\limits_{g=0}^{u-1}\theta^{gm}c_{i_pu+g,a}^{(y)}:a\in[\bar{s}^{\bar{n}}],
		y\in[p,p+\bar{s}),m\in[b]\},
	\end{align*}
	\noindent and
	\begin{align*}
		H_{i_p,i}(a,m)&=\sum\limits_{g=0}^{u-1}\theta^{gm}c_{iu+g,a}^{(p)}+\sum\limits_{g=0}^{u-1}\theta^{gm}c_{iu+g,a(i_p,a_{i_p}\oplus1)}^{(p+1)}+
		\ldots+\sum\limits_{g=0}^{u-1}\theta^{gm}c_{iu+g,a(i_p,a_{i_p}\oplus(\bar{s}-1))}^{(p+\bar{s}-1)}, \\ &\forall i\in{\cal F}\setminus\{i_p\},
		a\in[\bar{s}^{\bar{n}}],m\in[b]. 
	\end{align*}
\end{proof}

\begin{lemma} \label{lm2}
	Based on the data
	$\{\sum\limits_{g=0}^{u-1}\theta^{gm}c_{i_pu+g,a}^{(y)}:a\in[\bar{s}^{\bar{n}}],
	y\in[p,p+\bar{s}),m\in[b]\}$ from lemma \ref{lm1}, $b\in[1,u]$, some linear
	combinations of symbols in the $p$-th host rack
	$\{\sum\limits_{g=0}^{u-1}\theta^{gm}c_{i_pu+g,a}^{(y)}:a\in[\bar{s}^{\bar{n}}],
	y\in[\bar{s}+\bar{h}-\delta],m\in[b]\}$ can be recovered if it downloads
	the data $H_{i_q,i_p}(a,m)$, $a\in[\bar{s}^{\bar{n}}],m\in[b]$ from $\bar{h}-\delta$ host racks $i_q\in \mathcal{S}_p$.
\end{lemma}
\begin{proof} We only prove that the conclusion holds for $\mathcal{S}_p$ defined in $(\ref{e11})$. For other selection methods of $\mathcal{S}_p$ ($p\in[\bar{h}]$), the same approach can be used for the proof. We prove it in three cases, depending on the value of $p$. We prove it in three cases, depending on the value of	$p$.
	
	Case 1. When $0\leq p<\delta$, we argue by induction on
	$q=p+1,p+2,\ldots,p+\bar{h}-\delta$. For the case of $q=p+1$, based on
	the date
	$\{H_{i_q,i_p}(a,m)=H_{i_{p+1},i_p}(a,m):a\in[\bar{s}^{\bar{n}}],m\in[b]\}$
	and
	$\{\sum\limits_{g=0}^{u-1}\theta^{gm}c_{i_pu+g,a}^{(y)}:a\in[\bar{s}^{\bar{n}}],
	y\in[p,p+\bar{s}),m\in[b]\}$ from Lemma \ref{lm1}, we can obtain	
	\begin{align*}
		\sum\limits_{g=0}^{u-1}\theta^{gm}c_{i_pu+g,a(i_q,a_{i_q}\oplus(\bar{s}-1))}^{(q+\bar{s}-1)}&=
		\sum\limits_{g=0}^{u-1}\theta^{gm}c_{i_pu+g,a(i_{p+1},a_{i_{p+1}}\oplus(\bar{s}-1))}^{(p+\bar{s})}\\
		&=H_{i_{p+1},i_p}(a,m)-(\sum\limits_{g=0}^{u-1}\theta^{gm}c_{i_pu+g,a}^{(p+1)}+
		\ldots+\sum\limits_{g=0}^{u-1}\theta^{gm}c_{i_pu+g,a(i_{p+1},a_{i_{p+1}}\oplus(\bar{s}-2)}^{(p+\bar{s}-1)}),
	\end{align*}
	\noindent i.e., we can recover
	$\{\sum\limits_{g=0}^{u-1}\theta^{gm}c_{i_pu+g,a}^{(p+\bar{s})}:a\in[\bar{s}^{\bar{n}}],m\in[b]\}$.
	
	Suppose that the conclusion holds for $q=p+q_0$, i.e., we 
	recover
	$\{\sum\limits_{g=0}^{u-1}\theta^{gm}c_{i_pu+g,a}^{(y)}:a\in[\bar{s}^{\bar{n}}],
	y\in[p+q_0,p+q_0+\bar{s}),m\in[b]\}$ for $0\leq q_0< \bar{h}-\delta$. For
	the case of $q=p+q_0+1$, from $H_{i_q,i_p}(a,m)$ we obtain
	\begin{align*}
		\sum\limits_{g=0}^{u-1}\theta^{gm}c_{i_pu+g,a(i_{q},a_{i_q}\oplus(\bar{s}-1))}^{(p+q_0+\bar{s})}&=\sum\limits_{g=0}^{u-1}\theta^{gm}c_{i_pu+g,a(i_{q},a_{i_{q}}\oplus(\bar{s}-1))}^{(q+\bar{s}-1)}\\
		&=H_{i_{q},i_{p}}(a,m)-(\sum\limits_{g=0}^{u-1}\theta^{gm}c_{i_pu+g,a}^{(p+q_0+1)}+
		\ldots+\sum\limits_{g=0}^{u-1}\theta^{gm}c_{i_pu+g,a(i_{q},a_{i_{q}}\oplus(\bar{s}-2))}^{(p+q_0+\bar{s}-1)}),
	\end{align*}
	\noindent which implies 
	\begin{center}
		$\displaystyle\{\sum\limits_{g=0}^{u-1}\theta^{gm}c_{i_pu+g,a(i_q,a_{i_q}\oplus(\bar{s}-1))}^{(p+q_0+\bar{s})}=
		\sum\limits_{g=0}^{u-1}\theta^{gm}c_{i_pu+g,a}^{(p+q_0+\bar{s})}:a\in[\bar{s}^{\bar{n}}],m\in[b]\}$
	\end{center}
	\noindent by the induction assumption. Therefore, by induction we are able to recover
	\begin{align*}
		&\displaystyle\{\sum\limits_{g=0}^{u-1}\theta^{gm}c_{i_pu+g,a}^{(y)}:a\in[\bar{s}^{\bar{n}}],y\in[p,p+\bar{s}+\bar{h}-\delta),m\in[b]\}\\
		=&\{\sum\limits_{g=0}^{u-1}\theta^{gm}c_{i_pu+g,a}^{(y)}:a\in[\bar{s}^{\bar{n}}],y\in[\bar{s}+\bar{h}-\delta],m\in[b]\},
	\end{align*}
	\noindent since the superscript operation 
	is addition modulo $\bar{s}+\bar{h}-\delta$.
	
	Case 2. When ${\rm max}\{\delta,\bar{h}-\delta\}\leq p\leq \bar{h}-1$, we argue by
	induction on $q=p-1,p-2,\ldots,p-\bar{h}+\delta$. For the case of
	$q=p-1$, based on the date
	$\{H_{i_q,i_p}(a,m)=H_{i_{p-1},i_p}(a,m):a\in[\bar{s}^{\bar{n}}],m\in[b]\}$
	and
	$\{\sum\limits_{g=0}^{u-1}\theta^{gm}c_{i_pu+g,a}^{(y)}:a\in[\bar{s}^{\bar{n}}],
	y\in[p,p+\bar{s}),m\in[b]\}$ from Lemma \ref{lm1}, we can obtain
	
	$\displaystyle\sum\limits_{g=0}^{u-1}\theta^{gm}c_{i_pu+g,a}^{(p-1)}=H_{i_{p-1},i_p}(a,m)-(\sum\limits_{g=0}^{u-1}\theta^{gm}c_{i_pu+g,a(i_{p-1},a_{i_{p-1}}\oplus1)}^{(p)}+
	\ldots+\sum\limits_{g=0}^{u-1}\theta^{gm}c_{i_pu+g,a(i_{p-1},a_{i_{p-1}}\oplus(\bar{s}-1)}^{(p+\bar{s}-2)}), a\in[\bar{s}^{\bar{n}}]$.
	
	Suppose that the conclusion holds for $q=p-q_0$, i.e., we recover
	$\{\sum\limits_{g=0}^{u-1}\theta^{gm}c_{i_pu+g,a}^{(y)}:a\in[\bar{s}^{\bar{n}}],
	y\in[p-q_0,p-q_0+\bar{s}),m\in[b]\}$ for $0\leq q_0< \bar{h}-\delta$. For
	the case of $q=p-q_0-1$, from $H_{i_q,i_p}(a,m)$ we can obtain
	\begin{center}
		$\displaystyle\sum\limits_{g=0}^{u-1}\theta^{gm}c_{i_pu+g,a}^{(p-q_0-1)}=H_{i_{q},i_{p}}(a,m)-(\sum\limits_{g=0}^{u-1}\theta^{gm}c_{i_pu+g,a(i_q,a_{i_q}\oplus1)}^{(p-q_0)}+
		\ldots+\sum\limits_{g=0}^{u-1}\theta^{gm}c_{i_pu+g,a(i_{q},a_{i_{q}}\oplus(\bar{s}-1)}^{(p-q_0+\bar{s}-2)})$.
	\end{center}
	
	Therefore, by induction we are able to recover
	\begin{align*}
		&\{\sum\limits_{g=0}^{u-1}\theta^{gm}c_{i_pu+g,a}^{(y)}:a\in[\bar{s}^{\bar{n}}],
		y\in\{p-\bar{h}+\delta,p-\bar{h}+\delta+1,\ldots,p+\bar{s}-1\},m\in[b]\}\\
		=&\{\sum\limits_{g=0}^{u-1}\theta^{gm}c_{i_pu+g,a}^{(y)}:a\in[\bar{s}^{\bar{n}}],y\in[\bar{s}+\bar{h}-\delta],m\in[b]\}
	\end{align*}
	\noindent since the superscript operation 
	is addition modulo $\bar{s}+\bar{h}-\delta$.
	
	Case 3. When $\delta\leq p<\bar{h}-\delta$,  the proof is analogous to those of Case 1 for $q\in[p+1,\bar{h}-1]$  and Case 2 for $q\in[\delta-1,p-1]$, so we omit the proof. 
\end{proof}

Now we are in a position to prove Theorem \ref{th2}.

{\bf The Proof of Theroem \ref{th2}: } We first recover the $b=h/\bar{h}$ failed nodes
in host rack $i_p (p=0,1,\ldots,\bar{h}-1)$. Assume that the index
set of the $b$ failed nodes in rack $i_p$ is
${\cal I}=\{g_0,g_1,\ldots,g_{b-1}\}$ and ${\cal J}=[u]\setminus{\cal I}$. For any $a\in[\bar{s}^{\bar{n}}]$, $y\in[\bar{s}+\bar{h}-\delta]$ and $m\in[b]$, by Lemma \ref{lm2} we know that the data
$\Delta_{m,a}^{(y)}=\sum\limits_{g=0}^{u-1}\theta^{gm}c_{i_pu+g,a}^{(y)}$,
thus
\begin{align}\label{e18-1}
	\displaystyle\sum\limits_{g\in{\cal I}}\theta^{gm}c_{i_pu+g,a}^{(y)}=\Delta_{m,a}^{(y)}-\sum\limits_{g\in{\cal J}}\theta^{gm}c_{i_pu+g,a}^{(y)}.
\end{align}
\noindent When $m$ runs over $\{0,1,\ldots,b-1\}$, \eqref{e18-1} can be rewritten as
\begin{align}  \label{e13-1} 
	{\left(\begin{array}{llll}
			1 & 1 & \cdots &  1\\
			\theta^{g_0} & \theta^{g_1} & \cdots &  \theta^{g_{b-1}}\\
			\ \ \vdots &  \ \ \vdots & \ddots &  \ \  \vdots\\
			\theta^{g_0(b-1)} & \theta^{g_1(b-1)} & \cdots &  \theta^{g_{b-1}(b-1)}\\
		\end{array}
		\right)} {\left(\begin{array}{l}  c_{i_pu+g_0,a}^{(y)} \\
			c_{i_pu+g_1,a}^{(y)} \\ \ \ \ \ \ \vdots \\
			c_{i_pu+g_{b-1},a}^{(y)}
		\end{array}
		\right)}
	={\left(\begin{array}{llll}
			\Delta_{0,a}^{(y)}-\sum\limits_{g\in{\cal J}}c_{i_pu+g,a}^{(y)} \\ \Delta_{1,a}^{(y)}-\sum\limits_{g\in{\cal J}}\theta^{g}c_{i_pu+g,a}^{(y)}
			\\ \ \ \indent \indent  \ \vdots\\ \Delta_{b-1,a}^{(y)}-\sum\limits_{g\in{\cal J}}\theta^{(b-1)g}c_{i_pu+g,a}^{(y)} \\
		\end{array}
		\right)},
\end{align}
where $a\in[\bar{s}^{\bar{n}}]$,
$y\in[\bar{s}+\bar{h}-\delta]$. Since $\theta^{g_i}\neq
\theta^{g_{i'}}$ for $g_i,g_{i'}\in[u]$ with $g_i\neq g_{i'}$, then the
Vandermonde matrix located on the left side of the linear system (\ref{e13-1})
is invertible, we can recover $\{c_{i_pu+g_0,a}^{(y)},
c_{i_pu+g_1,a}^{(y)}, \ldots, c_{i_pu+g_{b-1},a}^{(y)}:
a\in[\bar{s}^{\bar{n}}], y\in[\bar{s}+\bar{h}-\delta]\}$, i.e.,
the $b$ failed nodes in rack $i_p$.

In the following, we will
discuss the optimality of its repair bandwidth in two cases
depending on the value of $b$.

Case 1: $b\in[1,u-v]$.  We show that the code achieves the optimal repair
bandwidth for this case. In the download phase, by conclusion 1) of Lemma \ref{lm1}, each host rack
$i_p$ downloads $H_{i_p,j}(a,m)$, $a\in[\bar{s}^{\bar{n}}]$, $m\in[b]$ from all the helper racks $j\in {\cal R}$, i.e., $b\cdot\bar{s}^{\bar{n}}= bl/(\bar{s}+\bar{h}-\delta)$
symbols from each of the $\bar{d}$ helper racks. Consequently, the repair
bandwidth during this phase is
\begin{align}\label{e19}
	\displaystyle\gamma_{11}=\bar{d}\cdot\bar{h}\cdot\frac{bl}{\bar{s}+\bar{h}-\delta}=\frac{\bar{d}hl}{\bar{s}+\bar{h}-\delta}.
\end{align}
In the subsequent partially cooperative phase, each host rack
$i_p$ downloads $H_{i_q,i_p}(a,m)$, $a\in[\bar{s}^{\bar{n}}]$,
$m\in[b]$ from each host rack $i_q\in \mathcal{S}_p$,
respectively, i.e., $bl/(\bar{s}+\bar{h}-\delta)$ symbols from each of other
$\bar{h}-\delta$ host racks. We emphasize that these data are accessible in accordance with Lemma \ref{lm2}, which implies that the repair bandwidth during this phase is
\begin{align}\label{e20}
	\displaystyle\gamma_{12}=\bar{h}(\bar{h}-\delta)\cdot\frac{bl}{\bar{s}+\bar{h}-\delta}=
	\frac{h(\bar{h}-\delta)l}{\bar{s}+\bar{h}-\delta}.
\end{align}
From \eqref{e19} and \eqref{e20}, the total repair bandwidth across the two repair phases is
\begin{align*}
	\displaystyle\gamma_1=\gamma_{11}+\gamma_{12}=\frac{h(\bar{d}+\bar{h}-\delta)l}{\bar{s}+\bar{h}-\delta},
\end{align*}
\noindent which matches the minimum repair bandwidth given in \eqref{e2}.

Case 2: $b\in[u-v+1,u]$. We show that the repair bandwidth of the code $\mathcal{C}$ is asymptotically optimal when $\bar{d}$ is large
enough. In the download phase, by conclusion 1) of Lemma \ref{lm1}
each host rack $i_p$ downloads $H_{i_p,j}(a,m)$, $a\in[0,\bar{s}^{\bar{n}})$,  $m\in[0,u-v)$ from all the helper
racks $j\in {\cal R}$, i.e., $(u-v)l/(\bar{s}+\bar{h}-\delta)$ symbols
from each of the $\bar{d}$ helper racks. For each integer
$m\in[u-v,b)$, by conclusion 2) of Lemma \ref{lm1} each host rack $i_p$ downloads $H_{i_p,j}(a,m)$, $a\in[0,\bar{s}^{\bar{n}})$ from
all the helper racks $j\in \mathcal{R}\cup\{j'\}$ where $j'\in[\bar{n}]\setminus(\mathcal{R}\cup\mathcal{F})$, i.e.,
$(b-u+v)l/(\bar{s}+\bar{h}-\delta)$ symbols from each of the
$\bar{d}+1$ helper racks. Consequently, the repair bandwidth during this
phase is
\begin{align}\label{e20-1}
	\nonumber\displaystyle\gamma_{21}&=\bar{d}\cdot\bar{h}\cdot\frac{(u-v)l}{\bar{s}+\bar{h}-\delta}+(\bar{d}+1)\cdot\bar{h}\cdot\frac{(b-u+v)l}{\bar{s}+\bar{h}-\delta}\\
	&=\displaystyle\frac{\bar{d}hl}{\bar{s}+\bar{h}-\delta}+\frac{\bar{h}(b-u+v)l}{\bar{s}+\bar{h}-\delta}.
\end{align}
Exactly the same as discussed for Case 1, the repair bandwidth at
the  cooperative phase is
\begin{align}\label{e20-3}
	\displaystyle\gamma_{22}=\bar{h}(\bar{h}-\delta)\cdot\frac{bl}{\bar{s}+\bar{h}-\delta}=
	\frac{h(\bar{h}-\delta)l}{\bar{s}+\bar{h}-\delta}.
\end{align}
By \eqref{e20-1} and \eqref{e20-3}, the total repair bandwidth across the two repair phases is
\begin{align*}
	\displaystyle\gamma_2=\gamma_{21}+\gamma_{22}
	=\frac{h(\bar{d}+\bar{h}-\delta)l}{\bar{d}-\bar{k}+\bar{h}-\delta+1}
	+\frac{\bar{h}(b-u+v)l}{\bar{s}+\bar{h}-\delta}.
\end{align*}
\noindent Note that $v<u$, we have $b-u+v<b$, then
\begin{center}
	$\displaystyle\gamma_2=\frac{h(\bar{d}+\bar{h}-\delta)l}{\bar{s}+\bar{h}-\delta}
	+\frac{\bar{h}(b-u+v)l}{\bar{s}+\bar{h}-\delta}<\frac{h(\bar{d}+1+\bar{h}-\delta)l} {\bar{s}+\bar{h}-\delta}$.
\end{center}
In this case, the ratio of the number of downloaded symbols to the
optimal repair bandwidth presented in (\ref{e2}) is smaller than
$1+1/(\bar{d}+\bar{h}-\delta)$. As a result, the repair bandwidth of the code
approaches optimal level provided that the amount of helper racks $\bar{d}$ is sufficiently large when $b>u-v$. \qed

\subsection{Example 1}
\label{subsect-scheme-1} 

In this subsection, we provide a specific example of rack-aware MSCR codes to further illustrate the construction idea.  Now let us take the  $(n=12,k=5,l=3\times 2^6)$ rack-aware MSPCR code to introduce our method as follows. 

\subsubsection{Construct an MDS array code $\mathcal{C}$ with $\bar{l}=2^6$}

Let $\mathbb{F}$ be a finite field of size $25$. Let $(n,u,k,r,d,h,b,v,\delta)=(12,2,5,7,7,3,1,1,2)$, then we have
$\bar{n}=n/u=6$, $\bar{k}=(k-v)/u=2$, $\bar{r}=(r+v)/u=4$,
$\bar{d}=(d-u+b)/u=3$, $\bar{s}=\bar{d}-\bar{k}+1=2$, $\bar{h}=h/b=3$ and	$l=(\bar{s}+\bar{h}-\delta)\bar{s}^{\bar{n}}=3\times2^{6}$. Let $\xi$ be a primitive element of
$\mathbb{F}$ and $\theta$ be an element in $\mathbb{F}$ with multiplicative order
$u=2$. Consider an $(n=12,k=5,\bar{l}=2^6,d=7,h=3,\delta=2)$ code $\mathcal{C}=({\bf c}_{0},{\bf c}_{1},\ldots,{\bf c}_{11})$ defined by the following parity check equations over $\mathbb{F}$:
\begin{equation}\label{e13-5}
	\sum\limits_{i=0}^{5}\sum\limits_{g=0}^{1}\theta^{gt}\lambda_{i,a_i}^tc_{2i+g,a}=0, 
\end{equation}
\noindent where $t\in[0,7)$, $a\in[2^{6}]$, ${\bf c}_{2i+g}=(c_{2i+g,0},$ $c_{2i+g,1},\ldots,c_{2i+g,2^6-1})^\top$ is a column vector over $\mathbb{F}$ for $i\in[6]$ and $g=0,1$, 
$\lambda_{i,a_i}=\xi^{2i+a_i}$ for $i\in[6]$, $a_i=0,1$ is the $i$-th term of the binary expansion of $a=(a_{5},\ldots,a_1,a_0)$.

For all  $a\in[2^{6}]$, the equations in \eqref{e13-5} can be rewritten in matrix form as 
	\begin{align} \label{e15-2} 
		{\left(\begin{array}{lllllll}
				1 & 1 & 1 & 1 & \ldots & 1 & 1 \\
				\lambda_{0,a_0} & \theta\lambda_{0,a_0} & \lambda_{1,a_1} & \theta\lambda_{1,a_1} & \ldots &\lambda_{5,a_5} & \theta\lambda_{5,a_5} \\
				\vdots & \vdots & \vdots & \vdots & \ldots & \vdots & \vdots \\
				\lambda_{0,a_0}^6 & (\theta\lambda_{0,a_0})^6 & \lambda_{1,a_1}^6 & (\theta\lambda_{1,a_1})^6 & \ldots &\lambda_{5,a_5}^6 & (\theta\lambda_{5,a_5})^6
			\end{array}
			\right)} {\left(\begin{array}{l} 
				c_{0,a} \\
				c_{1,a} \\ 
				\vdots\\
				c_{11,a}
			\end{array}
			\right)}=0.
	\end{align}    
Note that $\lambda_{i,a_i}=\xi^{2i+a_i}$ and $\theta^2=1$, it is easy to see that 
$\lambda_{0,a_0},\theta\lambda_{0,a_0},\ldots,\lambda_{5,a_5},\theta\lambda_{5,a_5}$ are distinct,  each of the $r=7$ columns of the parity check matrix in \eqref{e15-2} has a rank of 7, so any $k=5$ out of $n=12$ elements in the set $\{c_{0,a},c_{1,a},\ldots,
c_{11,a}\}$ can recover the whole set. Since this holds for any $a\in[2^6]$,
we know that any 5 nodes of the code can recover the whole code, i.e., the code $\mathcal{C}$ has MDS property.

\subsubsection{ Construct an MDS array code $\widetilde{\mathcal{C}}$ with $l=3\times2^6$}

By generating 3 instances of the MDS array code $\mathcal{C}$,
we obtain our rack-aware MSPCR code $\widetilde{\mathcal{C}}=(\widetilde{\bf c}_{0},\widetilde{\bf c}_{1},\ldots,\widetilde{\bf c}_{11})$ defined by the following parity check equations over $\mathbb{F}$:
\begin{equation}\label{e13-6}
	\sum\limits_{i=0}^{5}\sum\limits_{g=0}^{1}\theta^{gt}\lambda_{i,a_i}^tc_{2i+g,a}^{(y)}=0, 
\end{equation}
\noindent where  $t\in[0,7)$, $a\in[2^{6}]$, $y\in[3]$, $\widetilde{\bf c}_{2i+g}=(({\bf c}_{2i+g}^{(0)})^\top,({\bf c}_{2i+g}^{(1)})^\top,({\bf c}_{2i+g}^{(2)})^\top)^\top$ is a column vector of length $l=3\times2^6$ over $\mathbb{F}$ for $i\in[0,6)$ and $g=0,1$.   Let the index set of
the $\bar{h}$ host racks be ${\cal F}=\{0,1,2\}$, the index set of the $\bar{d}$ helper racks be
${\cal R}=\{3,4,5\}$, and the index set of failed nodes be
${\cal F}'=\{0,2,4\}$. Since $t\in[0,7)$, taking $t=2\bar{t}+m$, we have $\bar{t}\in\{0,1,2,3\}$ when $m=0$. Substituting $t=2\bar{t}$ into the equation defined in \eqref{e13-6}, we have
\begin{equation}\label{e13-3}
	\sum\limits_{i=0}^{5}\lambda_{i,a_i}^{2\bar{t}}\sum\limits_{g=0}^{1}c_{2i+g,a}^{(y)}=0
\end{equation}
\noindent for any $a\in[2^6]$,  $\bar{t}\in\{0,1,2,3\}$ and $y\in[3]$.

\subsubsection{ Each host rack downloads some symbols from each of the helper racks}

For any host rack $p\in{\cal F}$,  $a\in[2^6]$ and $\bar{t}\in\{0,1,2,3\}$, by equation \eqref{e13-3} we can get
\begin{align*}
	\lambda_{p,a_p}^{2\bar{t}}\sum\limits_{g=0}^{1}c_{2p+g,a}^{(p)}+\sum\limits_{i\in{\cal F}\setminus\{p\}}\lambda_{i,a_i}^{2\bar{t}}\sum\limits_{g=0}^{1}c_{2i+g,a}^{(p)} +\sum\limits_{j\in{\cal R}}\lambda_{j,a_j}^{2\bar{t}}\sum\limits_{g=0}^{1}c_{2j+g,a}^{(p)}=0,
\end{align*}
\noindent and
\begin{align*}
	\lambda_{p,a_p\oplus1}^{2\bar{t}}\sum\limits_{g=0}^{1}c_{2p+g,a(p,a_p\oplus1)}^{(p+1)}+\sum\limits_{i\in{\cal F}\setminus\{p\}}\lambda_{i,a_i}^{2\bar{t}}\sum\limits_{g=0}^{1}c_{2i+g,a(p,a_p\oplus1)}^{(p+1)} +\sum\limits_{j\in{\cal R}}\lambda_{j,a_j}^{2\bar{t}}\sum\limits_{g=0}^{1}c_{2j+g,a(p,a_p\oplus1)}^{(p+1)}=0,
\end{align*}
\noindent which gives 
\begin{align}\label{e14}
	\nonumber	\lambda_{p,a_p}^{2\bar{t}}\sum\limits_{g=0}^{1}c_{2p+g,a}^{(p)}+\lambda_{p,a_p\oplus1}^{2\bar{t}}\sum\limits_{g=0}^{1}c_{2p+g,a(p,a_p\oplus1)}^{(p+1)}
	\nonumber +\sum\limits_{i\in{\cal F}\setminus\{p\}}\lambda_{i,a_i}^{2\bar{t}}(\sum\limits_{g=0}^{1}c_{2i+g,a}^{(p)}+\sum\limits_{g=0}^{1}c_{2i+g,a(p,a_p\oplus1)}^{(p+1)})\\
	=-\sum\limits_{j\in{\cal R}}\lambda_{j,a_j}^{2\bar{t}}(\sum\limits_{g=0}^{1}c_{2j+g,a}^{(p)}+\sum\limits_{g=0}^{1}c_{2j+g,a(p,a_p\oplus1)}^{(p+1)}),
\end{align}	
\noindent where the superscript operation is addition modulo $3$. Host rack $p$ downloads the data $\sum\limits_{g=0}^{1}c_{2j+g,a}^{(p)}+\sum\limits_{g=0}^{1}c_{2j+g,a(p,a_p\oplus1)}^{(p+1)}$, $a\in[2^6]$ from all the helper racks $j\in{\cal R}$, so the data on the right-hand side of equation (\ref{e14}) is known. We can obtain the coefficient matrix on the left-hand side of equation (\ref{e14}) as follow:

${\bf M}={\left(\begin{array}{llll}
		1 & 1 & 1 &  1 \\
		\lambda_{p,a_p}^2 & \lambda_{p,a_p\oplus1}^2 & \lambda_{i_0,a_{i_0}}^2 &  \lambda_{i_1,a_{i_1}}^2 \\
		\lambda_{p,a_p}^4 & \lambda_{p,a_p\oplus1}^4 & \lambda_{i_0,a_{i_0}}^4 &  \lambda_{i_1,a_{i_1}}^4 \\
		\lambda_{p,a_p}^6 & \lambda_{p,a_p\oplus1}^6 & \lambda_{i_0,a_{i_0}}^6 &  \lambda_{i_1,a_{i_1}}^6 \\ \\
	\end{array}
	\right)}={\left(\begin{array}{llll}
		1 & 1 & 1 &  1 \\
		\xi^{4p+2a_p} & \xi^{4p+2(a_p\oplus1)} & \xi^{4i_0+2a_{i_0}} &  \xi^{4i_1+2a_{i_1}} \\
		(\xi^{4p+2a_p})^2 & (\xi^{4p+2(a_p\oplus1)})^2 & (\xi^{4i_0+2a_{i_0}})^2 &  (\xi^{4i_1+2a_{i_1}})^2 \\
		(\xi^{4p+2a_p})^3 & (\xi^{4p+2(a_p\oplus1)})^3 & (\xi^{4i_0+2a_{i_0}})^3 &  (\xi^{4i_1+2a_{i_1}})^3 \\
	\end{array}
	\right)},$\\

\noindent where ${\cal F}\setminus\{p\}=\{i_0,i_1\}$.

Clearly, $\xi^{4p+2a_p}, \xi^{4p+2(a_p\oplus1)}, \xi^{4i_0+2a_{i_0}}, \xi^{4i_1+2a_{i_1}}$ are four distinct elements in field $\mathbb{F}$,  the rank of the Vandermonde matrix ${\bf M}$ is 4. Then, host rack $p$ can recover symbols $\sum\limits_{g=0}^{1}c_{2p+g,a}^{(p)}$, $\sum\limits_{g=0}^{1}c_{2p+g,a}^{(p+1)}$, $\sum\limits_{g=0}^{1}c_{2i_0+g,a}^{(p)}+\sum\limits_{g=0}^{1}c_{2i_0+g,a(p,a_p\oplus1)}^{(p+1)}$, $\sum\limits_{g=0}^{1}c_{2i_1+g,a}^{(p)}+\sum\limits_{g=0}^{1}c_{2i_1+g,a(p,a_p\oplus1)}^{(p+1)}$, $a\in[2^6]$.

\subsubsection{ Cooperatively repair the failed nodes}

For host rack $0$, host rack $1$ transfers $\sum\limits_{g=0}^{1}c_{g,a}^{(1)}+\sum\limits_{g=0}^{1}c_{g,a(1,a_1\oplus1)}^{(2)}$, $a\in[2^6]$
to host rack $0$. Host rack $0$ uses its own data $\sum\limits_{g=0}^{1}c_{g,a}^{(1)}$, $a\in[2^6]$ to solve 
$\sum\limits_{g=0}^{1}c_{g,a(1,a_1\oplus1)}^{(2)}$, i.e., $\sum\limits_{g=0}^{1}c_{g,a}^{(2)}$, $a\in[2^6]$. So, host rack $0$ recovers  $\sum\limits_{g=0}^{1}c_{g,a}^{(0)}$, $\sum\limits_{g=0}^{1}c_{g,a}^{(1)}$, $\sum\limits_{g=0}^{1}c_{g,a}^{(2)}$, $a\in[2^6]$.
Using  $c_{1,a}^{(p)}, a\in[2^{6}],p=0,1,2$ of helper node
$1$ in host rack $0$, we can solve $c_{0,a}^{(p)}, a\in[2^{6}], p=0,1,2$ stored in failed node $0$.

For host rack $1$,  host rack $2$ transfers  $\sum\limits_{g=0}^{1}c_{2+g,a}^{(2)}+\sum\limits_{g=0}^{1}c_{2+g,a(2,a_2\oplus1)}^{(0)}$, $a\in[2^6]$ to host rack $1$. For host rack $2$, host rack $1$ transfers  $\sum\limits_{g=0}^{1}c_{4+g,a}^{(1)}+\sum\limits_{g=0}^{1}c_{4+g,a(1,a_1\oplus1)}^{(2)}$, $a\in[2^6]$ to host rack $2$. Using the same method as that for repairing failed node 0, we can repair failed nodes $2$ and $4$, respectively.

In the download phase, we downloaded $2^6\times3\times3=576$  symbols from the 3 helper racks, while we downloaded $2^6\times3=192$ in the cooperative phase. So, in the repair scheme we downloaded $768$ symbols, which achieves the lower bound in \eqref{e2}. A numerical comparisons with existing schemes is provided in the following table.

\begin{table}[http!]
	\renewcommand{\arraystretch}{1.2}
	\setlength\tabcolsep{3pt} 
	\centering
	\caption{ The Existing Schemes And Our Schemes For $(n,u,k,d,h,b,v)=(12,2,5,7,3,1,1)$
		\label{tab2}
	} 
	\begin{tabular}{|c|c|c|c|c|c|c|c|c|}	
		\hline
		&  Sub-packetization  $l$ & \tabincell{c}{Optimal repair\\ bandwidth $\gamma$} &  \tabincell{c}{ Field  size $|\mathbb{F}|$} \\  
		\hline
		\cite{GDL} & $4\times2^6$ & $15\times 2^6$ & 25\\ 
		\hline
		Our schemes &	 $3\times2^6$ & $12\times 2^6$& 25\\ 		
		\hline
	\end{tabular}
\end{table}
It is not difficult to see that the sub-packetization and optimal repair bandwidth of our scheme are significantly smaller than those in \cite{GDL}, while the fields employed in both schemes are of the same size.

\section{The proof of Theorem \ref{th3}}
\label{section-th3}

Since $\bar{d}=\bar{k}+1$, then $\bar{s}=\bar{d}-\bar{k}+1=2$. Let $\mathbb{F}$ be a finite field of size
$|\mathbb{F}|\geq 2n+1$ with $u|(|\mathbb{F}|-1)$, $\xi$ be a primitive element of $\mathbb{F}$, $\theta$ be an
element of $\mathbb{F}$ with multiplicative order $u$,
$\lambda_{i,j}=\xi^{2i+j}$ for $i\in[\bar{n}]$,
$j\in\{0,1\}$ be $2\bar{n}$ distinct elements over $\mathbb{F}$, and $l=2^{\bar{n}}$.
From the rack-aware storage model, we know that the number of helper nodes is $d=\bar{d}u+u-b$. Similarly, to guarantee $d\geq k$, we need $\bar{k}\leq\bar{d}\leq \bar{n}-\bar{h}$ when $1\leq b\leq u-v$, and $\bar{k}+1\leq\bar{d}\leq \bar{n}-\bar{h}$ when $u-v<b\leq u$. Consider a code $\mathcal{C}=({\bf c}_0,{\bf c}_1,\ldots,{\bf
	c}_{n-1})$ defined by the following parity check equations over $\mathbb{F}$:
\begin{equation}\label{e21}
	\sum\limits_{i=0}^{\bar{n}-1}\sum\limits_{g=0}^{u-1}\theta^{gt}\lambda_{i,a_i}^tc_{iu+g,a}=0,
\end{equation}
\noindent where $t\in[r]$, $a\in[l]$, ${\bf c}_{iu+g}=(c_{iu+g,0},c_{iu+g,1},\ldots,c_{iu+g,l-1})^\top$ is a column vector of length $l$ over $\mathbb{F}$.

Before showing the proof of  Theorem \ref{th3}, we first review the Hamming code, which plays an important role in this repair scheme.
Let $m'\geq2$ be an integer. Given an $(n'=2^{m'}-1, k'=2^{m'}-1-m')$ linear binary
Hamming code $\mathcal{V}_0$ with parity check matrix whose columns contains all the nonzeros vectors of $\mathbb{F}_2^{m'}$ where $\mathbb{F}_2$ is a binary field. For each $i\in[n']$ we define $n'$ sets
\begin{equation}\label{4-e1}
	\mathcal{V}_{i+1}=\{{\bf f}(i,f_i\oplus1):{\bf f}=(f_0,\ldots,f_{n'-1})\in \mathcal{V}_0\},
\end{equation}
\noindent where $\oplus$ is the addition operation modulo $2$.

\begin{lemma}[\cite{LWCT}]\label{lm4-1}
	The sets $\mathcal{V}_{i+1}$ where $i\in[n']$ defined in (\ref{4-e1}) and $\mathcal{V}_0$ form a partition of $[2^{n'}]$, i.e., $\mathcal{V}_0\cup \mathcal{V}_1\cup\cdots\cup \mathcal{V}_{n'}=[2^{n'}]$. 
\end{lemma}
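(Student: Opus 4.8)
The plan is to reduce the partition claim to an elementary counting statement about the cosets of the Hamming code $\mathcal{V}_0$, read off from its parity-check matrix $\mathbf{H}$. First I would fix the identification of each integer in $[2^{n'}]$ with its length-$n'$ binary vector in $\mathbb{F}_2^{n'}$, write $\mathbf{e}_i\in\mathbb{F}_2^{n'}$ for the vector having a single $1$ in position $i$, and observe from \eqref{4-e1} that, over $\mathbb{F}_2$, replacing the $i$-th coordinate $f_i$ of $\mathbf{f}$ by $f_i\oplus1$ is the same as adding $\mathbf{e}_i$. Hence $\mathcal{V}_{i+1}=\{\mathbf{f}+\mathbf{e}_i:\mathbf{f}\in\mathcal{V}_0\}=\mathcal{V}_0+\mathbf{e}_i$ is exactly the coset of $\mathcal{V}_0$ through $\mathbf{e}_i$, for $i\in[n']$, while $\mathcal{V}_0$ is the coset of the zero vector. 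So the lemma is equivalent to the assertion that the $n'+1$ cosets $\mathcal{V}_0,\mathcal{V}_0+\mathbf{e}_0,\ldots,\mathcal{V}_0+\mathbf{e}_{n'-1}$ are pairwise distinct, since distinct cosets of a subgroup automatically partition the ambient group.

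Next I would introduce the syndrome map $\sigma:\mathbb{F}_2^{n'}\to\mathbb{F}_2^{m}$, $\sigma(\mathbf{f})=\mathbf{H}\mathbf{f}^{\top}$, whose kernel is $\mathcal{V}_0$; two vectors lie in the same coset of $\mathcal{V}_0$ iff they share a syndrome. Every element of $\mathcal{V}_0+\mathbf{e}_i$ has syndrome $\sigma(\mathbf{e}_i)=\mathbf{H}\mathbf{e}_i^{\top}$, the $i$-th column of $\mathbf{H}$. By the defining property of a Hamming parity-check matrix, the columns of $\mathbf{H}$ are precisely the $2^m-1=n'$ nonzero vectors of $\mathbb{F}_2^{m}$, each occurring once. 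Therefore $\mathcal{V}_1,\ldots,\mathcal{V}_{n'}$ carry pairwise distinct nonzero syndromes, $\mathcal{V}_0$ carries syndrome $\mathbf{0}$, and altogether $\mathcal{V}_0,\ldots,\mathcal{V}_{n'}$ are labeled bijectively by the $2^m=n'+1$ elements of $\mathbb{F}_2^m$. Being preimages of distinct points under $\sigma$, these $n'+1$ sets are pairwise disjoint; and each, as a coset of $\mathcal{V}_0$, has size $2^{k'}=2^{n'-m}$, so their total size is $2^m\cdot 2^{n'-m}=2^{n'}$, which forces $\mathcal{V}_0\cup\mathcal{V}_1\cup\cdots\cup\mathcal{V}_{n'}=[2^{n'}]$.

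I do not anticipate any real obstacle: the argument is pure linear algebra over $\mathbb{F}_2$, and the only care needed is the bookkeeping between integers in $[2^{n'}]$ and bit-vectors, together with invoking the bijection between the columns of $\mathbf{H}$ and the nonzero elements of $\mathbb{F}_2^m$. If one prefers to avoid syndromes entirely, disjointness of $\mathcal{V}_0+\mathbf{e}_i$ and $\mathcal{V}_0+\mathbf{e}_j$ for $i\neq j$ follows just as directly from the minimum distance of the Hamming code: equality of these cosets would place the weight-$2$ vector $\mathbf{e}_i+\mathbf{e}_j$ in $\mathcal{V}_0$, and $\mathcal{V}_0+\mathbf{e}_i=\mathcal{V}_0$ would place the weight-$1$ vector $\mathbf{e}_i$ in $\mathcal{V}_0$, both contradicting that the Hamming code has minimum distance $3$; the same cardinality count then closes the argument the same way.
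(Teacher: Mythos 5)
Your proof is correct, and in fact the paper itself contains no proof of this lemma to compare against: it is imported verbatim from \cite{LWCT} (the paper only illustrates it with the $(3,1)$ Hamming-code example $\mathcal{V}_0=\{0,7\}$, $\mathcal{V}_1=\{3,4\}$, $\mathcal{V}_2=\{2,5\}$, $\mathcal{V}_3=\{1,6\}$ in Section III). Your argument is the standard and complete one: flipping coordinate $i$ is addition of $\mathbf{e}_i$ over $\mathbb{F}_2$, so $\mathcal{V}_{i+1}=\mathcal{V}_0+\mathbf{e}_i$ is a coset; the syndromes of these cosets are $\mathbf{0}$ and the $n'$ columns of $\mathbf{H}$, which by the defining property of the Hamming parity-check matrix are exactly the $2^m-1$ distinct nonzero vectors of $\mathbb{F}_2^m$, so the $n'+1=2^m$ cosets are pairwise disjoint; and the cardinality count $2^m\cdot 2^{n'-m}=2^{n'}$ forces the union to be all of $[2^{n'}]$. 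Your alternative disjointness argument via minimum distance $3$ (neither $\mathbf{e}_i$ nor $\mathbf{e}_i+\mathbf{e}_j$ can lie in $\mathcal{V}_0$) is equally valid. The only blemish is in the paper's statement rather than in your proof: the index range should read $i\in[0,n']$ so as to match the union $\mathcal{V}_0\cup\cdots\cup\mathcal{V}_{n'}$, which is how you (correctly) treat it.
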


Note that $l=2^{\bar{n}}$, then $\bar{h}-\delta+2$ implies that there exists
an integer $1\leq \bar{m}\leq \bar{n}$ such that $\bar{h}-\delta+2=2^{\bar{m}}$. For convenience, let $\mathcal{V}_0$ be a $(2^{\bar{m}}-1,2^{\bar{m}}-1-\bar{m})$  linear binary Hamming code. For any
index set of the host racks ${\mathcal F}=\{i_0,i_1,\ldots,i_{\bar{h}-1}\}$,
define a partition $\{\mathcal{S}_q:q\in[0,\bar{h}/(\bar{h}-\delta+1))\}$ of ${\mathcal F}$,
where $\bar{h}$ is divisible by $\bar{h}-\delta+1$ and
\begin{align}\label{lm4-0}
	\mathcal{S}_q=\{i_p:p\in[q(\bar{h}-\delta+1),(q+1)(\bar{h}-\delta+1))\}, q\in[0,\bar{h}/(\bar{h}-\delta+1)).
\end{align} 
Our cooperative repair scheme relies on $\mathcal{S}_q$ defined above. Let $J=\{j_0,j_1,\ldots, j_{i-1}\}$ be a subset of $[\bar{n}]$. Given any element
$a=(a_{\bar{n}-1},a_{\bar{n}-2},\ldots,a_0)$ belonging to $ \mathbb{F}_2^{\bar{n}}$, we define the puncturing
vector over $J$ as $a|_J=(a_{j_{i-1}},a_{j_{i-2}},\ldots,a_{j_0})\in \mathbb{F}_2^{i}$,
under the assumption that $j_0<j_1<\ldots<j_{i-1}$.

In this section, for any two distinct racks $i$ and $j$, we define 
\begin{center}
	$H_{i,j}(a,m)=\sum\limits_{g=0}^{u-1}\theta^{gm}c_{ju+g,a}+\sum\limits_{g=0}^{u-1}\theta^{gm}c_{ju+g,a(i,a_{i}\oplus1)}$,
\end{center}
\noindent where $m\in[u]$ and $a\in[l]$.

When there are $\bar{h}$ host racks and each contains $b$ failed nodes, the partially cooperative repair process of the code generated by \eqref{e21}  is mainly split into two phases.

$\bullet$ Download phase: When $b\in[1,u-v]$,  each host rack $i_p\in \mathcal{S}_{\lfloor p/(\bar{h}-\delta+1)\rfloor}$ downloads $H_{i_p,j}(a,m)$ where $m\in [b]$, $a\in[l]$ and $ a|_{\mathcal{S}_{\lfloor p/(\bar{h}-\delta+1)\rfloor}}\in {\mathcal V}_0$,  from each helper rack $j\in{\mathcal R}$. When $b\in[u-v+1,u]$, each host rack $i_p\in \mathcal{S}_{\lfloor p/(\bar{h}-\delta+1)\rfloor}$ downloads $H_{i_p,j}(a,m)$ where $m\in [u-v]$, $a\in[l]$ and $ a|_{\mathcal{S}_{\lfloor p/(\bar{h}-\delta+1)\rfloor}}\in {\mathcal V}_0$,  from each helper rack $j\in{\mathcal R}$, and additionally downloads $H_{i_p,j}(a,m)$ where $m\in [u-v,b)$, $a\in[l]$ and $ a|_{\mathcal{S}_{\lfloor p/(\bar{h}-\delta+1)\rfloor}}\in {\mathcal V}_0$,  from each helper rack $j\in{\mathcal R}\cup\{j'\}$, where $j'\in[\bar{n}]\setminus(\mathcal{R}\cup\mathcal{F})$.

$\bullet$ Cooperative phase: each host rack $i\in
\mathcal{S}_{\lfloor p/(\bar{h}-\delta+1)\rfloor}\setminus\{i_p\}$ sends $H_{i,i_p}(a,m)$ where $m\in [b]$, $a\in[l]$ and $ a|_{\mathcal{S}_{\lfloor p/(\bar{h}-\delta+1)\rfloor}}\in {\mathcal V}_0$, to host rack $i_p$.

In the following, we further show that the repair method can recover $h$ failures distributed evenly in $\bar{h}$ host racks with the (asymptotially) optimal repair bandwidth. During the download phase, we give the following result about the repair process.
\begin{lemma}\label{lm4-2}
	According to the notations introduced above, we can obtain the following results:

1) For any given integer $b\in[1,u-v]$, by downloading $H_{i_p,j}(a,m)$, $m\in[b]$ from every helper rack $j\in{\cal R}$, host rack $i_p$ $(p\in[\bar{h}])$ can recover

$\bullet$ $\sum\limits_{g=0}^{u-1}\theta^{gm}c_{i_pu+g,a}$,
$\sum\limits_{g=0}^{u-1}\theta^{gm}c_{i_pu+g,a(i_p,a_{i_p}\oplus1)}$;

$\bullet$
$H_{i_p,i}(a,m)=\sum\limits_{g=0}^{u-1}\theta^{gm}c_{iu+g,a}+\sum\limits_{g=0}^{u-1}\theta^{gm}c_{iu+g,a(i_p,a_{i_p}\oplus1)}$,

\noindent where $i\in{\cal F}\setminus\{i_p\}$, $a\in[l]$ with $a|_{\mathcal{S}_{\lfloor
		p/(\bar{h}-\delta+1)\rfloor}}\in {\mathcal V}_0$.

2) For any given integer $b\in[u-v+1,u]$, by downloading $H_{i_p,j}(a,m)$,  $m\in[u-v,b)$ from every helper rack $j\in{\cal R}\cup\{j'\}$ with
$j'\in[\bar{n}]\setminus(\mathcal{R}\cup\mathcal{F})$,  host rack $i_p$
$(p\in[\bar{h}])$ can recover

$\bullet$ $\sum\limits_{g=0}^{u-1}\theta^{gm}c_{i_pu+g,a}$,
$\sum\limits_{g=0}^{u-1}\theta^{gm}c_{i_pu+g,a(i_p,a_{i_p}\oplus1)}$;

$\bullet$ $H_{i_p,i}(a,m)$, 

\noindent where $i\in{\cal F}\setminus\{i_p\}$, $a\in[l]$ with $a|_{\mathcal{S}_{\lfloor
		p/(\bar{h}-\delta+1)\rfloor}}\in {\mathcal V}_0$.
\end{lemma}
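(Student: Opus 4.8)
The plan is to mirror the proof of Lemma \ref{lm1}, specialised to $\bar s=2$ (which holds since $\bar d=\bar k+1$) and carried out coordinate by coordinate, the only cosmetic difference being that the coordinate $a$ is confined to the class with $a|_{S}\in\mathcal V_0$, where $S=S_{\lfloor p/(\bar h-\delta+1)\rfloor}$. First I would rewrite each parity index as $t=wu+m$ with $m\in[u]$; since $\theta^u=1$, equation \eqref{e21} at a coordinate $a$ becomes $\sum_{i=0}^{\bar n-1}\lambda_{i,a_i}^{wu+m}\sum_{g=0}^{u-1}\theta^{gm}c_{iu+g,a}=0$, where $w$ runs over $[\bar r]$ when $m\in[0,u-v)$ and over $[\bar r-1]$ when $m\in[u-v,u)$, exactly as in the proof of Lemma \ref{lm1}.

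For part (1), fix $m\in[b]$ (here $b\le u-v$, so $w$ ranges over all of $[\bar r]$), fix a coordinate $a$ with $a|_{S}\in\mathcal V_0$, and put $a'=a(i_p,a_{i_p}\oplus1)$. Adding the above identity at $a$ and at $a'$ collapses, for every rack $i\neq i_p$, the two per-rack sums into $H_{i_p,i}(a,m)$ — because $a$ and $a'$ agree in all coordinates except the $i_p$-th — while rack $i_p$ splits into the two terms $\lambda_{i_p,a_{i_p}}^{wu+m}\sum_g\theta^{gm}c_{i_pu+g,a}$ and $\lambda_{i_p,a_{i_p}\oplus1}^{wu+m}\sum_g\theta^{gm}c_{i_pu+g,a'}$. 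Partitioning the racks $i\neq i_p$ into $\mathcal F\setminus\{i_p\}$, $\mathcal R$, and $\mathcal U=[\bar n]\setminus(\mathcal F\cup\mathcal R)$ and moving the $\mathcal R$-terms — which are precisely the downloaded quantities $H_{i_p,j}(a,m)$ — to the right, I obtain a linear system whose unknowns are $\sum_g\theta^{gm}c_{i_pu+g,a}$, $\sum_g\theta^{gm}c_{i_pu+g,a'}$, the $\bar h-1$ symbols $H_{i_p,i}(a,m)$ for $i\in\mathcal F\setminus\{i_p\}$, and the $\bar n-\bar h-\bar d$ symbols $H_{i_p,z}(a,m)$ for $z\in\mathcal U$; that is $2+(\bar h-1)+(\bar n-\bar h-\bar d)=\bar n-\bar d+1=\bar r$ unknowns against the $\bar r$ equations indexed by $w\in[\bar r]$.

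The next step is nonsingularity. The coefficient matrix has entries $\lambda^{wu+m}$, where $\lambda$ runs over the $\bar r$ distinct values $\{\lambda_{i_p,a_{i_p}},\lambda_{i_p,a_{i_p}\oplus1}\}\cup\{\lambda_{i,a_i}:i\in\mathcal F\setminus\{i_p\}\}\cup\{\lambda_{z,a_z}:z\in\mathcal U\}$, so factoring $\lambda^m$ out of each column leaves a Vandermonde matrix in the entries $\lambda^u$. Since $\lambda_{i,j}=\xi^{2i+j}$ with $|\mathbb F|\ge 2n+1$, the powers $\lambda_{i,j}^u$ are pairwise distinct over $(i,j)\in[\bar n]\times\{0,1\}$ — the same fact invoked in Section \ref{section-th2} — so the determinant is nonzero; host rack $i_p$ inverts the system and recovers all $\bar r$ unknowns, in particular the list in (1), for every admissible $a$.

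For part (2) the only change is arithmetic: when $m\in[u-v,b)$ the exponent $w$ ranges only over $[\bar r-1]$, so the system is one equation short. Having host rack $i_p$ download $H_{i_p,j}(a,m)$ from one additional helper rack (i.e.\ using $\mathcal R'$ with $|\mathcal R'|=\bar d+1$) moves one $\mathcal U$-symbol to the known side, leaving a square $(\bar r-1)\times(\bar r-1)$ system to which the same Vandermonde argument applies; for $m\in[0,u-v)$ the mechanism of part (1) is reused verbatim. I do not expect a genuine obstacle anywhere: the most error-prone point is simply the dimension count — checking that $\bar s=2$ yields exactly two host-rack coordinates per pair, so that the system is \emph{exactly} square — together with the bookkeeping of which quantities are known versus unknown; and confining $a$ to $a|_S\in\mathcal V_0$ costs nothing here, since each coordinate is treated in isolation. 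In essence this is the $\bar s=2$ specialisation of Lemma \ref{lm1}, carried out per coordinate $a$ without the auxiliary stacking of instances used in Theorem \ref{th2}.
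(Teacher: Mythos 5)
Your proposal is correct and follows essentially the same route as the paper's proof: pair the parity-check equations at $a$ and $a(i_p,a_{i_p}\oplus 1)$, move the downloaded helper-rack sums $H_{i_p,j}(a,m)$ to the right-hand side, and invert the resulting square $\bar r\times\bar r$ system via the diagonal-times-Vandermonde factorization with the distinct values $\lambda_{i,j}^u$. Your explicit dimension count for part (2) (one fewer equation for $m\in[u-v,b)$, compensated by enlarging the helper set to $\bar d+1$ racks so the system stays square) is exactly the "similar" argument the paper leaves implicit.
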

\begin{proof}
	We only prove conclusion 1),  conclusion 2) can
	be proved similarly. Let $t=\bar{t}u+m\in[r]$. Since $m\in[b]$ and $b\in[1,u-v]$, we have $m\in[0,u-v)$, then $\bar{t}\in[\bar{r}]$. Since $\theta^u=1$, from
	the code $\mathcal{C}$ defined in \eqref{e21}  we can get
	\begin{align*}
		\sum\limits_{i=0}^{\bar{n}-1}\lambda_{i,a_i}^{\bar{t}u+m}\sum\limits_{g=0}^{u-1}\theta^{gm}c_{iu+g,a}=0,
	\end{align*}
	where $a\in[l]$, $\bar{t}\in[\bar{r}]$ and $m\in[b]$, which implies
	\begin{align*}
		&\lambda_{i_p,a_{i_p}}^{\bar{t}u+m}\sum\limits_{g=0}^{u-1}\theta^{gm}c_{i_pu+g,a}+
		\sum\limits_{i\in{\cal F}\setminus\{i_p\}}\lambda_{i,a_{i}}^{\bar{t}u+m}\sum\limits_{g=0}^{u-1}\theta^{gm}c_{iu+g,a}\\
		&+\sum\limits_{j\in{\cal R}}\lambda_{j,a_{j}}^{\bar{t}u+m}\sum\limits_{g=0}^{u-1}\theta^{gm}c_{ju+g,a
		}+\sum\limits_{z\in{\cal U}}\lambda_{z,a_{z}}^{\bar{t}u+m}\sum\limits_{g=0}^{u-1}\theta^{gm}c_{zu+g,a}=0,
	\end{align*}
	\noindent and
	\begin{align*}
		&\lambda_{i_p,a_{i_p}\oplus
			1}^{\bar{t}u+m}\sum\limits_{g=0}^{u-1}\theta^{gm}c_{i_pu+g,a(i_p,a_{i_p}\oplus
			1)}+\sum\limits_{i\in{\cal F}\setminus\{i_p\}}\lambda_{i,a_{i}}^{\bar{t}u+m}\sum\limits_{g=0}^{u-1}\theta^{gm}c_{iu+g,a(i_p,a_{i_p}\oplus
			1)}\\
		&+\sum\limits_{j\in{\cal R}}\lambda_{j,a_{j}}^{\bar{t}u+m}\sum\limits_{g=0}^{u-1}\theta^{gm}c_{ju+g,a(i_p,a_{i_p}\oplus
			1)}+\sum\limits_{z\in{\cal U}}\lambda_{z,a_{z}}^{\bar{t}u+m}\sum\limits_{g=0}^{u-1}\theta^{gm}c_{zu+g,a(i_p,a_{i_p}\oplus
			1)}=0.
	\end{align*}

	\noindent Summing the two equations above yields
	
	$\displaystyle\lambda_{i_p,a_{i_p}}^{\bar{t}u+m}\sum\limits_{g=0}^{u-1}\theta^{gm}c_{i_pu+g,a}+\lambda_{i_p,a_{i_p}\oplus
		1}^{\bar{t}u+m}\sum\limits_{g=0}^{u-1}\theta^{gm}c_{i_pu+g,a(i_p,a_{i_p}\oplus1)}+\sum\limits_{i\in{\cal F}\setminus\{i_p\}}\lambda_{i,a_{i}}^{\bar{t}u+m}(\sum\limits_{g=0}^{u-1}\theta^{gm}c_{iu+g,a}
	+\sum\limits_{g=0}^{u-1}\theta^{gm}c_{iu+g,a(i_p,a_{i_p}\oplus 1)})$
	
	\begin{equation} \label{4-e2}
	+\sum\limits_{z\in{\cal U}}\lambda_{z,a_{z}}^{\bar{t}u+m}(\sum\limits_{g=0}^{u-1}\theta^{gm}c_{zu+g,a}+\sum\limits_{g=0}^{u-1}\theta^{gm}c_{zu+g,a(i_p,a_{i_p}\oplus
			1)})=-\sum\limits_{j\in{\cal R}}\lambda_{j,a_{j}}^{\bar{t}u+m}(\sum\limits_{g=0}^{u-1}\theta^{gm}c_{ju+g,a}+\sum\limits_{g=0}^{u-1}\theta^{gm}c_{ju+g,a(i_p,a_{i_p}\oplus
			1)}),
	\end{equation}
	where $\bar{t}\in[\bar{r}]$, $m\in[b]$, $a\in[l]$ with
	$a|_{\mathcal{S}_{\lfloor p/(\bar{h}-\delta+1)\rfloor}}\in {\mathcal V}_0$. Let
	\begin{align*}
		{\bf v}_{i_p,a,m}^\top=&[\sum\limits_{g=0}^{u-1}\theta^{gm}c_{i_pu+g,a},
		\sum\limits_{g=0}^{u-1}\theta^{gm}c_{i_pu+g,a(i_p,a_{i_p}\oplus1)},H_{i_p,i_0}(a,m),\ldots,
		H_{i_p,i_{p-1}}(a,m),\\
		&H_{i_p,i_{p+1}}(a,m),\ldots,H_{i_p,i_{\bar{h}-1}}(a,m),H_{i_p,z_0}(a,m),\ldots,H_{i_p,z_{\bar{n}-\bar{h}-\bar{d}-1}}(a,m)]\in
		\mathbb{F}^{\bar{r}},
	\end{align*}
	\noindent and
	\begin{center}
		${\bf u}_{i_p,a,m}^\top=[-H_{i_p,j_0}(a,m),-H_{i_p,j_1}(a,m),\ldots,-H_{i_p,j_{\bar{k}}}(a,m)]\in
		\mathbb{F}^{\bar{k}+1}$
	\end{center}
	\noindent for ${\cal R}=\{j_0,j_1,\ldots,j_{\bar{k}}\}$, then (\ref{4-e2})
	can be rewritten as
	\begin{center}
		${\bf M}_{i_p,a,m}{\bf v}_{i_p,a,m}={\bf M}'_{i_p,a,m}{\bf u}_{i_p,a,m}$,
	\end{center}
	\noindent where
	{\footnotesize 
		\[{\bf M}_{i_p,a,m}=
		\left(
		\begin{array}{ccccccc}
			\lambda_{i_p,a_{i_p}}^{m} \ \ \ \ \lambda_{i_p,a_{i_p}\oplus1}^{m} &
			\ \lambda_{i_0,a_{i_0}}^{m}  \ldots
			\lambda_{i_{p-1},a_{i_{p-1}}}^{m} \ &
			\lambda_{i_{p+1},a_{i_{p+1}}}^{m} \ldots
			\lambda_{i_{\bar{h}-1},a_{i_{\bar{h}-1}}}^{m} &
			\lambda_{z_0,a_{z_0}}^{m}\ldots \lambda_{z_{\bar{n}-\bar{h}-\bar{d}-1},a_{z_{\bar{n}-\bar{h}-\bar{d}-1}}}^{m}\\
			\lambda_{i_p,a_{i_p}}^{u+m} \ \ \ \lambda_{i_p,a_{i_p}\oplus1}^{u+m} &
			\ \ \lambda_{i_0,a_{i_0}}^{u+m} \ldots
			\lambda_{i_{p-1},a_{i_{p-1}}}^{u+m} &
			\lambda_{i_{p+1},a_{i_{p+1}}}^{u+m} \ldots
			\lambda_{i_{\bar{h}-1},a_{i_{\bar{h}-1}}}^{u+m} &
			\lambda_{z_0,a_{z_0}}^{u+m}\ldots \lambda_{z_{\bar{n}-\bar{h}-\bar{d}-1},a_{z_{\bar{n}-\bar{h}-\bar{d}-1}}}^{u+m}\\
			\ \ \vdots \indent \indent\indent \indent   \vdots  & \vdots \indent\indent \indent \indent  \vdots & \vdots
			\indent\indent\indent\indent\indent\indent \vdots & \vdots
			\indent\indent \indent\indent \indent \vdots \indent\indent \indent\indent\indent \indent \ \ \\
			\lambda_{i_p,a_{i_p}}^{(\bar{r}-1)u+m}
			\lambda_{i_p,a_{i_p}\oplus1}^{(\bar{r}-1)u+m} &
			\lambda_{i_0,a_{i_0}}^{(\bar{r}-1)u+m}  \ldots
			\lambda_{i_{p-1},a_{i_{p-1}}}^{(\bar{r}-1)u+m} &
			\lambda_{i_{p+1},a_{i_{p+1}}}^{(\bar{r}-1)u+m} \ldots
			\lambda_{i_{\bar{h}-1},a_{i_{\bar{h}-1}}}^{(\bar{r}-1)u+m} &
			\lambda_{z_0,a_{z_0}}^{(\bar{r}-1)u+m}\ldots \lambda_{z_{\bar{n}-\bar{h}-\bar{d}-1},a_{z_{\bar{n}-\bar{h}-\bar{d}-1}}}^{(\bar{r}-1)u+m}\\
		\end{array}
		\right),
		\]
	}
	\noindent and
	{\footnotesize
		\[ {\bf M}'_{i_p,a,m}=
		\left(
		\begin{array}{ccccccc}
			\lambda_{j_0,a_{j_0}}^{m} & \lambda_{j_1,a_{j_1}}^{m} & \cdots &
			\lambda_{j_{\bar{k}},a_{j_{\bar{k}}}}^{m}\\
			\lambda_{j_0,a_{j_0}}^{u+m} & \lambda_{j_1,a_{j_1}}^{u+m} & \cdots &
			\lambda_{j_{\bar{k}},a_{j_{\bar{k}}}}^{u+m}\\
			\vdots & \vdots & \ddots &
			\vdots\\
			\lambda_{j_0,a_{j_0}}^{(\bar{r}-1)u+m} &
			\lambda_{j_1,a_{j_1}}^{(\bar{r}-1)u+m} & \cdots &
			\lambda_{j_{\bar{k}},a_{j_{\bar{k}}}}^{(\bar{r}-1)u+m}\\
		\end{array}
		\right).
		\]
	}
	
	Note that  host rack $i_p$ downloads
	\begin{center}
		$H_{i_p,j}(a,m)=\sum\limits_{g=0}^{u-1}\theta^{gm}c_{ju+g,a}+\sum\limits_{g=0}^{u-1}\theta^{gm}c_{ju+g,a(i_p,a_{i_p}\oplus1)}$, $m\in[b]$, 
		$a\in[l]$ with $a|_{\mathcal{S}_{\lfloor
				p/(\bar{h}-\delta+1)\rfloor}}\in {\mathcal V}_0$
	\end{center}
	\noindent from all helper racks $j\in{\cal R}$, thus ${\bf u}_{i_p,a,m}$ is
	obtained. The matrix ${\bf M}_{i_p,a,m}$ can be rewritten as
	${\bf M}_1$diag$(\lambda_{i_p,a_{i_p}}^{m},\lambda_{i_p,a_{i_p}\oplus1}^{m},$
	$\ldots, \lambda_{z_0,a_{z_0}}^{m},\ldots,\lambda_{z_{\bar{n}-\bar{h}-\bar{d}-1},a_{z_{\bar{n}-\bar{h}-\bar{d}-1}}}^{m})$,
	where
	
	{\footnotesize 
		\[{\bf M}_1=
		\left(
		\begin{array}{ccccccc}
			1 \indent \indent \indent 1\indent\indent\indent &
			\ 1 \indent  \ldots\indent 	1 \indent\indent\indent \ &
			1\indent\indent  \ldots\indent \indent  1 \indent\indent\indent &	1\indent  \ldots \indent  1\indent\indent\indent\indent\indent\indent\indent\indent \\
			\lambda_{i_p,a_{i_p}}^{u} \ \ \ \lambda_{i_p,a_{i_p}\oplus1}^{u} &
			\ \ \lambda_{i_0,a_{i_0}}^{u} \ldots
			\lambda_{i_{p-1},a_{i_{p-1}}}^{u} &
			\lambda_{i_{p+1},a_{i_{p+1}}}^{u} \ldots
			\lambda_{i_{\bar{h}-1},a_{i_{\bar{h}-1}}}^{u} &
			\lambda_{z_0,a_{z_0}}^{u}\ldots \lambda_{z_{\bar{n}-\bar{h}-\bar{d}-1},a_{z_{\bar{n}-\bar{h}-\bar{d}-1}}}^{u}\\
			\ \ \vdots \indent \indent\indent    \vdots\indent  & \vdots \indent\indent \indent \indent  \vdots & \vdots
			\indent\indent\indent\indent\indent\indent \vdots & \vdots
			\indent\indent \indent\indent \indent \vdots \indent\indent \indent\indent\indent \indent \ \ \\
			\lambda_{i_p,a_{i_p}}^{(\bar{r}-1)u} \ \
			\lambda_{i_p,a_{i_p}\oplus1}^{(\bar{r}-1)u} &
			\lambda_{i_0,a_{i_0}}^{(\bar{r}-1)u}  \ldots
			\lambda_{i_{p-1},a_{i_{p-1}}}^{(\bar{r}-1)u} &
			\lambda_{i_{p+1},a_{i_{p+1}}}^{(\bar{r}-1)u} \ldots
			\lambda_{i_{\bar{h}-1},a_{i_{\bar{h}-1}}}^{(\bar{r}-1)u} &
			\lambda_{z_0,a_{z_0}}^{(\bar{r}-1)u+m}\ldots \lambda_{z_{\bar{n}-\bar{h}-\bar{d}-1},a_{z_{\bar{n}-\bar{h}-\bar{d}-1}}}^{(\bar{r}-1)u}\\
		\end{array}
		\right),
		\]
	}	
\begin{spacing}{1.1}	
Since $\lambda_{i,j}=\xi^{2i+j}$, then we have $\lambda_{i,j}^u\neq
\lambda_{i',j'}^u$ for all $i,i'\in[\bar{n}]$, $j,j'\in\{0,1\}$ with
$(i,j)\neq(i',j')$,  the Vandermonde matrix ${\bf M}_1$ are invertible
for each $a\in[l]$ with $a|_{\mathcal{S}_{\lfloor	p/(\bar{h}-\delta+1)\rfloor}}\in {\mathcal V}_0$. It follows that the matrix ${\bf M}_{i_p,a,m}$ is also invertible, therefore
\end{spacing}\begin{align*}
		{\bf v}_{i_p,a,m}={\bf M}_{i_p,a,m}^{-1}M'_{i_p,a,m}{\bf u}_{i_p,a,m}, m\in[b],
		a\in[l] \ {\rm with} \ a|_{\mathcal{S}_{\lfloor
				p/(\bar{h}-\delta+1)\rfloor}}\in {\mathcal V}_0.
\end{align*}
	\noindent In other words,  host rack $i_p$ can recover
	\begin{center}
		$\sum\limits_{g=0}^{u-1}\theta^{gm}c_{i_pu+g,a}$,
		$\sum\limits_{g=0}^{u-1}\theta^{gm}c_{i_pu+g,a(i_p,a_{i_p}\oplus1)}, m\in[b], a\in[l]$
		with $a|_{\mathcal{S}_{\lfloor p/(\bar{h}-\delta+1)\rfloor}}\in {\mathcal V}_0$,
	\end{center}
	\noindent and
	\begin{center}
		$H_{i_p,i}(a,m)=\sum\limits_{g=0}^{u-1}\theta^{gm}c_{iu+g,a}+\sum\limits_{g=0}^{u-1}\theta^{gm}c_{iu+g,a(i_p,a_{i_p}\oplus1)}$,
		$a\in[l]$ with $a|_{\mathcal{S}_{\lfloor
				p/(\bar{h}-\delta+1)\rfloor}}\in {\mathcal V}_0$
	\end{center}
	\noindent for any $i\in{\cal F}\setminus\{i_p\}$. 
\end{proof}

In regard to the partially cooperative phase, we have the following result.

\begin{lemma} \label{lm4-3}
	For any host rack $p\in [0,\bar{h})$ and any given integer $b\in[1,u]$, based
	on the data $\{\sum\limits_{g=0}^{u-1}\theta^{gm}c_{i_pu+g,a},$
	$\sum\limits_{g=0}^{u-1}\theta^{gm}c_{i_pu+g,a(i_p,a_{i_p}\oplus1)}
	: a\in[l],a|_{\mathcal{S}_{\lfloor p/(\bar{h}-\delta+1)\rfloor}}\in
	{\mathcal V}_0,m\in[b]\}$ from  lemma \ref{lm4-2}, some linear combinations of symbols in the $p$-th host
	rack
	$\{\sum\limits_{g=0}^{u-1}\theta^{gm}c_{i_pu+g,a}:a\in[l],m\in[b]\}$
	can be recovered if it downloads the data
	\begin{center}
		$H_{i,i_p}(a,m)=\sum\limits_{g=0}^{u-1}\theta^{gm}c_{i_pu+g,a}+\sum\limits_{g=0}^{u-1}\theta^{gm}c_{i_pu+g,a(i,a_{i}\oplus1)}$, 
		$a\in[l],a|_{\mathcal{S}_{\lfloor p/(\bar{h}-\delta+1)\rfloor}}\in
		{\mathcal V}_0$, $m\in[b]$
	\end{center}
	\noindent from $\bar{h}-\delta$ host racks $i\in \mathcal{S}_{\lfloor
		p/(\bar{h}-\delta+1)\rfloor}\setminus\{i_p\}$, where $\mathcal{S}_{\lfloor
		p/(\bar{h}-\delta+1)\rfloor}$ is defined in $(\ref{lm4-0})$.
\end{lemma}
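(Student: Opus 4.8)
The plan is to prove this coordinate‑by‑coordinate in $m$, reducing the whole statement to the covering property of Hamming cosets furnished by Lemma~\ref{lm4-1}. Fix $m\in[b]$ and abbreviate $D(a):=\sum_{g=0}^{u-1}\theta^{gm}c_{i_pu+g,a}$ for $a\in[l]$; the goal is to show that host rack $i_p$ can recover $D(a)$ for \emph{every} $a\in[l]$, not merely for those $a$ with $a|_{S_{\lfloor p/(\bar h-\delta+1)\rfloor}}\in\mathcal V_0$. Write $S:=S_{\lfloor p/(\bar h-\delta+1)\rfloor}$ for the group containing $i_p$, list its $n':=\bar h-\delta+1=2^{\bar m}-1$ members as $j_0<j_1<\cdots<j_{n'-1}$ in the order that defines the puncturing $a\mapsto a|_S$, and let $r_0$ be the position with $i_p=j_{r_0}$.

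First I would record the bookkeeping that links coordinate flips to coset shifts: since $i\in S$, flipping $a_i$ alters exactly one coordinate of $a|_S$, namely the one in position $r$ when $i=j_r$; hence by the definition \eqref{4-e1} of the cosets, $a|_S\in\mathcal V_0$ forces $a(i,a_i\oplus1)|_S\in\mathcal V_{r+1}$. Consequently Lemma~\ref{lm4-2} already delivers $D(a)$ for all $a$ with $a|_S\in\mathcal V_0$, and — because Lemma~\ref{lm4-2} also recovers $\sum_{g}\theta^{gm}c_{i_pu+g,a(i_p,a_{i_p}\oplus1)}=D(a(i_p,a_{i_p}\oplus1))$ — it additionally delivers $D(a')$ for all $a'$ with $a'|_S\in\mathcal V_{r_0+1}$.

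Next I would process the cooperative transmissions. For each $i=j_r\in S\setminus\{i_p\}$, host rack $i_p$ downloads $H_{i,i_p}(a,m)=D(a)+D(a(i,a_i\oplus1))$ for every $a$ with $a|_S\in\mathcal V_0$; subtracting the already‑known $D(a)$ yields $D(a(i,a_i\oplus1))$, i.e.\ $D(a')$ for every $a'$ with $a'|_S\in\mathcal V_{r+1}$. Letting $i$ run over $S\setminus\{i_p\}$ makes $r$ run over $\{0,\dots,n'-1\}\setminus\{r_0\}$, so these $\bar h-\delta$ transmissions supply $D$ on all cosets $\mathcal V_{r+1}$ with $r\neq r_0$. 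Combining with the previous paragraph, host rack $i_p$ now knows $D(a)$ whenever $a|_S\in\mathcal V_0\cup\mathcal V_1\cup\cdots\cup\mathcal V_{n'}$, which by Lemma~\ref{lm4-1} is all of $[2^{n'}]$; since the coordinates of $a$ outside $S$ are unconstrained, $i_p$ thereby recovers $D(a)$ for all $a\in[l]$. Running this for each $m\in[b]$ finishes the argument, and I would remark that nothing here couples distinct values of $m$, so the range $b\in[1,u]$ enters only through Lemma~\ref{lm4-2}, which supplies the inputs.

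The main obstacle, and essentially the only nontrivial point, is the coset bookkeeping of the second paragraph: one must carefully align the flip map $a\mapsto a(i,a_i\oplus1)$ with the coset‑index shift $\mathcal V_0\mapsto\mathcal V_{r+1}$ induced by the puncturing $a\mapsto a|_S$, and then check that the single coset $\mathcal V_{r_0+1}$ obtained ``for free'' from Lemma~\ref{lm4-2} together with the $n'-1$ cosets $\{\mathcal V_{r+1}:r\neq r_0\}$ obtained from the $\bar h-\delta$ cooperative transmissions exactly exhausts $\{\mathcal V_1,\dots,\mathcal V_{n'}\}$. This is precisely what the choice $\bar h-\delta+2=2^{\bar m}$ and the grouping \eqref{lm4-0} into blocks of size $\bar h-\delta+1$ were set up to guarantee, so once the indexing is pinned down the remaining steps are routine linear algebra.
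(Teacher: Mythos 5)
Your proposal is correct and follows essentially the same route as the paper's proof: recover $D(a(i,a_i\oplus1))$ by subtracting the known $D(a)$ (with $a|_S\in\mathcal V_0$) from the cooperatively downloaded sums $H_{i,i_p}(a,m)$, combine these with the two pieces already supplied by Lemma \ref{lm4-2}, and invoke the partition property of Lemma \ref{lm4-1} to conclude that every coset, hence every $a\in[l]$, is covered. Your version merely makes the coset-index bookkeeping (flip at position $r$ sends $\mathcal V_0$ to $\mathcal V_{r+1}$, with $\mathcal V_{r_0+1}$ obtained for free) explicit, which the paper leaves implicit.
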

\begin{proof}
	For each $p\in[0,\bar{h})$, the data
	$\{\sum\limits_{g=0}^{u-1}\theta^{gm}c_{i_pu+g,a(i,a_i\oplus1)}:a\in[l],a|_{\mathcal{S}_{\lfloor
			p/(\bar{h}-\delta+1)\rfloor}}\in {\mathcal V}_0, m\in[b]\}$ can be recovered by the
	known data
	\begin{center}
		$\{\sum\limits_{g=0}^{u-1}\theta^{gm}c_{i_pu+g,a}
		:a\in[l],a|_{\mathcal{S}_{\lfloor p/(\bar{h}-\delta+1)\rfloor}}\in
		{\mathcal V}_0, m\in[b]\}$,
	\end{center}
	\noindent and the data
	\begin{center}
		$\{\sum\limits_{g=0}^{u-1}\theta^{gm}c_{i_pu+g,a}+\sum\limits_{g=0}^{u-1}\theta^{gm}c_{i_pu+g,a(i,a_{i}\oplus1)}$,
		$a\in[l],a|_{\mathcal{S}_{\lfloor p/(\bar{h}-\delta+1)\rfloor}}\in
		{\mathcal V}_0, m\in[b]\}$
	\end{center}
	\noindent downloaded from host rack $i\in \mathcal{S}_{\lfloor
		p/(\bar{h}-\delta+1)\rfloor}\setminus\{i_p\}$. Thus, for any
	$p\in[0,\bar{h})$, host rack $i_p$ can obtain the
	data
	\begin{align*}
		&\{\sum\limits_{g=0}^{u-1}\theta^{gm}c_{i_pu+g,a},\sum\limits_{g=0}^{u-1}\theta^{gm}c_{i_pu+g,a(i_p,a_{i_p}\oplus1)}
		:a\in[l],a|_{\mathcal{S}_{\lfloor p/(\bar{h}-\delta+1)\rfloor}}\in
		{\mathcal V}_0,m\in[b]\}\\
		&\bigcup(\bigcup_{i\in \mathcal{S}_{\lfloor
				p/(\bar{h}-\delta+1)\rfloor}\setminus\{i_p\}}\{\sum\limits_{g=0}^{u-1}\theta^{gm}c_{i_pu+g,a(i,a_i\oplus1)}:a\in[l],a|_{\mathcal{S}_{\lfloor
				p/(\bar{h}-\delta+1)\rfloor}}\in {\mathcal V}_0,m\in[b]\})\\
		&=\{\sum\limits_{g=0}^{u-1}\theta^{gm}c_{i_pu+g,a}
		:a\in[l], m\in[b]\},
	\end{align*}
	\noindent where the equality holds
	from Lemma \ref{lm4-1}.
\end{proof}

We are now in a position to prove Theorem \ref{th3}.

{\bf The Proof of Theorem \ref{th3}: } We first recover the $b$ failed nodes
in racks $i_p(p=0,1,\ldots,\bar{h}-1)$. Assume that the index
set of the $b$ failed nodes in host rack $i_p$ is
${\cal I}=\{g_0,g_1,\ldots,g_{b-1}\}$ and ${\cal J}=[u]\setminus{\cal I}$. For any $a\in[l]$ and $m\in[b]$, by Lemma \ref{lm4-3} we know the data
$\Delta_{m,a}=\sum\limits_{g=0}^{u-1}\theta^{gm}c_{i_pu+g,a}$, thus
\begin{center}
	$\sum\limits_{g\in{\cal I}}\theta^{gm}c_{i_pu+g,a}=\Delta_{m,a}-\sum\limits_{g\in{\cal J}}\theta^{gm}c_{i_pu+g,a}$.
\end{center}
\noindent When $m$ runs over $\{0,1,\ldots,b-1\}$, we get
\noindent\noindent
\begin{equation} \label{xx1} {\left(\begin{array}{llll}
			1 & 1 & \cdots &  1\\
			\theta^{g_0} & \theta^{g_1} & \cdots &  \theta^{g_{b-1}}\\
			\ \ \vdots &  \ \ \vdots & \ddots &  \ \  \vdots\\
			\theta^{g_0(b-1)} & \theta^{g_1(b-1)} & \cdots &  \theta^{g_{b-1}(b-1)}\\
		\end{array}
		\right)} {\left(\begin{array}{l}  c_{i_pu+g_0,a} \\
			c_{i_pu+g_1,a} \\ \ \ \ \ \ \vdots \\
			c_{i_pu+g_{b-1},a}
		\end{array}
		\right)}={\left(\begin{array}{llll}
			\Delta_{0,a}-\sum\limits_{g\in{\cal J}}c_{i_pu+g,a} \\
			\Delta_{1,a}-\sum\limits_{g\in{\cal J}}\theta^{g}c_{i_pu+g,a}
			\\ \ \ \indent \indent  \ \vdots\\ \Delta_{b-1,a}-\sum\limits_{g\in{\cal J}}\theta^{g(b-1)}c_{i_pu+g,a}, \\
		\end{array}
		\right)},
\end{equation}

\noindent where $a\in[l]$. Since $\theta^{g_i}\neq
\theta^{g_{i'}}$ for $g_i,g_{i'}\in[u]$ with $g_i\neq g_{i'}$, then the
Vandermonde matrix on the left side of the linear system (\ref{xx1})
is invertible,  we can recover $\{c_{i_pu+g_0,a},
c_{i_pu+g_1,a}, \ldots, $ $c_{i_pu+g_{b-1},a}:
a\in[l]\}$, i.e., the $b$ failed nodes in rack
$i_p$. 

In the following, we will discuss the optimality of its
repair bandwidth in two cases depending on the value of $b$.

Case 1: $b\in[1,u-v]$. We show that the code achieve optimal repair
bandwidth for this case. Note that $|{\mathcal V}_i|=|{\mathcal V}_0|$ for $1\leq i\leq
\bar{h}-\delta+1$, and ${\mathcal V}_0,{\mathcal V}_1,\ldots,{\mathcal V}_{\bar{h}-\delta+1}$ forms a
partition of $[0,2^{\bar{h}-\delta+1})$ from Lemma \ref{lm4-1}, which means
\begin{center}
	$\displaystyle|\{a:a\in[l],a|_{\mathcal{S}_p}\in
	{\mathcal V}_0\}|=\frac{l}{\bar{h}-\delta+2}$,
	$\displaystyle\forall p\in[0,\frac{\bar{h}}{\bar{h}-\delta+1})$,
\end{center}
\noindent where $\mathcal{S}_p$ is defined in \eqref{lm4-0}.  In the download phase, host rack $i_p$ ($p\in[0,\bar{h})$) downloads
$\{H_{i_p,j}(a,m):a\in[l],a|_{\mathcal{S}_{\lfloor
		p/(\bar{h}-\delta+1)\rfloor}}$ $\in {\mathcal V}_0, m\in[b]\}$ from every helper rack $j\in {\cal R}$, i.e.,
$bl/(2+\bar{h}-\delta)$ symbols from each of the $\bar{d}$ helper
racks. Consequently, the repair bandwidth at this phase is
\begin{align}\label{4-e4}
	\displaystyle\gamma_{31}=\frac{\bar{d}hl}{2+\bar{h}-\delta}.
\end{align}

Next, during the cooperative phase, each host rack
$i_p$ downloads $H_{i,i_p}(a,m)$, $a\in[l]$ with
$a|_{\mathcal{S}_{\lfloor p/(\bar{h}-\delta+1)\rfloor}}\in {\mathcal V}_0$, $m\in[b]$
from every host rack $i\in \mathcal{S}_{\lfloor p/(\bar{h}-\delta+1)\rfloor}\setminus\{i_p\}$, i.e., $bl/(2+\bar{h}-\delta)$ symbols from each of other
$\bar{h}-\delta$ host racks. Herein, we would like to emphasize that these data
are accessible from Lemma \ref{lm4-3}. So,
the repair bandwidth at this phase is
\begin{align}\label{4-e5}
	\displaystyle\gamma_{32}=\bar{h}(\bar{h}-\delta)\cdot\frac{bl}{2+\bar{h}-\delta}=
	\frac{h(\bar{h}-\delta)l}{2+\bar{h}-\delta}.
\end{align}
Therefore, by \eqref{4-e4} and \eqref{4-e5}, the total repair bandwidth across the two repair phase is
\begin{center}
	$\displaystyle\gamma_3=\gamma_{31}+\gamma_{32}=\frac{h(\bar{d}+\bar{h}-\delta)l}{2+\bar{h}-\delta}$,
\end{center}
\noindent which achieve minimum repair bandwidth from \eqref{e2}.

Case 2: $b\in[u-v+1,u]$. We show that the repair bandwidth of the code $\mathcal{C}$ is asymptotically optimal when $\bar{d}$ is large
enough. During the download phase, for each integer $m\in[0,u-v)$,  by conclusion 1) of Lemma \ref{lm4-2}
each host rack $i_p$ $(p\in[0,\bar{h}))$ downloads
$\{H_{i_p,j}(a,m):a\in[l],a|_{\mathcal{S}_{\lfloor
		p/(\bar{h}-\delta+1)\rfloor}}$ $\in {\mathcal V}_0\}$ from every helper rack
$j\in {\cal R}$, i.e., $(u-v)l/(2+\bar{h}-\delta)$ symbols from each of
the $\bar{d}$ helper racks. For each integer $m\in[u-v,b)$, by conclusion 2) of Lemma \ref{lm4-2}  host
rack $i_p$ downloads
$\{H_{i_p,j}(a,m):a\in[l],a|_{\mathcal{S}_{\lfloor
		p/(\bar{h}-\delta+1)\rfloor}}$ $\in {\mathcal V}_0\}$ from every the helper rack $j\in \mathcal{R}\cup\{j'\}$ where $j'\in[\bar{n}]\setminus(\mathcal{R}\cup\mathcal{F})$, i.e., $(b-u+v)l/(2+\bar{h}-\delta)$
symbols from each of the $\bar{d}+1$ helper racks. So, the
repair bandwidth at this phase is
\begin{align}\label{4-e6}
\nonumber	\displaystyle\gamma_{41}&=\bar{d}\cdot\bar{h}\cdot\frac{(u-v)l}{2+\bar{h}-\delta}+(\bar{d}+1)\cdot\bar{h}\cdot\frac{(b-u+v)l}{2+\bar{h}-\delta}\\
	&=\displaystyle\frac{\bar{d}hl}{2+\bar{h}-\delta}+\frac{\bar{h}(b-u+v)l}{2+\bar{h}-\delta}.
\end{align}
Exactly the same as discussed for case 1, the repair bandwidth at
the cooperative repair phase is
\begin{align}\label{4-e7}
	\displaystyle\gamma_{42}=\bar{h}(\bar{h}-\delta)\cdot\frac{bl}{2+\bar{h}-\delta}=
	\frac{h(\bar{h}-\delta)l}{2+\bar{h}-\delta}.
\end{align}
Therefore, by \eqref{4-e6} and \eqref{4-e7}, the total repair bandwidth across the two repair phase is
\begin{center}
	$\displaystyle\gamma_4=\gamma_{41}+\gamma_{42}=\frac{h(\bar{d}+\bar{h}-\delta)l}{2+\bar{h}-\delta}
	+\frac{\bar{h}(b-u+v)l}{2+\bar{h}-\delta}$.
\end{center}
\noindent Note that $v<u$, we have $b-u+v<b$, then
\begin{center}
	$\displaystyle\gamma_{4}=\frac{h(\bar{d}+\bar{h}-\delta)l}{2+\bar{h}-\delta}
	+\frac{\bar{h}(b-u+v)l}{2+\bar{h}-\delta}<\frac{h(\bar{d}+1+\bar{h}-\delta)l}{2+\bar{h}-\delta}$.
\end{center}
In this case, the ratio of the amount of downloaded symbols to the
optimal repair bandwidth given in \eqref{e2} is smaller than
$1+1/(\bar{d}+\bar{h}-\delta)$. Consequently, when $b>u-v$, if the number of helper racks $\bar{d}$ is sufficiently large
, the repair bandwidth of the code approaches optimal level. \qed

\subsection{Example 2}
\label{subsect-scheme-2} 

In this subsection, we present a concrete example of rack-aware MSCR codes to demonstrate our construction method.  Now let us take the  $(n=45,k=19,l= 2^{15})$ rack-aware MSPCR code to illustrate the method as follows. 

\subsubsection{ Construct an MDS array code $\mathcal{C}$ with $l=2^{15}$} Let $\mathbb{F}$ be a finite field of size 97. Let
$(n,u,k,r,d,h,b,v,\delta)=(45,3,19,26,22,12,$ $2,1,4)$, then we have
$\bar{n}=n/u=15$, $\bar{k}=(k-v)/u=6$, $\bar{r}=(r+v)/u=9$,
$\bar{d}=(d-u+b)/u=7$, $\bar{s}=\bar{d}-\bar{k}+1=2$, $\bar{h}=h/b=6$ and
$l=\bar{s}^{\bar{n}}=2^{15}$. Let $\xi$ be a primitive element of
$\mathbb{F}$ and $\theta$ be an element in $\mathbb{F}$ with multiplicative order
$u=3$. Consider an $(n=45,k=19,l=2^{15},d=22,h=12,\delta=4)$ code $\mathcal{C}=({\bf c}_{0},{\bf c}_{1},\ldots,{\bf c}_{44})$ defined by the following parity check equations over $\mathbb{F}$:
\begin{equation}\label{e13-2}
	\sum\limits_{i=0}^{14}\sum\limits_{g=0}^{2}\theta^{gt}\lambda_{i,a_i}^tc_{3i+g,a}=0,
\end{equation}
\noindent where ${\bf c}_{3i+g}=(c_{3i+g,0},c_{3i+g,1},\ldots,c_{3i+g,2^{15}-1})^\top$ is a column vector  over $\mathbb{F}$ for $i\in[45]$ and $g\in[3]$,	
$t\in[26]$, $a\in[2^{15}]$,
$\lambda_{i,a_i}=\xi^{2i+a_i}$ for $a_i=0,1$.  Similar to the discussion in the first rack-aware MSPCR codes, we can easily know that code $\mathcal{C}$ has the MDS property.  Since $t\in[26]$, taking $t=3\bar{t}+m$, we have $\bar{t}\in[9]$ when $m\in[2]$. Substituting $t=3\bar{t}+m$ into equation \eqref{e13-2}, we can get 
\begin{align}\label{e13-4}
	\sum\limits_{i=0}^{14}\lambda_{i,a_i}^{3\bar{t}+m}\sum\limits_{g=0}^{2}\theta^{gm}c_{3i+g,a}=0,
\end{align}
\noindent where $a\in[2^{15}]$, $\bar{t}\in[9]$, $m\in[2]$.

\subsubsection{ Construct a proper Hamming code ${\mathcal V}_0$} Let the index set of the
$\bar{h}$ host racks be ${\cal F}=\{0,1,\ldots,5\}$, and the index set of the $\bar{d}$ helper racks be
${\cal R}=\{6,7,\ldots,12\}$, the index set of the unconnected racks be ${\cal U}=\{13,14\}$,
and the index set of the $h$ failed nodes be
${\cal F}'=\{3i+g:i\in{\cal F}, g=0,1\}$.  Partition the host racks ${\cal F}$ into two
sets $\mathcal{S}_0=\{0, 1, 2\}$ and $\mathcal{S}_1=\{3,4,5\}$. For
$\bar{h}-\delta+\bar{s}=2^2$ , the two codewords of
$(2^2-1=3,1)$ linear binary Hamming code are ${\mathcal V}_0=\{(0,0,0),(1,1,1)\}=\{0,7\}$. Then, we
have ${\mathcal V}_1={\mathcal V}_0+(1,0,0)=\{(1,0,0),(0,1,1)\}=\{3,4\}$. Similarly, ${\mathcal V}_2=\{(0,1,0),(1,0,1)\}=\{2,5\}$ and ${\mathcal V}_3=\{(0,0,1),(1,1,0)\}=\{1,6\}$. Clearly, $\cup_{i\in[4]}{\mathcal V}_i=[8]$.

\subsubsection{ Each host rack downloads some symbols from each of the helper racks}
For host rack $p\in \mathcal{S}_0$, we can obtain from $(\ref{e13-4})$ that
\begin{align*}
		\lambda_{p,a_p}^{3\bar{t}+m}\sum\limits_{g=0}^{2}\theta^{gm}c_{3p+g,a} +\sum\limits_{i\in{\cal F}\setminus\{p\}}\lambda_{i,a_i}^{3\bar{t}+m}\sum\limits_{g=0}^{2}\theta^{gm}c_{3i+g,a}+\sum\limits_{j\in{\cal R}}\lambda_{j,a_j}^{3\bar{t}+m}\sum\limits_{g=0}^{2}\theta^{gm}c_{3j+g,a} +\sum\limits_{z\in{\cal U}}\lambda_{z,a_z}^{3\bar{t}+m}\sum\limits_{g=0}^{2}\theta^{gm}c_{3z+g,a}=0,
\end{align*}
\noindent and
\begin{align*}
	 \lambda_{p,a_p\oplus1}^{3\bar{t}+m}\sum\limits_{g=0}^{2}\theta^{gm}c_{3p+g,a(p,a_p\oplus1)}+\sum\limits_{i\in{\cal F}\setminus\{p\}}\lambda_{i,a_i}^{3\bar{t}+m}\sum\limits_{g=0}^{2}\theta^{gm}c_{3i+g,a(p,a_p\oplus1)}
	\\+\sum\limits_{j\in{\cal R}}\lambda_{j,a_j}^{3\bar{t}+m}\sum\limits_{g=0}^{2}\theta^{gm}c_{3j+g,a(p,a_p\oplus1)}+\sum\limits_{z\in{\cal U}}\lambda_{z,a_z}^{3\bar{t}+m}\sum\limits_{g=0}^{2}\theta^{gm}c_{3z+g,a(p,a_p\oplus1)}=0,
\end{align*}
\noindent which gives 
\begin{align}\label{e14-1}
	\nonumber	&\lambda_{p,a_p}^{3\bar{t}+m}\sum\limits_{g=0}^{2}\theta^{gm}c_{3p+g,a}+\lambda_{p,a_p\oplus1}^{3\bar{t}+m}\sum\limits_{g=0}^{2}\theta^{gm}c_{3p+g,a(p,a_p\oplus1)}
	 +\sum_{i\in{\cal F}\setminus\{p\}}\lambda_{i,a_i}^{3\bar{t}+m}(\sum_{g=0}^{2}\theta^{gm}c_{3i+g,a}+\sum_{g=0}^{2}\theta^{gm}c_{3i+g,a(p,a_p\oplus1)}) \\
		+&\sum_{z\in{\cal U}}\lambda_{z,a_z}^{3\bar{t}+m}(\sum_{g=0}^{2}\theta^{gm}c_{3z+g,a}+\sum_{g=0}^{2}\theta^{gm}c_{3z+g,a(p,a_p\oplus1)}) 
	=-\sum_{j\in{\cal R}}\lambda_{j,a_j}^{3\bar{t}+m}(\sum_{g=0}^{2}\theta^{gm}c_{3j+g,a}+\sum_{g=0}^{2}\theta^{gm}c_{3j+g,a(p,a_p\oplus1)}),
\end{align}
\noindent where $\bar{t}\in[9]$, $m\in[2]$, $a\in[2^{15}]$ with $(a_2,a_1,a_0)\in {\mathcal V}_0$.

Define 
\begin{align*}
	H_{p,i}(a,m)=\sum_{g=0}^{2}\theta^{gm}c_{3i+g,a}+\sum_{g=0}^{2}\theta^{gm}c_{3i+g,a(p,a_p\oplus1)} 
\end{align*}
\noindent for  $m\in[2]$, and $a\in[2^{15}]$ with $(a_2,a_1,a_0)\in {\mathcal V}_0$. Let
\begin{align*}
	&{\mathbf v}_{p,a,m}^\top\\=&[\sum\limits_{g=0}^{2}\theta^{gm}c_{3p+g,a},\sum\limits_{g=0}^{2}\theta^{gm}c_{3p+g,a(p,a_p\oplus1)},H_{p,0}(a,m), \ldots,H_{p,p-1}(a,m),H_{p,p+1}(a,m),\ldots,H_{p,5}(a,m), H_{p,13}(a,m),H_{p,14}(a,m)],
\end{align*}
\noindent and
\begin{align*}
	{\bf u}_{p,a,m}^\top=[-H_{p,6}(a,m),-H_{p,7}(a,m),\ldots,-H_{p,12}(a,m)],
\end{align*}
\noindent then equation \eqref{e14-1} can be rewritten as
\begin{align*}
	{\bf M}_{p,a,m}{\bf v}_{p,a,m}={\bf M}_{p,a,m}'{\bf u}_{p,a,m},
\end{align*}
\noindent where 
	\[{\bf M}_{p,a,m}=
	\left(
	\begin{array}{cccccccccc}
		\lambda_{p,a_p}^{m}  & \lambda_{p,a_p\oplus1}^{m} & \lambda_{0,a_0}^{m} & \ldots & \lambda_{p-1,a_{p-1}}^{m}	& \lambda_{p+1,a_{p+1}}^{m} & \ldots &
		\lambda_{5,a_5}^{m} & \lambda_{13,a_{13}}^{m}  & \lambda_{14,a_{14}}^{m}\\
		\lambda_{p,a_p}^{3+m}  & \lambda_{p,a_p\oplus1}^{3+m} & \lambda_{0,a_0}^{3+m} & \ldots & \lambda_{p-1,a_{p-1}}^{3+m}	& \lambda_{p+1,a_{p+1}}^{3+m} & \ldots &
		\lambda_{5,a_5}^{3+m} & \lambda_{13,a_{13}}^{3+m}  & \lambda_{14,a_{14}}^{3+m}\\
		\vdots  & \vdots & \vdots & \cdots & \vdots	& \vdots & \cdots &
		\vdots & \vdots & \vdots\\
		\lambda_{p,a_p}^{24+m}  & \lambda_{p,a_p\oplus1}^{24+m} & \lambda_{0,a_0}^{24+m} & \ldots & \lambda_{p-1,a_{p-1}}^{24+m}	& \lambda_{p+1,a_{p+1}}^{24+m} & \ldots &
		\lambda_{5,a_5}^{24+m} & \lambda_{13,a_{13}}^{24+m}  & \lambda_{14,a_{14}}^{24+m}\\
	\end{array}
	\right)
	\]
and
\[ {\bf M}'_{p,a,m}=
\left(
\begin{array}{ccccccc}
	\lambda_{6,a_{6}}^{m} & \lambda_{7,a_{7}}^{m} & \cdots &
	\lambda_{12,a_{12}}^{m}\\
	\lambda_{6,a_{6}}^{3+m} & \lambda_{7,a_{7}}^{3+m} & \cdots &
	\lambda_{12,a_{12}}^{3+m}\\
	\vdots & \vdots & \ddots &
	\vdots\\
	\lambda_{6,a_{6}}^{24+m} & \lambda_{7,a_{7}}^{24+m} & \cdots &
	\lambda_{12,a_{12}}^{24+m}\\
\end{array}
\right).
\]
Note that for any $m\in[2]$ and $a\in[2^{15}]$ with $(a_2,a_1,a_0)\in {\mathcal V}_0$, host rack $p$ downloads	$H_{p,j}(a,m)$
 from all helper racks $j\in{\cal R}$, thus $\mathbf{u}_{p,a,m}$ is
obtained. The matrix ${\bf M}_{p,a,m}$ can be rewritten as
${\bf M}_1$diag$(\lambda_{p,a_{p}}^{m},\lambda_{p,a_p\oplus1}^{m},\lambda_{0,a_{0}}^{m},$ $\ldots,\lambda_{p-1,a_{p-1}}^{m},\lambda_{p+1,a_{p+1}}^{m},\ldots,$ $\lambda_{5,a_{5}}^{m}, \lambda_{13,a_{13}}^{m},\lambda_{14,a_{14}}^{m})$,
where
	\[{\bf M}_1=
	\left(
	\begin{array}{cccccccccc}
		1  & 1 & 1 & \ldots & 1	& 1 & \ldots &
		1 & 1  & 1\\
		\lambda_{p,a_p}^{3}  & \lambda_{p,a_p\oplus1}^{3} & \lambda_{0,a_0}^{3} & \ldots & \lambda_{p-1,a_{p-1}}^{3}	& \lambda_{p+1,a_{p+1}}^{3} & \ldots &
		\lambda_{5,a_5}^{3} & \lambda_{13,a_{13}}^{3}  & \lambda_{14,a_{14}}^{3}\\
		\vdots  & \vdots & \vdots & \cdots & \vdots	& \vdots & \cdots &
		\vdots & \vdots & \vdots\\
		\lambda_{p,a_p}^{24}  & \lambda_{p,a_p\oplus1}^{24} & \lambda_{0,a_0}^{24} & \ldots & \lambda_{p-1,a_{p-1}}^{24}	& \lambda_{p+1,a_{p+1}}^{24} & \ldots &
		\lambda_{5,a_5}^{24} & \lambda_{13,a_{13}}^{24}  & \lambda_{14,a_{14}}^{24}\\
	\end{array}
	\right)
	\]

Since $\lambda_{i,j}=\xi^{2i+j}$, we have $\lambda_{i,j}^3\neq
\lambda_{i',j'}^3$ for all $i,i'\in[15]$, $j,j'\in[2]$ with
$(i,j)\neq(i',j')$, then Vandermonde matrix ${\bf M}_1$ are invertible
for each $a\in[2^{15}]$ with $(a_2,a_1,a_0)\in {\mathcal V}_0$. It follows that the matrix ${\bf M}_{p,a,m}$ is also invertible, therefore
\begin{align*}
	{\bf v}_{p,a,m}={\bf M}_{p,a,m}^{-1}{\bf M}'_{p,a,m}{\bf u}_{p,a,m}, m\in[2],
	a\in[2^{15}] {\rm with} \ (a_2,a_1,a_0)\in {\mathcal V}_0.
\end{align*}
\noindent In other words,  host rack $p$ can recover the data
$\{\sum\limits_{g=0}^{2}\theta^{gm}c_{3p+g,a},\sum\limits_{g=0}^{2}\theta^{gm}c_{3p+g,a(p,a_p\oplus1)}: m\in[2], a\in[2^{15}],(a_2,a_1,a_0)\in {\mathcal V}_0\}$,
and $\{H_{p,i}(a,m): m\in[2],a\in[2^{15}], (a_2,a_1,a_0)\in {\mathcal V}_0\}$ for $i\in{\cal F}\setminus\{p\}$. 

\subsubsection{ Cooperatively repair the failed nodes} For each $p\in \mathcal{S}_0$, the data
$\{\sum\limits_{g=0}^{2}\theta^{gm}c_{3p+g,a(i,a_i\oplus1)}:a\in[2^{15}],(a_2,a_1,a_0)\in {\mathcal V}_0, m\in[2]\}$ can be recovered by the
known data
\begin{center}
	$\{\sum\limits_{g=0}^{2}\theta^{gm}c_{3p+g,a}
	:a\in[2^{15}],(a_2,a_1,a_0)\in
	{\mathcal V}_0,m\in[2]\}$,
\end{center}
\noindent and the data
\begin{center}
	$\{H_{i,p}(a,m)=\sum\limits_{g=0}^{2}\theta^{gm}c_{3p+g,a}+\sum\limits_{g=0}^{2}\theta^{gm}c_{3p+g,a(i,a_{i}\oplus1)}$:
	$a\in[2^{15}],(a_2,a_1,a_0)\in
	{\mathcal V}_0,m\in[2]\}$
\end{center}
\noindent downloaded from host rack $i\in \mathcal{S}_0\setminus\{p\}$. Thus, for each
$p\in \mathcal{S}_0$, host rack $p$ can obtain the
data
\begin{align*}
	&\{\sum\limits_{g=0}^{2}\theta^{gm}c_{3p+g,a},\sum\limits_{g=0}^{2}\theta^{gm}c_{3p+g,a(p,a_p\oplus1)}
	:a\in[2^{15}],m\in[2],(a_2,a_1,a_0)\in
	{\mathcal V}_0\}\\ &
	\bigcup(\bigcup_{i\in \mathcal{S}_0\setminus\{p\}}\{\sum\limits_{g=0}^{2}\theta^{gm}c_{3p+g,a(i,a_i\oplus1)}:a\in[2^{15}],(a_2,a_1,a_0)\in {\mathcal V}_0,m\in[2]\})\\
	&=\{\sum\limits_{g=0}^{2}\theta^{gm}c_{3p+g,a}
	:a\in[2^{15}], m\in[2]\},
\end{align*}
\noindent since $\cup_{i\in[4]}{\mathcal V}_i=[8]$.

In the following, we recover the 2 failed nodes
in host rack $p\in \mathcal{S}_0$. By the assumption that the index
set of the $2$ failed nodes in rack $p$ is
${\cal I}=[2]$. For any $a\in[2^{15}]$ and $m\in[2]$, from the above we know the data
$\Delta_{m,a}=\sum\limits_{g=0}^{2}\theta^{gm}c_{3p+g,a}$, thus
\begin{align*} 
	c_{3p,a}+\theta^{m}c_{3p+1,a}=\Delta_{m,a}-\theta^{2m}c_{3p+2,a}.
\end{align*}
\noindent Taking $m\in[2]$, we can obtain
\begin{equation} \label{e15-1} 
	{\left(\begin{array}{llll}
			1 & 1 \\
			1 & \theta \\
		\end{array}
		\right)} {\left(\begin{array}{l} 
			c_{3p,a} \\
			c_{3p+1,a} \\ 
		\end{array}
		\right)}={\left(\begin{array}{llll}
			\Delta_{0,a}-c_{3p+2,a} \\
			\Delta_{1,a}-\theta^2c_{3p+2,a}
		\end{array}
		\right)},
\end{equation}    
\noindent where $a\in[2^{15}]$. Since $\theta\neq
1$,  matrix 	${\left(\begin{array}{llll}
		1 & 1 \\
		1 & \theta \\
	\end{array}
	\right)}$ in linear system \eqref{e15-1}
is invertible, then we can recover $\{c_{3p,a},
c_{3p+1,a}: a\in[2^{15}]\}$, i.e., the $2$ failed nodes in rack
$p$. 

For $p\in \mathcal{S}_1$, using the same methods, we can recover $\{c_{3p,a},
c_{3p+1,a}: a\in[2^{15}]\}$ which are stored in nodes $3p$ and $3p+1$, respectively.

In the download phase, we download $({2l}/{4})\times6\times7=14l$  symbols from the 7 helper racks, while we download $({2l}/{4})\times6\times2=6l$ in the cooperative phase. Then, during the repair process we download a total of $20l$ symbols, which achieves the lower bound in \eqref{e2}.  A numerical comparisons with existing schemes is presented in table \ref{tab4}.

\begin{table}[http!]
	\renewcommand{\arraystretch}{1.2}
	\setlength\tabcolsep{3.0pt} 
	\centering
	\caption{ The Existing Schemes And Our Schemes For $(n,u,k,d,h,b,v)=(45,3,15,22,12,2,1)$
		\label{tab4}
	} 
	\begin{tabular}{|c|c|c|c|c|c|c|c|c|}	
		\hline
		&  Sub-packetization  $l$ & \tabincell{c}{Optimal repair\\ bandwidth $\gamma$} &  \tabincell{c}{ Field \\ size $|\mathbb{F}|$} \\  
		\hline
		\cite{GDL} & $7\times2^{15}$ & $144\times 2^6$ & 97\\ 
		\hline
		Our schemes &	 $2^{15}$ & $20\times 2^6$& 97\\ 		
		\hline
	\end{tabular}
\end{table}
It is obvious that the sub-packetization of our scheme is $7$ times smaller than that of \cite{GDL}, the optimal repair bandwidth is reduced by more than a factor of $7$ compared to that of \cite{GDL}, while the fields employed in both schemes are of the same size.

\section{Conclusion}

In this paper, we considered the rake-aware MSPCR codes for repairing multiple node failures and derive the lower bound on repair bandwidth from helper racks. We presented two classes of the rake-aware MSPCR codes, and showed that they achieve optimal repair bandwidth when $b\in[1,u-v]$, and asymptotically optimal repair bandwidth when $b\in[u-v+1,u]$. When $\delta=1$, compared to known rack-aware minimum storage (fully) cooperative regenerating (MSCR) codes, the second class of our codes achieves an $\bar{h}+1$ times reduction in sub-packetization by using grouping repair method.

\begin{appendices}
\section{The proof of Theorem \ref{th1}} 
\label{app-th1}
	Let ${\cal F}$ be the index set of  host racks with size
$\bar{h}$, each of which contains $b=h/\bar{h}$ failed nodes, ${\cal R}$ be the index set of
helper racks with size $\bar{d}$, ${\cal F}_\delta$ be any
$(\delta-1)$-subset of ${\cal F}$, ${\cal R}_\delta$ be any
$(\bar{k}-\bar{h}+\delta-1)$-subset of ${\cal R}$. We consider two special
cases where the data collector connects $k$ nodes.

Case 1: There are $\bar{k}-\bar{h}+\delta-1$ relayer nodes in helper
racks of ${\cal R}_{\delta}$, $\bar{h}-\delta+1$ relayer nodes in host racks of ${\cal F}\setminus{\cal F}_\delta$, and $v$ nodes from some racks other than the helper racks and the host racks connected to the data collector. Therefore,  there are $k$ nodes connected to the data collector, where we recall the fact that a relay node can access the contents of other nodes within the same rack and $k=\bar{k}u+v$. Clearly, the $\bar{k}-\bar{h}+\delta-1$ helper racks in ${\cal R}_\delta$ contribute $(\bar{k}-\bar{h}+\delta-1)ul$ symbols to the data collector, and the $v$ nodes  contribute $vl$ symbols to the data collector.
According to the download phase of the cooperative repair scheme, apart from the  $\bar{k}-\bar{h}+\delta-1$  helper racks in ${\cal R}_{\delta}$ that have already been connected to the data collector, each host rack $i$ in ${\cal F}\setminus{\cal F}_\delta$  downloads  $\beta_1(i,j)$ symbols from each helper rack $j$ in ${\cal R}\setminus{\cal R}_{\delta}$,  where $\beta_1(i,j)$ is the number of symbols downloaded
by rack $i$ from the nodes in rack $j$ in the download phase. Without loss of generating, each host rack of ${\cal F}\setminus{\cal F}_\delta$ exchanges data only with $\bar{h}-\delta$ other host racks in ${\cal F}\setminus{\cal F}_\delta$. We know that each host rack $i$ in ${\cal F}\setminus{\cal F}_\delta$  contributes $\sum\limits_{j\in{\cal R}\setminus{\cal R}_\delta}\beta_1(i,j)+(u-b)l$ symbols to the data collector. From this observation, we can conclude that the  $\bar{h}-\delta+1$ host racks in ${\cal F}\setminus{\cal F}_\delta$  contribute $\sum\limits_{i\in{\cal F}\setminus{\cal F}_\delta}\sum\limits_{j\in{\cal R}\setminus{\cal R}_\delta}\beta_1(i,j)+(\bar{h}-\delta+1)(u-b)l$ symbols to the data collector. Therefore, the number of symbols contributed to the data collector by the connected 
$k$ nodes is
\begin{align*}
	\sum\limits_{i\in{\cal F}\setminus{\cal F}_\delta}\sum\limits_{j\in{\cal R}\setminus{\cal R}_\delta}&\beta_1(i,j) +(\bar{h}-\delta+1)(u-b)l +(\bar{k}-\bar{h}+\delta-1)ul+vl.
\end{align*}
\noindent From the MDS property of the code $\mathcal{C}$, we have
\begin{align*}
	\displaystyle\sum\limits_{i\in{\cal F}\setminus{\cal F}_\delta}\sum\limits_{j\in{\cal R}\setminus
		{\cal R}_\delta}&\beta_1(i,j)+(u-b)(\bar{h}-\delta+1)l +(\bar{k}-\bar{h}+\delta-1)ul+vl\geq |\mathbf{w}|.
\end{align*}
\noindent Note that $|\mathbf{w}|=kl$, we can get

\begin{center} 
	$\displaystyle\sum\limits_{i\in{\cal F}\setminus{\cal F}_\delta}\sum\limits_{j\in{\cal R}\setminus
		{\cal R}_\delta}\beta_1(i,j)\geq (\bar{h}-\delta+1)bl$.
\end{center}
Firstly, by adding up the left-hand side with respect to all $(\delta-1)$-subsets of ${\cal F}$, we obtain	
\begin{align*}
	{{\bar{h}-1}\choose{\delta-1}}\sum\limits_{i\in{\cal F}}\sum\limits_{j\in{\cal R}\setminus
		{\cal R}_\delta}\beta_1(i,j)=&\sum\limits_{{\cal F}_\delta\subset{\cal F},i\not\in{\cal F}_\delta}\sum\limits_{j\in{\cal R}\setminus
		{\cal R}_\delta}\sum\limits_{i\in{\cal F}}\beta_1(i,j)\\
	=&\displaystyle\sum\limits_{{\cal F}_\delta\subset{\cal F}}\sum\limits_{j\in{\cal R}\setminus
		{\cal R}_\delta}\sum\limits_{i\in{\cal F}\setminus{\cal F}_\delta}\beta_1(i,j)
	\geq\displaystyle {\bar{h}\choose{\delta-1}}(\bar{h}-\delta+1)bl.
\end{align*}
\noindent This implies that
\begin{equation}\label{e3}
	\sum\limits_{j\in{\cal R}\setminus
		{\cal R}_\delta}\sum\limits_{i\in{\cal F}}\beta_1(i,j)\geq hl.
\end{equation}

Secondly, by adding up the left-hand side with respect to all $(\bar{k}-\bar{h}+\delta-1)$-subsets of ${\cal R}$, we get
\begin{align*}
	\sum\limits_{{\cal R}_\delta\subset{\cal R}}\sum\limits_{j\in{\cal R}\setminus
		{\cal R}_\delta}\sum\limits_{i\in{\cal F}}\beta_1(i,j)=
	{{\bar{d}-1}\choose{\bar{k}-\bar{h}+\delta-1}}\sum\limits_{j\in{\cal R}}\sum\limits_{i\in{\cal F}}\beta_1(i,j) \geq{{\bar{d}}\choose{\bar{k}-\bar{h}+\delta-1}}hl,
\end{align*}
\noindent that is
\begin{equation}\label{e4}
	\sum\limits_{j\in{\cal R}}\sum\limits_{i\in{\cal F}}\beta_1(i,j)\geq
	\frac{\bar{d}hl}{\bar{d}-\bar{k}+\bar{h}-\delta+1}.
\end{equation}
Next, we aim to show that the equality in (\ref{e4}) holds if and only if
$\sum\limits_{i\in{\cal F}}\beta_1(i,j)=
\frac{hl}{\bar{d}-\bar{k}+\bar{h}-\delta+1}$. If
$\sum\limits_{i\in{\cal F}}\beta_1(i,j)=
\frac{hl}{\bar{d}-\bar{k}+\bar{h}-\delta+1}$ for every $j\in{\cal R}$, 
it is evident that the equality in (\ref{e4}) holds.
In the following, we show that if
the equality in (\ref{e4}) holds, then $\sum\limits_{i\in{\cal F}}\beta_1(i,j)=
\frac{hl}{\bar{d}-\bar{k}+\bar{h}-\delta+1}$ for each $j\in{\cal R}$.
Otherwise, assume that the equality in (\ref{e4}) holds while 
$\sum\limits_{i\in{\cal F}}\beta_1(i,j)<
\frac{hl}{\bar{d}-\bar{k}+\bar{h}-\delta+1}$ for some $j\in{\cal R}$. 
Note that the equality in (\ref{e4}) holds if and only if the equality in
(\ref{e3}) holds, that is, the sum of all symbols downloaded from any
$\bar{d}-\bar{k}+\bar{h}-\delta+1 $ helper racks is $hl$. Let
${\cal J}_1\subset{\cal R}$ with $|{\cal J}_1|=\bar{d}-\bar{k}+\bar{h}-\delta+1$ and
$j\in {\cal J}_1$. So,
$\sum\limits_{j\in{\cal J}_1}\sum\limits_{i\in{\cal F}}\beta_1(i,j)=hl$. Since
$\sum\limits_{i\in{\cal F}}\beta_1(i,j)<
\frac{hl}{\bar{d}-\bar{k}+\bar{h}-\delta+1}$, there exists
$j'\in{\cal J}_1$ such that $\sum\limits_{i\in{\cal F}}\beta_1(i,j')>
\frac{hl}{\bar{d}-\bar{k}+\bar{h}-\delta+1}$. Let ${\cal J}_2$ be another
subset of ${\cal R}$ such that $|{\cal J}_2|=\bar{d}-\bar{k}+\bar{h}-\delta+1$,
$j'\in{\cal J}_2$ and $j\not\in{\cal J}_2$. We also have
$\sum\limits_{j\in{\cal J}_2}\sum\limits_{i\in{\cal F}}\beta_1(i,j)=hl$. Set
${\cal J}_3=({\cal J}_2\setminus\{j'\})\cup\{j\}$. Then we have
$|{\cal J}_3|=\bar{d}-\bar{k}+\bar{h}-\delta+1$ and
$\sum\limits_{j\in{\cal J}_2}\sum\limits_{i\in{\cal F}}\beta_1(i,j)<hl$. This
contradicts to (\ref{e3}).

Case 2:  Let ${\cal R}_1$ be any subset of ${\cal R}$ of size $\bar{k}-1$, and
$\mathcal{S}$ be any subset of ${\cal F}\setminus\{i\}$ of size $\bar{h}-\delta$ for any given host rack $i$.
There are $\bar{k}-1$ relayer nodes in helper racks in ${\cal R}_1$, a relayer node in  host rack $i$, and $v$ nodes from some racks other than
the helper racks and the host racks connected to
the data collector, i.e., the data collector is connected to $k$ nodes.
Clearly, the $\bar{k}-1$ helper racks in ${\cal R}_1$ contribute $(\bar{k}-1)ul$ symbols to the data collector, and the $v$ nodes  contribute $vl$ symbols to the data collector.
According to the download phase of the repair scheme, apart from the  $\bar{k}-1$  helper racks in ${\cal R}_{\delta}$ that have already been connected to the data collector, host rack $i$ downloads  $\beta_1(i,j)$ symbols from each helper rack $j$ in ${\cal R}\setminus{\cal R}_1$. From the cooperative phase of the repair scheme, each host rack $i'$ in $\mathcal{S}$ sends $\beta_2(i,i')$ symbols to  host rack $i$, where $\beta_2(i,i')$ represents the number of symbols that host rack $i'$  transfer to  host rack $i$ during the cooperative phase.
We know that host rack $i$ contributes
$\sum\limits_{j\in{\cal R}\setminus{\cal R}_1}\beta_1(i,j)+\sum\limits_{i'\in\mathcal{S}}\beta_2(i,i')+(u-b)l$ symbols to the data collector. Therefore, the number of symbols contributed to the data collector by the connected 
$k$ nodes is
\begin{center}
	$\displaystyle\sum\limits_{j\in{\cal R}\setminus{\cal R}_1}\beta_1(i,j)+\sum\limits_{i'\in\mathcal{S}}\beta_2(i,i')+(u-b)l+(\bar{k}-1)ul+vl$.
\end{center}
\noindent   By the MDS property of the code $\mathcal{C}$, we have
\begin{center}
	$\displaystyle\sum\limits_{j\in{\cal R}\setminus
		{\cal R}_1}\beta_1(i,j)+\sum\limits_{i'\in\mathcal{S}}\beta_2(i,i')+(u-b)l+(\bar{k}-1)ul+vl\geq |\mathbf{w}|$,
\end{center}
\noindent that is	
\begin{center} $\displaystyle\sum\limits_{j\in{\cal R}\setminus
		{\cal R}_1}\beta_1(i,j)+\sum\limits_{i'\in\mathcal{S}}\beta_2(i,i')\geq bl$.
\end{center}
\noindent By adding up the left-hand side on all $i$ in ${\cal F}$, we
get
\begin{align*} \displaystyle\sum\limits_{i\in{\cal F}}\sum\limits_{j\in{\cal R}\setminus
		{\cal R}_1}\beta_1(i,j)+\sum\limits_{i\in{\cal F}}\sum\limits_{i'\in\mathcal{S}}\beta_2(i,i')\geq
	hl.
\end{align*}
By adding up the left-hand side on all
$(\bar{k}-1)$-subsets of ${\cal R}$, we have
\begin{align*}
	&\sum\limits_{{\cal R}_1\subset{\cal R}}\sum\limits_{j\in{\cal R}\setminus
		{\cal R}_1}\sum\limits_{i\in{\cal F}}\beta_1(i,j)+\sum\limits_{{\cal R}_1\subset{\cal R}}\sum\limits_{i\in{\cal F}}\sum\limits_{i'\in\mathcal{S}}\beta_2(i,i')\\
	=&{\bar{d}-1\choose{\bar{k}-1}}\sum\limits_{j\in{\cal R}}\sum\limits_{i\in{\cal F}}\beta_1(i,j)+ {\bar{d}\choose{\bar{k}-1}}\sum\limits_{i\in{\cal F}}\sum\limits_{i'\in\mathcal{S}}\beta_2(i,i')\\
	\geq &{\bar{d}\choose{\bar{k}-1}}hl,
\end{align*}
\noindent that is
\begin{align*}
	\displaystyle\sum\limits_{j\in{\cal R}}\sum\limits_{i\in{\cal F}}\beta_1(i,j)+
	\frac{\bar{d}}{\bar{d}-\bar{k}+1}\sum\limits_{i\in{\cal F}}\sum\limits_{i'\in\mathcal{S}}\beta_2(i,i')\geq
	\frac{\bar{d}hl}{\bar{d}-\bar{k}+1}.
\end{align*}
\noindent From the fact that the equality in (\ref{e4}) holds if and only
if $\sum\limits_{i\in{\cal F}}\beta_1(i,j)=
\frac{hl}{\bar{d}-\bar{k}+\bar{h}-\delta+1}$, we can get
\begin{align*}
	\displaystyle\frac{\bar{d}}{\bar{d}-\bar{k}+1}\displaystyle{\sum\limits_{i\in{\cal F}}}\sum\limits_{i'\in\mathcal{S}}\beta_2(i,i')\geq
	\frac{\bar{d}hl}{\bar{d}-\bar{k}+1}-\frac{\bar{d}hl}{\bar{d}-\bar{k}+\bar{h}-\delta+1},
\end{align*}that is
\begin{equation}\label{e5}
	\sum\limits_{i\in{\cal F}}\sum\limits_{i'\in\mathcal{S}}\beta_2(i,i')\geq
	\frac{h(\bar{h}-\delta)l}{\bar{d}-\bar{k}+\bar{h}-\delta+1}.
\end{equation} 
Combining (\ref{e4}) with (\ref{e5}), we can obtain
\begin{align}\label{e5-1}
	\sum\limits_{j\in{\cal R}}\sum\limits_{i\in{\cal F}}\beta_1(i,j)+\sum\limits_{i\in{\cal F}}\sum\limits_{i'\in\mathcal{S}}\beta_2(i,i')\geq
	\frac{h(\bar{d}+\bar{h}-\delta)l}{\bar{d}-\bar{k}+\bar{h}-\delta+1}.
\end{align} 

Next, we aim to show that when $\beta_1(i,j)=\beta_1$ and $\beta_2(i,i')=\beta_2$ for any $j\in\mathcal{R}$, $i,i'\in\mathcal{F}$ with $i\not=i'$,  the equality in \eqref{e5-1} holds if and only if $\beta_1=\beta_2=
\frac{bl}{\bar{d}-\bar{k}+\bar{h}-\delta+1}$. If $\beta_1=\beta_2=
\frac{bl}{\bar{d}-\bar{k}+\bar{h}-\delta+1}$. it is obvious that the equality in \eqref{e5-1} holds. Next, we show that if
the equality in \eqref{e5-1} holds, then $\beta_1=\beta_2=
\frac{bl}{\bar{d}-\bar{k}+\bar{h}-\delta+1}$. From \eqref{e4} and \eqref{e5}, we have 
$\sum_{j\in{\cal R}}\sum_{i\in{\cal F}}\beta_1(i,j)=\frac{\bar{d}hl}{\bar{d}-\bar{k}+\bar{h}-\delta+1}$ and 	$\sum_{i'\in\mathcal{S}}\sum_{i\in{\cal F}}\beta_2(i,i')=\frac{(\bar{h}-\delta)hl}{\bar{d}-\bar{k}+\bar{h}-\delta+1}$ since the eqality in \eqref{e5-1} holds. Since $\beta_1(i,j)=\beta_1$ and $\beta_2(i,i')=\beta_2$ for any $j\in\mathcal{R}$, $i,i'\in\mathcal{F}$ with $i\not=i'$, then $\beta_1=\beta_2=
\frac{bl}{\bar{d}-\bar{k}+\bar{h}-\delta+1}$.	 \qed
\end{appendices}

\bibliographystyle{IEEEtran}
\bibliography{reference}

	\newpage
\end{document}